\documentclass[11pt]{article}

\usepackage{ifxetex,ifluatex}
\if\ifxetex T\else\ifluatex T\else F\fi\fi T%
  \usepackage{fontspec}
\else
  \usepackage[T1]{fontenc}
  \usepackage[utf8]{inputenc}
  \usepackage{lmodern}
\fi

\usepackage{hyperref}

\usepackage{ifxetex,ifluatex}
\newif\ifxetexorluatex
\ifxetex
  \xetexorluatextrue
\else
  \ifluatex
    \xetexorluatextrue
  \else
    \xetexorluatexfalse
  \fi
\fi

\ifxetexorluatex
  \usepackage{fontspec}
\else
  \usepackage[T1]{fontenc}
  \usepackage[utf8]{inputenc}
  \usepackage{lmodern}
\fi
\usepackage{amsthm}
\usepackage{paithan}
\usepackage{xcolor}
\usepackage{mathtools}
\usepackage{tikz}
\usepackage{enumitem}
\usepackage[ruled,lined]{algorithm2e}
\usetikzlibrary{arrows.meta}
\usetikzlibrary{shapes, fit, backgrounds, positioning}

\usepackage{braket}

\theoremstyle{plain}
\newtheorem{definition}{Definition}[section]
\newtheorem{theorem}{Theorem}[section]

\newtheorem{corollary}{Corollary}[section]
\newtheorem{lemma}{Lemma}[section]

\newtheorem{observation}{Observation}[section]
\newtheorem{openquestion}{Open Question}

\usepackage{geometry}
\usepackage{setspace}
\geometry{margin=1in}
\setstretch{1.00}

\newcommand{\outP}{\outcomeClass{P}}
\newcommand{\outN}{\outcomeClass{N}}

\newcommand{\rsGeog}{\ruleset{Generalized Geography}}
\newcommand{\rsUndirGeog}{\ruleset{Undirected Geography}}


\def\nimber{\mbox{\rm nimber}}
\def\mex{\mbox{\rm mex}}

\clearpage
\pagenumbering{arabic}
\setcounter{page}{0}

\title{Winning the War by (Strategically) Losing Battles: Settling the Complexity of Grundy-Values in {\sc Undirected Geography}}

\author{Kyle W. Burke
\\Plymouth State\\kwburke@plymouth.edu \and
Matthew T. Ferland
\\ USC\\mferland@usc.edu \and Shang-Hua Teng\thanks{
Supported by the Simons Investigator Award for fundamental \& curiosity-driven research and NSF grant CCF-1815254.}\\ USC\\
shanghua@usc.edu}

\begin{document}
\maketitle



\begin{abstract}
We settle two long-standing complexity-theoretical questions---open since 1981 and 1993---in combinatorial game theory (CGT).

We prove that the {\em Grundy value} (a.k.a. {\em nim-value}, or {\em nimber}) of \ruleset{Undirected Geography} is \cclass{PSPACE}-complete to compute.
This exhibits a stark contrast with a result from 1993 that \ruleset{Undirected Geography} is {\em polynomial-time solvable}.
By distilling to a simple reduction, our proof further establishes a {\em dichotomy theorem}, providing
a ``phase transition to intractability'' in Grundy-value computation, sharply characterized by a maximum degree of four:
The Grundy value of \ruleset{Undirected Geography} over any degree-three graph is polynomial-time computable, but over degree-four graphs---even when planar \& bipartite---is \cclass{PSPACE}-hard.
Additionally, we show, for the first time, how to construct \ruleset{Undirected Geography} instances with Grundy value $*n$ and size polynomial in $n$.

We strengthen a result from 1981 showing that sums of tractable partisan games are \cclass{PSPACE}-complete in two fundamental ways.  First, since \ruleset{Undirected Geography} is an impartial ruleset, we extend the hardness of sums to impartial games, a strict subset of partisan.  Second, the 1981 construction is not built from a natural ruleset, instead using a long sum of tailored short-depth game positions.  We use the sum of two \rsUndirGeog\ positions to create our hard instances.
Our result also has computational implications to Sprague-Grundy Theory (1930s) which shows that the Grundy value of the {\em disjunctive sum} of any two impartial games can be computed---in polynomial time---from their Grundy values.
In contrast, we prove that assuming $\cclass{PSPACE}\neq \cclass{P}$, there is no general polynomial-time method to summarize two polynomial-time solvable impartial games to efficiently solve their disjunctive sum.
\end{abstract}

\newpage

\section{Introduction}

Knowing how to win battles does not always translate into knowing how to win wars. More often than not, the victor must strategically lose some winnable battles in order to win the war.
This timeless principle is elegantly captured in the celebrated Sprague-Grundy Theory about impartial games from the 1930's
in combinatorial game theory (CGT)  \cite{Sprague:1936,Grundy:1939,WinningWays:2001}.
This theory introduces the concept of the {\em Grundy value}, and uses it---together with Bouton's constructive theory \cite{Bouton:1901} on \ruleset{Nim}\footnote{A \ruleset{Nim} game  starts with a collection of piles of items.
Two (or multiple players)
takes turns picking at least one items from one of the piles.
Under normal play, the player taking the last items wins the game.
\ruleset{Nim} was known in ancient China as Jian Shi Zi (picking pebbles).
}---to
characterize the winning strategy for the {\em disjunctive sum} of multiple ``battlefield'' games.
In this paper, we resolve a long-standing complexity-theoretical question---open since 1981---concerning the {\em computational complexity of strategic losing} for the goal of winning the overall sum game (the disjunctive sum).
As the main technical result of this paper, we settle another question in CGT---open since 1993---on the {\em complexity barrier} of Grundy values of a well-studied impartial graph-theoretical ruleset.
Our theoretical work has also inspired new practical board games.

\subsection{Games of Games: Disjunctive Sum}

A combinatorial game is defined by a succinct {\em ruleset}, specifying the domain of {\em game positions}, and
for each position, the set of {\em feasible options} each player can move the game to \cite{WinningWays:2001,LessonsInPlay:2007,SiegelCGT:2013,DBLP:books/daglib/0023750}.
A ruleset is {\em impartial} if both players have the same options at every position.
Games that aren’t impartial are known as {\em partisan}.
In the {\em normal-play} setting, two players
take turns advancing the game, and the player
  who is forced to start their turn on a {\em terminal position}---a position with
   no feasible options---loses the game.
We combine the challenges of deciding the \emph{winnability} and selecting a \emph{winning move} (whenever one is available) into one term: \emph{strategic determination}. (See also \cite{fraenkel2004complexity} for integrating  the two tasks).
If the player with a winning strategy can consistently solve the strategic determination problem, then they can play the game optimally and win.

For computational analysis, a {\em size} is associated with each
game position---e.g. 
of bits encoding \ruleset{Nim}---as the basis for measuring complexity  \cite{DBLP:journals/jcss/Schaefer78,fraenkel2004complexity,PapadimitriouBook:1994,BurkeFerlandTengQCGT}.
The size measure is assumed to be {\em natural} \footnote{In other words, the naturalness assumption
rules out rulesets with embedded hard-to-compute predicate like---as a slightly dramatized illustration---``If Riemann hypothesis is true, then the feasible options of a position include removing an item from the last \ruleset{Nim} pile.''}. with respect to the key components of the ruleset. In  particular, at each position with size $n$:  (1) the space of feasible options can be identified in time polynomial in $n$, and (2) all positions reachable from the position have size upper-bounded by a polynomial function in $n$.
An impartial ruleset is said to be {\em polynomial-time solvable}---or simply, {\em tractable}---if there is a polynomial-time algorithm for its strategic determination.
Multiple games 
can be combined into a new game:

\begin{definition}[Disjunctive Sum]
For any two combinatorial games  $G$ and $H$ (respectively, of rulesets $R_1$ and $R_2$), their {\em disjunctive sum},  $G+H$, is
a game in which 
the next player chooses to make a move in exactly one of $G$ and $H$, leaving the other alone.
A sum game $G+H$ is terminal if and only if
   both $G$ and $H$ are terminal according to their 
   own rulesets.
   Recursively then, in a sum of three or more, the current player still chooses to move in exactly one of the components.
\end{definition}

\subsection{Outstanding Computational Questions About Sprague-Grundy Theory}
\label{Sec:Open}

In 1981, Morris \cite{morris1981playing} proved that the sum of tractable {\bf\em partisan} games can be \cclass{PSPACE}-hard.
His theorem elegantly encapsulates the fundamental intricacy of strategic interaction among (simple) games (even for introductory textbooks to the field \cite{LessonsInPlay:2007}).
Morris generates lists of individually-tractable partisan games that combine to create an intractable sum.
It is used as a starting point for other stronger versions of the proof, such as limiting the games to be depth 2 \cite{yedwab1985playing} and limiting the number of branches for each game to 3 \cite{moews1993some}.
A further adaptation is made to show that \ruleset{Go Endgames} are \cclass{PSPACE}-hard \cite{wolfe2000go}.
Morris' theorem also provides a framework for understanding other families of combinatorial games. One important basic
 question has remained open since Morris' construction:

\begin{openquestion}[Sum of Impartial Games]\label{Dsum}
Can the disjunctive sum of two (or more) polynomial-time solvable {\bf\em impartial games} become intractable?
\end{openquestion}

This open question is fundamentally connected with Sprague-Grundy theory, a seminal part of CGT. Formulated in 1930s,  this theory provides a mathematical characterization for impartial games and their sum \cite{Sprague:1936,Grundy:1939}, laying the foundation for modern CGT \cite{ONAG:2001,WinningWays:2001}.
We now highlight two fundamental aspects of this beautiful theory
(and provide the background of our own work):

\vspace{0.1in}
\noindent
{\bf\em Concise Mathematical Summary of Impartial Games}: Playing combinatorial games optimally usually requires deep strategic reasoning about long alternation down the last level of their {\em game trees}.
Remarkably, Sprague and Grundy showed that the essence of every impartial game can be distilled into an ``equivalent''
single-pile \ruleset{Nim} game.
Its {\em Grundy value} ({\em a.k.a.} {\em nim-value} or {\em nimber}) is then the number of items in the equivalent single-pile \ruleset{Nim}.
The nim-value extends
{\em winnability}: the current player has a winning strategy
if and only if the 
Grundy value is not equal zero.
To win, it is sufficient to choose any feasible option with  value zero.
The Grundy value of a game provides a {\em succinct mathematical summary} of its game tree, whose size could be
{\em exponential} in the number of options:
the Grundy value is always {\em bounded above} by the number of options.

\vspace{0.1in}
\noindent {\bf\em Systematic Framework for Combining Games}:
Sprague and Grundy's pioneering theory establishes a systematic framework not only for combining games across different rulesets, but also for a complete characterization of strategic interaction among games in the overall sum, based only on their concise summaries 
\cite{WinningWays:2001,ONAG:2001}.
Combined with Bouton's theory on \ruleset{Nim} \cite{Bouton:1901} (see Appendix \ref{AppendixNim} for a description of \ruleset{Nim} and how to solve it in polynomial time),
letting $\oplus$ denote the {\bf\em bitwise xor} (the {\bf\em nim-sum}), the theory establishes:
\begin{eqnarray}
\nimber(G+H) = \nimber(G)\oplus \nimber(H) \quad\quad
\mbox{$\forall$ impartial $G$, $H$}
\end{eqnarray}

In general, losing a winnable game may be
 necessary to win in the overall sum (true to the meaning of ``losing a battle but winning the war'').
Sprague Grundy theory  contains a profound computational statement, made prior to the inception of \cclass{P} {\em vs} \cclass{NP}.
Because the nim-sum  is  linear-time computable,
if the Grundy values of the games are 
tractable, then the Grundy value---hence the winnability---of the overall sum game is also tractable.
This contrasts with some values in partisan games, as exploited by Morris \cite{morris1981playing}, where he constructs a CGT representation for each component, but proves one is unable to 
``add them up'' in polynomial time (unless P = PSPACE).

The following open question is intrinsically connected with Open Question \ref{Dsum}.

\begin{openquestion}
Can the information captured in the Grundy value of an impartial game be more  expensive to compute than its strategic determination?
\end{openquestion}
Our research has been influenced by the following two
tightly related yet subtly different
formulations  concerning the  algorithmic connection between 
Grundy values and strategic determination.

\begin{openquestion}[Tractable Structures]\label{Structures}
For any ruleset, does polynomial-time strategic determination  imply polynomial-time Grundy-value computation?
\end{openquestion}

\begin{openquestion}[Efficient Reduction]\label{Reduction}
Is there a general polynomial-time reduction from  Grundy-value computation to strategic determination?
\end{openquestion}

These last two questions are directly related - a YES answer to the second affirms the first (and thus a NO answer to the first also refutes the second.)
An efficient solution for these two would also provide a unified algorithm---based on Sprague-Grundy theory---for efficiently solving the disjunctive sum of tractable impartial games (and hence Open Question \ref{Dsum}).

On the tractable spectrum, some polynomial-time solvable rulesets---including \ruleset{Nim}, \ruleset{Subtraction Game} \cite{WinningWays:2001}, and many others---have {\em dual tractability}: their Grundy values are also polynomial-time computable.
Open Question \ref{Structures} focuses on whether
strategic determination and Grundy values
have common underlying mathematical structures for  tractability beyond the fact
that both 
can be obtained by evaluating the game tree \cite{WinningWays:2001}.
This is relevant to the part of Fraenkel's work \cite{fraenkel2004complexity}, where he conceptualized
 a class called {\em games with an efficient strategy}
 by combining the tractability of their own strategic determination with the tractability of their involvements in disjunctive sums and in mis\`{e}re-play
 (that is,  the current player wins at any terminal position).

On the intractable spectrum, for any \cclass{PSPACE}-complete game with polynomial game-tree height---e.g.,  \ruleset{Node-Kayles} \cite{DBLP:journals/jcss/Schaefer78}, \rsGeog\ \cite{DBLP:journals/jcss/Schaefer78,LichtensteinSipser:1980}, \ruleset{Col} \cite{ONAG:2001,DBLP:journals/tcs/BeaulieuBD13} and many games based on logic, topology, network sciences,
{\em etc}, \cite{DBLP:journals/jcss/Schaefer78,DBLP:journals/im/BurkeT08,PosetGame,Grier,burke2021transverse}---the answer to the second question is always a YES.
However, this complexity-theoretical polynomial-time reduction is not extendable from \cclass{PSPACE}-complete games to games with potentially lower complexity.
In addition to tractable games,  it remains open whether a polynomial-time reduction from Grundy values  to strategic determination exists for intractable impartial games, whose complexity might be ``strictly'' in-between \cclass{NP} and \cclass{PSPACE}.\footnote{Is there a polynomial-time nimber-to-winnability reduction for impartial games---arising in quantum combinatory game theory \cite{BurkeFerlandTengQCGT}---for which
strategic determination is complete for a particular level of the polynomial-time hierarchy?}
Open Question \ref{Reduction} hypothesizes whether a unified algorithmic approach exists for Grundy-value computation using winnability testing \& winning-move finding
as subroutines.

As Fraenkel pointed out, winnability alone may not capture the whole picture of game's tractability \cite{fraenkel2004complexity}.
Recent progress on 
\cclass{Poset} games \cite{PosetGame}\footnote{A \ruleset{Poset} game is a two-player impartial game over a {\em partially ordered set} (poset), in which
each move---the selection of an element in the poset---removes it together with all elements that are greater. The \ruleset{Poset} game generalizes
the classical ``chocolate-eating'' game \ruleset{Chomp} \cite{Zeilberger:2004} as well as 
\ruleset{Nim}.
A poset with the greatest element is a poset that contains an element greater than any other element in the poset.}
highlights that aspect as well.
It is well-known---by {\em strategy-stealing}---that the first player has a winning strategy in any \ruleset{Poset} game where the underlying poset has a {\em greatest element} (e.g. in \ruleset{Chomp}), providing
straightforward answer to winnability.
On the other hand, Bodwin and Grossman \cite{bodwin2019strategystealing}
prove that in this family, finding a winning move 
can be \cclass{PSPACE}-complete.
Hence even in this special case, the Grundy-value and strategic determination are polynomial-time reducible to each other. 
The implication on the nimber-winnability complexity separation also has a caveat.
\ruleset{Poset} games with the greatest element may have reachable game positions without the greatest element:
playing these special 
games 
requires later moves on
``normal'' \ruleset{Poset} games, outside the greatest-element family.
Indeed, Grier \cite{Grier} proves that deciding winnability of normal \ruleset{Poset} games is \cclass{PSPACE}-complete.

\subsection{Battles of Geography Without Directions: A Concrete Open Question}

As a game version of the {\em Seven Bridges of
K\"{o}nigsberg}, \ruleset{Geography} grew from a real-world ``Word Chain'' game---with cities as the category---into   an abstract game on graphs, as suggested by Richard Karp \cite{DBLP:journals/jcss/Schaefer78}.
This  game, known as \ruleset{Generalized Geography},  became the main subject for complexity study in the landmark paper, ``GO is polynomial-space hard'' (1978) by Lichtenstein and Sipser \cite{LichtensteinSipser:1980}.
In this impartial game, a position is defined by
a directed graph and a specified node (with the token). During the game, two players take turns moving the token to an outgoing neighbor and removing the node it just occupied (or otherwise marking it so the token cannot re-visit any node).
In the normal-play setting, the player who cannot make a move on their turn loses the game.
\rsGeog\ was originally shown to be \cclass{PSPACE}-complete by Schaefer \cite{DBLP:journals/jcss/Schaefer78}; this was improved by Lichtenstein and Sipser \cite{LichtensteinSipser:1980} to be \cclass{PSPACE}-complete, even when the graph is planar, bipartite, and has a maximum degree of three.
These graph properties are essential to their analysis of \ruleset{Go}, whose game board is a 2D grid.

In 1993,  Fraenkel, Scheinerman, and Ullman \cite{DBLP:journals/tcs/FraenkelSU93} added a new twist. They proved that \ruleset{Undirected (Vertex) Geography}---the special case 
over undirected graphs---is polynomial-time solvable.
In 2015, Renault and Schmidt \cite{MisereUG}
 revitalized interest in \rsUndirGeog\ by showing that it's \cclass{PSPACE}-complete under mis\`{e}re-play
 instead of normal play.
Both the edge variant \cite{DBLP:journals/tcs/FraenkelSU93} and short version \cite{ShortUG} of \ruleset{Undirected Geography} are also shown to \cclass{PSPACE}-complete.
Various extensions to \ruleset{Undirected Geography}
has been analyzed \cite{venkataraman2001survey,bosboom2020edge,matsumoto2020feedback}.

The Fraenkel-Scheinerman-Ullman solution
is guided by an elegant matching theory and supported by efficient matching algorithms (see Appendix \ref{Appendix:Matching}).
For any undirected graph $G=(V,E)$  and $s\in V$ satisfying
  $E\neq \emptyset$,
the current player has a winning strategy at \ruleset{Undirected Geography} position $(G,s)$ if and only if $s$ is in every maximum matching of $G$.
However, this matching-based characterization appears to be limited to winnability.
{\em Whether or not the Grundy value of \ruleset{Undirected Geography} is
polynomial-time computable
had been elusive}.

\begin{openquestion}[Tractable or Intractable]
Is the Grundy value for \ruleset{Undirected Geography} computable in polynomial-time?
\end{openquestion}

\ruleset{Undirected Geography}  has thus become
an exemplary tractable impartial game for which
no efficient algorithm had been discovered for
its Grundy-value computation.
Others such as \ruleset{Moore's Nim} \cite{moore1910generalization,WinningWays:2001} and \ruleset{Wythoff's game} \cite{wythoff_1907} are also wonderful examples \cite{fraenkel2004complexity}.

\subsection{Our Contributions}

In this paper, we settle these open questions.

\vspace{0.1in}
\noindent {\bf A Dichotomy Theorem on Grundy Values}:
As our main technical result, we prove
that computing the Grundy value of
\ruleset{Undirected Geography} is \cclass{PSPACE}-complete.
The key step is to {\em impose a direction} 
over  undirected edges, where game 
paths can travel across in either direction.
The complexity analysis has another intricacy -
 the polynomial-time winnability is an obstacle.

By distilling our original complex construction into a simpler reduction using Lichtenstein-Sipser,  we are able to establish a {\em dichotomy theorem} \cite{SchaeferDichotomy}---in Section \ref{sec:Dichotomy}---providing
a ``phase transition to intractability'' in Grundy-value computation, sharply characterized by a maximum degree of four:

\begin{theorem}[Geographical Dichotomy]\label{thm:dichotomy}
The Grundy value of \mbox{\ruleset{Undirected Geography}} over degree-three graphs is polynomial-time computable
but over degree-four graphs---even when planar \& bipartite---is \cclass{PSPACE}-hard.
\end{theorem}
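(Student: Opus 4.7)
The two halves---tractability for max-degree-3 graphs and \cclass{PSPACE}-hardness for max-degree-4 planar bipartite graphs---will be handled separately. The tractability claim is comparatively short. For any \rsUndirGeog\ position $(G,s)$, the Grundy value is bounded above by $\deg(s)$, since the nimber is a $\mex$ over at most $\deg(s)$ options; under the max-degree-3 restriction it therefore lies in $\{0,1,2,3\}$. The Fraenkel-Scheinerman-Ullman (FSU) matching characterization already separates nimber $=0$ from nimber $\geq 1$ in polynomial time. To distinguish the three remaining values, I would refine FSU: for each $k \in \{1,2,3\}$, express ``$\nimber = k$'' as a matching-theoretic property of $(G,s)$ involving the at-most-three neighbours of $s$ in $G - s$, checkable by a constant number of auxiliary FSU-style computations on modified subgraphs. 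Because the branching at $s$ is bounded by three, this local analysis remains polynomial overall.

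For the hardness half, my plan is a reduction from Lichtenstein-Sipser's \cclass{PSPACE}-complete \rsGeog\ on planar bipartite max-degree-3 directed graphs to Grundy-value computation on planar bipartite max-degree-4 \rsUndirGeog. The central idea, foreshadowed in the paper's introduction, is to \emph{impose a direction} on each undirected edge via a small local gadget. A directed edge $(u,v)$ is replaced by an undirected structure that (i) allows the token to traverse from $u$ to $v$ with a predictable nimber contribution, faithfully simulating the directed game, and (ii) deters the reverse traversal by routing a wrong-way traveller into an option whose nimber is already realised by some intended move, leaving the overall $\mex$ unchanged. The gadget should add at most one unit of degree to each endpoint and preserve planarity and bipartiteness---for instance by subdividing each edge with a fresh vertex decorated with a few tame pendants---so the global maximum degree stays at four.

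The main obstacle is ensuring that the reduction encodes a \cclass{PSPACE}-hard question into the nimber rather than into winnability, which is already polynomial on \rsUndirGeog\ by FSU. The constructed instance must always have nonzero nimber (so that FSU yields no shortcut), with the precise value encoding the winner of the original directed game---say, nimber $*1$ if the first player wins the directed game and $*2$ otherwise. I would engineer this by attaching a fixed, constant-nimber ``shift'' gadget at the start vertex so that the final nimber is a predictable offset from whatever the directed game produces. Correctness then follows by an inductive game-tree analysis: at each directed vertex, the corresponding undirected sub-position has the nimber predicted by the directed game, and wrong-direction excursions are neutralised by the orientation gadgets. Once this local-to-global lifting is in place, deciding a single bit of the nimber of the constructed graph solves directed winnability, yielding the hardness and completing the dichotomy.
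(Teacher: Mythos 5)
Your hardness half matches the paper's high-level route: reduce from Lichtenstein--Sipser's planar-bipartite-cubic \rsGeog, replace each directed edge by an undirected gadget whose wrong-way traversal lands on nimbers that do not disturb the $\mex$, attach a pendant to each original vertex to guarantee a move to $0$, and use a prelude gadget to squeeze the output into a $\ast$-vs-$\ast 2$ question. The specific gadget you sketch (a single subdividing vertex ``decorated with a few tame pendants'') would not by itself create the needed asymmetry---such a subdivision is symmetric under $x\leftrightarrow y$---and the paper uses an eight-vertex gadget with a short ``detour'' ($b$--$f$--$d$) precisely to break that symmetry---but you explicitly flag it as an illustrative placeholder, and the overall plan is the correct one.

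The tractability half has a genuine gap. You propose that ``$\nimber = k$'' for $k \in \{1,2,3\}$ can be expressed as ``a matching-theoretic property of $(G,s)$ involving the at-most-three neighbours of $s$,'' decidable by ``a constant number of auxiliary FSU-style computations.'' No such local/constant-size characterization is known, and the paper does not use one. The value $\ast$ vs.\ $\ast 2$ at a degree-two position is not determined by local data; it can depend on the parity of an arbitrarily long chain of deductions through the graph (e.g.\ in a caterpillar, the nimber at the head alternates $\ast, \ast 2, \ast, \ast 2, \dots$ along the spine and every spine vertex is a winnable position, so FSU winnability alone never disambiguates). The correct argument (Theorem \ref{Thm:BAB}) is a \emph{branch-and-bound recursion}: since moving consumes an edge, every post-move vertex in a max-degree-3 graph has degree at most two; at degree two you run FSU on both children, and in the only nontrivial case (exactly one child is \outP) you must \emph{recurse} into the other child to discover whether it is $\ast$ or $\ast 2$; this recursion has depth $O(n)$ and therefore issues $O(n)$ matching queries, not $O(1)$. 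The phrase ``local analysis remains polynomial overall'' does not patch the issue, because the recursion is global, not local, and its correctness rests on the degree-drop observation rather than on any refined matching predicate. You should replace the ``constant-call local FSU refinement'' with this explicit degree-dropping recursion to make the tractability half sound.
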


In our polynomial-time Grundy-value algorithm for degree-three, the Fraenkel-Scheinerman-Ullman algorithm
\cite{DBLP:journals/tcs/FraenkelSU93} is applied to navigate a branch-and-bound process for evaluating game~trees.

\vspace{0.1in}
\noindent {\bf Strategic Losing is Hard}:
Our proof
also show that distinguishing  $\ast$ or $\ast 2$
in \ruleset{Undirected Geography}
is \cclass{PSPACE}-hard,
a detail crucial in our next theorem solving Open Question \ref{Dsum}:

\begin{theorem}[Strategic Synergy]\label{theo:SumIsHard}
The disjunctive sum of two tractable impartial games---in particular the sum of two \ruleset{Undirected Geography} games---can be
\cclass{PSPACE}-hard to solve.
\end{theorem}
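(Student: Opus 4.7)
The plan is to leverage Sprague-Grundy theory in the reverse direction: rather than using it to combine tractable games into a tractable sum, we exploit the fact that the \emph{disjunctive sum} $G+H$ of two impartial games is a \outP-position (equivalently, $\nimber(G+H)=0$) if and only if $\nimber(G)=\nimber(H)$. So to show the sum of two \rsUndirGeog\ positions is \cclass{PSPACE}-hard, it suffices to (a) produce a tractable \rsUndirGeog\ instance $G$ whose Grundy value is \cclass{PSPACE}-hard to pin down between two specific candidates, and (b) fix a second, explicit \rsUndirGeog\ position $H$ whose Grundy value equals one of those two candidates. Winnability of $G+H$ then decides which of the two candidates $\nimber(G)$ is, reducing a \cclass{PSPACE}-hard decision problem to strategic determination of the sum.

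First, I would invoke the refined hardness statement emphasized right before the theorem: the main dichotomy construction can be arranged so that the output \rsUndirGeog\ instance $G$ has Grundy value in $\{\ast 1,\ast 2\}$, and distinguishing these two cases is \cclass{PSPACE}-hard. Next, I would write down a trivial \rsUndirGeog\ gadget with a known Grundy value---for concreteness, a single edge $\{u,v\}$ with the token at $u$. Its only move is to $v$, deleting $u$, which yields a terminal position with Grundy value $0$, so by the $\mex$ rule this gadget $H$ has $\nimber(H)=\ast 1$. Both $G$ and $H$ are \rsUndirGeog\ positions, so by the Fraenkel--Scheinerman--Ullman matching-based characterization each is individually polynomial-time solvable, establishing the tractability side of the theorem.

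For the hardness side, I would compose the reduction. Given an instance of the hard $\ast 1$-vs-$\ast 2$ problem producing $G$, form the \rsUndirGeog\ position $G+H$ (disjoint union of the two graphs, with the chooser selecting a component to move in). By the Sprague--Grundy identity $\nimber(G+H)=\nimber(G)\oplus\nimber(H)$, we have $\nimber(G+H)=0$ precisely when $\nimber(G)=\ast 1$ and $\nimber(G+H)=\ast 3$ when $\nimber(G)=\ast 2$. Hence solving the strategic determination problem on $G+H$ decides which branch holds, completing a polynomial-time reduction from a \cclass{PSPACE}-hard problem to solving the sum. The containment in \cclass{PSPACE} is standard since game-tree evaluation on positions of polynomial height lies in \cclass{PSPACE}.

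The conceptually tricky ingredient---and the one I would be most careful to verify---is step (a): that the main-theorem construction not only makes computing $\nimber(G)$ hard in general, but can be tuned to restrict the output Grundy value to the two-element set $\{\ast 1,\ast 2\}$ with the hardness surviving this restriction. If the bare theorem only yielded hardness over a larger range of Grundy values, I would need to post-process $G$---e.g.\ by summing with a further small gadget of known Grundy value and recompressing the set of possible outputs, or by threading the reduction's branching through a canonicalizing sub-gadget---to isolate the two-candidate distinction. Once that refinement is in hand, the rest of the argument is the short Sprague--Grundy composition above, and the \emph{impartial} strengthening of Morris's 1981 partisan-sum theorem follows immediately, since \rsUndirGeog\ is impartial.
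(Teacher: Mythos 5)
Your proposal is correct and follows essentially the same approach as the paper: the paper's proof (folded into Theorem~\ref{Theo:GameOfGames}) also takes the hard $\ast$-vs-$\ast 2$ instance established by the corollary to Theorem~\ref{thm:nimberHardness}, sums it with exactly the same two-node single-edge graph of nimber $\ast$, and invokes the Sprague--Grundy XOR identity so that the sum is a zero position precisely when the hard component equals $\ast$. The refinement you flag as the point needing care (pinning the hard component's value to $\{\ast,\ast 2\}$) is indeed what the paper provides via the prelude gadget.
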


Our result strengthens Morris' 1981 result \cite{morris1981playing} in two fundamental aspects.
First, we extend the \cclass{PSPACE}-hardness
from the sum of partisan games to the sum of impartial games, shedding new light on the computational facet of the Sprague-Grundy characterization (more below).
Second, Morris' construction is built from  a long sum of tailored short-depth game positions.
In contrast, we use 
two games of a natural, well-studied \rsUndirGeog\ impartial ruleset to create our hard sum.
Our construction is in fact {\rm robust}:
In our \cclass{PSPACE}-hard sum, one of the  games---provided with non-zero Grundy value---can even be arbitrarily chosen, say by an adversary.

Sprague-Grundy provides a barrier against closely mimicking Morris' construction in the realm of impartial games.  Since nim sums are efficiently computable, one cannot present a long list of shallow impartial games where the winnability of the sum is intractable.  We overcame this obstacle by instead summing two positions where the individual Grundy values (of at least one) are difficult to discern.  We are curious whether there is a well-known, tractable, strictly-partisan ruleset where determining the winnability of the sum of two positions is computationally hard.

\vspace{0.1in}
\noindent {\bf Mathematical-Computational Divergence in Sprague-Grundy Theory}:
Our complexity result on this concrete graph game has wider computational implications
in connection with Sprague-Grundy Theory.
The sharp contrast between the complexity of strategic determination and  Grundy values in \ruleset{Undirected Geography} illustrates fundamental
mathematical-computational divergence in Sprague-Grundy theory.
When computational cost is no object,
the Grundy values are {\em effective and concise} mathematical summaries
of game trees for strategic reasoning in disjunctive sums. 
However, as we have shown, this elegant mathematical summary could be \cclass{PSPACE}-hard to obtain, even for polynomial-time solvable games.
That is, assuming $\cclass{PSPACE}\neq \cclass{P}$,
  the Grundy values of combinatorial games capture provably richer and potentially hard-to-compute structures
   than just their solvability.
In fact, Theorem \ref{theo:SumIsHard} implies a broader impossibility statement:

\begin{theorem}[Summary]
Unless \cclass{P} = \cclass{PSPACE}, there is no general polynomial-time method to summarize two polynomial-time solvable impartial games to efficiently solve their disjunctive sum.
\end{theorem}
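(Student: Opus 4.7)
The plan is to treat the Summary theorem as a direct corollary of Theorem \ref{theo:SumIsHard}, modulo first pinning down the intended formalization of the phrase ``general polynomial-time method to summarize.'' I would formalize a summary scheme as a pair of polynomial-time algorithms $(\sigma,f)$ together with a polynomial $p$: for every polynomial-time solvable impartial game position $G$, the algorithm $\sigma$ on input $G$ outputs a string $\sigma(G)$ of length at most $p(|G|)$, and for every two such positions $G$ and $H$, $f(\sigma(G),\sigma(H))$ correctly reports the outcome of the disjunctive sum $G+H$ (winnability for the current player suffices, since Sprague--Grundy makes this the relevant quantity for reassembly). This is the obvious abstraction of what Sprague--Grundy itself provides: the Grundy value is the summary, and nim-addition is the combiner $f$.

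Having fixed the formalization, the argument is short. First I would invoke the Fraenkel--Scheinerman--Ullman result \cite{DBLP:journals/tcs/FraenkelSU93} to note that \rsUndirGeog\ is a polynomial-time solvable impartial ruleset, so every \rsUndirGeog\ position qualifies as input to $\sigma$. Next, assume for contradiction that such a summary scheme $(\sigma,f)$ exists. Given any two \rsUndirGeog\ positions $G$ and $H$, we can compute $\sigma(G)$ and $\sigma(H)$ in time polynomial in $|G|+|H|$, then evaluate $f(\sigma(G),\sigma(H))$ in polynomial time to decide the outcome of $G+H$. This would place the strategic determination of $G+H$ in \cclass{P}. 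But Theorem \ref{theo:SumIsHard} exhibits a family of pairs $(G,H)$ of \rsUndirGeog\ positions whose disjunctive sum is \cclass{PSPACE}-hard, so a polynomial-time decision procedure for such sums would force \cclass{P}=\cclass{PSPACE}.

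I would close by pointing out that the reduction is tight in the sense of matching the two-game hardness construction already provided by Theorem \ref{theo:SumIsHard}: the impossibility speaks even to summaries designed specifically with binary combinations in mind, not just to long-sum combiners. No separate hardness argument is required here; all of the combinatorial work is absorbed into Theorem \ref{theo:SumIsHard}.

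The main obstacle is not technical but definitional: the informal phrase ``general polynomial-time method to summarize'' could be read in stronger or weaker ways (e.g., must $\sigma$ be oblivious to the ruleset? may it access the polynomial-time solver for the ruleset as an oracle?). The argument above is robust to these variations, because $\sigma$ is allowed arbitrary polynomial-time computation on each game individually, and I would explicitly note in the proof that even granting $\sigma$ oracle access to any per-ruleset polynomial-time winnability routine would not rescue the scheme, since the contradiction is derived at the level of the combiner $f$ operating on polynomial-size summaries.
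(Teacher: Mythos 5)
Your proposal is correct and takes essentially the same approach as the paper: the paper proves Theorem~\ref{Thm:Summary} (together with Theorem~\ref{theo:WinningDecision}) in a single sentence, observing that both follow directly from the \cclass{PSPACE}-hardness of the disjunctive sum of two \rsUndirGeog\ games established in Theorem~\ref{Theo:GameOfGames}. Your added step of formalizing ``summary'' as a pair $(\sigma, f)$ of polynomial-time algorithms with polynomially bounded summary length is a useful clarification the paper leaves implicit, but the core argument---that such a scheme would place the winnability of the sum of two \rsUndirGeog\ positions in \cclass{P}, contradicting its \cclass{PSPACE}-hardness---is exactly the paper's.
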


\vspace{0.1in}
\noindent {\bf Towards Practical Board Games}:
In Section \ref{sec:GoG}, we apply Sprague-Grundy theory to resolve the complexity of several rulesets  that generalize \ruleset{Undirected Geography}. We first show very basic extensions, including \ruleset{Multi-Token Undirected Geography} and \ruleset{Undirected Geography with Passes}. Then we demonstrate the versatility of the result, by showing that \ruleset{Uno Swap}, a minor modification of the tractable \ruleset{Uncooperative Uno}\cite{demaine2014uno}, is PSPACE-complete.
These results have potential applications to the practical design of board games based on \ruleset{Undirected Geography}, where the real world
 appreciates games with simple rules and positions, combined with
 deep strategic reasoning for winning moves \cite{WinningWays:2001,Eppstein,burke2021transverse}.
 Thus, the removal of edge directions from \ruleset{Generalized Geography}, while also retaining its \cclass{PSPACE}-hard complexity opens up several possibilities.
In Section \ref{Sec:FinalRemarks}, we discuss two further practical extensions using the standard \ruleset{GO} or \ruleset{Hex} game boards.
For example, the web-version of one of our new games, \ruleset{Binary Undirected Geography}, can be played at \url{https://turing.plymouth.edu/~kgb1013/DB/combGames/twoBUG.html}.

\vspace{0.1in}
\noindent {\bf Graphs with Polynomial-High \ruleset{Undirected Geography} Nimbers}:
In Section \ref{sec:Math}, we give a constructive proof that, for any $n$, there exists a  polynomial-sized \ruleset{Undirected Geography} instance with Grundy value $n$. 
Logarithmic Grundy values are realizable by trees with recursive structures, and  linear Grundy values can be achieved by directed graphs in \ruleset{Generalized Geography}.
To the best of our knowledge, this is the first
polynomial Grundy value construction
for \ruleset{Undirected Geography}.
Using this construction, we prove in Theorem \ref{Thm:Mys} that any classifier for {\em positive} Grundy values in \ruleset{Undirected Geography} is \cclass{PSPACE}-hard.

\section{The Value of Games Beyond Winning}

\label{sec:Background}

In combinatorial game theory, a ruleset defines not just a single game, but many---possibly infinitely many---game instances (or \emph{positions}).
Playing games requires strategic reasoning of one's own options
as well as opponent's subsequent options, 
to answer the key problem on {\bf\em  winnability}:

\begin{definition}[Strategic Determination]
Given a game $G$ under a ruleset $R$,
  determine whether or not the current player in $G$ has a winning option, and if YES, return a winning option of $G$.
\end{definition}

This fundamental problem---commonly involving {\em deep} alternation---has been the subject of intense mathematical and computational studies \cite{WinningWays:2001,AlgGameTheory_GONC3,DBLP:books/daglib/0023750,PapadimitriouBook:1994,DBLP:journals/jcss/Schaefer78,EvenTarjanHex,DBLP:journals/tcs/DucheneR14,Reisch:1981,DBLP:journals/tcs/Fukuyama03,DBLP:journals/im/BurkeT08,LichtensteinSipser:1980,DBLP:journals/tcs/BeaulieuBD13}.
A ruleset $R$ defines a natural {\em game tree}, capturing this alternation for each of its
positions, $G$, by recursively branching with feasible options.
Thus, the game tree of $G$ contains all
  {\em reachable positions} of $G$ under ruleset $R$, with the leaves as the terminal positions.

In the 1930s, Sprague and Grundy independently discovered a deep, yet basic,  mathematical structure underlying impartial games, referred to as ``Sprague-Grundy theory''.
This foundational discovery 
characterizes each impartial game $G$ 
by a  natural number,  known as the {\em Grundy value}  (a.k.a {\em nim-value} or {\em nimber}) of the game.
Recursively, the Grundy value of $G$ is:
\begin{itemize}
\item  {\bf Terminal Position}: If $G$ is terminal, then $\nimber(G) = 0$.
\item {\bf Non-Terminal Position}: If $\{G_1,...,G_{\Delta}\}$ is the
  set feasible options of $G$, then:
\begin{eqnarray}
\nimber(G) = \mex\left(\left\{ \nimber(G_1),...,\nimber(G_{\Delta})
\right\}\right)
\end{eqnarray}
where $\mex$ is the {\em minimum excluded value},
  returning the smallest value of 
  $$\mathbb{Z}^+\cup\{0\}\setminus \left\{
  \nimber(G_1),...,\nimber(G_{\Delta})
  \right\}.$$
\end{itemize}

We will use the notation standard in combinatorial game theory for Grundy values: $*k$ for $k$, except that $*$ is shorthand for $*1$ and 0 is shorthand for $\ast 0$.\footnote{The reason for the $\ast 0 = 0$ convention is that it is equivalent to the integer zero in CGT.}

By grouping all 
positions with non-zero Grundy values into a class called ``Fuzzy'',
    impartial game positions can be partitioned into two \emph{outcome classes},
characterizing  winnability.
    (1) \outcomeClass{N} (``Fuzzy'') - with positive Grundy values; the current (next) player always has a winning strategy.
(2) \outcomeClass{P} (``Zero'') -  with zero Grundy value;
 the previous player always has a winning strategy.

\section{A Dichotomy Theorem}

\label{sec:Dichotomy}

In this section, we prove Theorem \ref{thm:dichotomy}, setting up the Dichotomy Theorem of Grundy-value computation in \ruleset{Undirected Geography} based on
the local degree of intersection. Because  ``Zero'' ($\outP$) is polynomial-time distinguishable from ``Fuzzy'' ($\outN$) in \ruleset{Undirected Geography},
to establish the hardness, we need to show that the ``Fuzzy'' region is \cclass{PSPACE}-hard to classify.
By a (rather involved) reduction from {\em True Quantified Boolean Formula}, we proved that
$*$ and $*2$ are \cclass{PSPACE}-hard to distinguish.
While aiming for planar graphs, we distilled this construction, finding
 a simple gadget (Figure \ref{fig:directedEdgeGadget})
to obtain a simpler and direct reduction from \ruleset{Generalized Geography}.
For readers who may want to see
   more complex constructions for
``direction control'' in \ruleset{Undirected Geography}, we refer them to Section \ref{sec:Math} on nimber constructability.
There, replacing directed edges is more intricate because high Grundy values cannot be truncated as in the complexity analysis below, and
our attempts to simplify the proof haven't yet produced
the same outcome.

\subsection{PSPACE-Complete Grundy Values of \ruleset{Undirected Geography}}

\begin{theorem}[Complexity Separation of Winnability and Grundy Values]
\label{thm:nimberHardness}
The Grundy value of polynomial-time solvable \ruleset{Undirected Geography}
   is \cclass{PSPACE}-hard to compute.
\end{theorem}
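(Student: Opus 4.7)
The plan is to reduce from \ruleset{Generalized Geography}, which Lichtenstein and Sipser showed to be \cclass{PSPACE}-complete even on planar bipartite directed graphs of maximum degree three. Given such an instance $(D, s)$, I will construct, in polynomial time, an undirected graph $G$ with a start vertex $s'$ whose Grundy value is either $\ast$ or $\ast 2$, with the two possibilities in one-to-one correspondence with the two outcome classes of $(D, s)$. Because both $\ast$ and $\ast 2$ lie in $\outN$, the polynomial-time $\outN$/$\outP$ classifier of Fraenkel, Scheinerman, and Ullman cannot distinguish between them, so the desired \cclass{PSPACE}-hardness of Grundy-value computation is inherited from the \cclass{PSPACE}-hardness of \rsGeog.

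The central tool is a small directed-edge gadget (Figure~\ref{fig:directedEdgeGadget}) that replaces each directed edge $u \to v$ of $D$ with a short undirected subgraph. The gadget is designed so that, under the vertex-removal rule of \rsUndirGeog, forward traversal is equivalent, in game-tree behavior, to crossing the original directed edge, while backward traversal leaves a position that is immediately losing for the player who attempted it. This directionality can be enforced by the Fraenkel-Scheinerman-Ullman matching characterization applied locally: the gadget's internal structure pins the maximum matching of the residual graph so that a backward-entering token sits on a vertex contained in every maximum matching, handing the win to the opponent. Consequently, the strategically sensible subtree of play in $(G, s')$ is in bijection with the game tree of $(D, s)$.

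With this bijection in hand, I would track Grundy values inductively from the leaves up. The gadget is calibrated so that the only nimbers ever arising among reachable positions of $(G, s')$ are $0$, $\ast$, and $\ast 2$; the $\mex$ at the root then evaluates to one of these two nonzero nimbers according to whether the current player wins in $(D, s)$. Since both roots of the reduction lie in $\outN$, the Fraenkel-Scheinerman-Ullman algorithm correctly reports ``current player wins'' in every case, confirming that the constructed instances remain in the tractable-winnability regime while their exact Grundy values encode a \cclass{PSPACE}-hard distinction between $\ast$ and $\ast 2$.

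The main obstacle is the design and verification of the gadget. I must ensure (i) that backward traversal is genuinely losing regardless of the stage of play at which the token reaches the gadget, (ii) that gadgets sharing an endpoint vertex of $D$ do not interact to create unintended shortcuts or allow the token to ``leak'' sideways out of the intended path, and (iii) that the Grundy values of all reachable positions truly remain confined to $\{0, \ast, \ast 2\}$, so that the $\mex$ computation at the root faithfully encodes the winnability of $(D, s)$. Once the gadget's local game tree is pinned down by a constant-size case analysis, correctness of the whole reduction follows by structural induction on the number of unremoved vertices in $D$, yielding the \cclass{PSPACE}-hardness claimed.
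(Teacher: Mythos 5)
Your high-level plan (reduce from \rsGeog\ on planar bipartite degree-three graphs, replace each directed edge with a small undirected gadget, argue that forward traversal mimics the original edge while backward traversal is losing, then induct on the game tree) matches the paper's proof of Theorem \ref{thm:nimberHardness}. However, two of the specific mechanisms you propose do not work as stated, and the gaps matter.

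First, your confinement claim --- that all reachable positions of $(G,s')$ have Grundy value in $\{0,\ast,\ast 2\}$ --- is false for the paper's gadget and is not needed for the theorem. With the gadget in Figure \ref{fig:directedEdgeGadget}, the backward-entry position $(G'_y, d)$ already has value $\ast 2$ \emph{or} $\ast 3$, and positions corresponding to reachable original vertices can take any nimber $\geq \ast 2$ when they are in $\outN$; the induction in the paper only establishes ``$=\ast$ if $\outP$, otherwise $\in \outN \setminus \{\ast\}$'', which is already enough for \cclass{PSPACE}-hardness. Your proposal conflates the theorem with the subsequent corollary: the sharp $\ast$-versus-$\ast 2$ dichotomy is obtained only after appending a separate Prelude gadget (Figure \ref{fig:prelude}) in front of $s$, which collapses all of $\{\ast 2, \ast 3, \ldots\}$ to $\ast 2$ at the new start vertex. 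Without that extra gadget, claiming the confinement to $\{0,\ast,\ast 2\}$ overreaches, and if your induction relies on it, the induction breaks the first time an option has value $\ast 3$.

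Second, you propose to verify the gadget's directionality via the Fraenkel-Scheinerman-Ullman matching characterization ``applied locally,'' reasoning that a backward-entering token lands on a vertex in every maximum matching and hence loses. Matching tells you only the $\outcomeClass{P}$/$\outcomeClass{N}$ dichotomy, but the correctness of the reduction needs finer nimber information that matching cannot provide. Specifically, Lemma \ref{lem:wrongWay} must show that the backward-entry position has value in $\{\ast 2, \ast 3\}$, i.e., nonzero \emph{and not equal to} $\ast$ --- the ``not $\ast$'' part is essential, since otherwise a backward option could change the mex at the root and falsify the $\outP \mapsto \ast$ direction of the induction. Likewise Lemma \ref{lem:rightWay} is a statement that moving to $a$ is $\ast$ iff moving to $y$ is $\ast$, again a distinction among nonzero nimbers. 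The paper establishes both lemmas by direct $\mex$ computation on the constant-size gadget (or, in Appendix \ref{sec:alternateProofs}, by a Foe/Hero winnability argument on the sum with an auxiliary $\ast$, which is itself a nimber-sensitive trick); a purely matching-based local argument would only certify ``losing move,'' not ``has value $\neq \ast$,'' and so cannot close the induction.
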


Our reduction, $r$, takes a \ruleset{Generalized Geography} position $(G, s)$ and yields a \ruleset{Undirected Geography} position $r(G,s) = (G', s)$ where:

$(G', s)
\begin{cases}
    = \ast, &\text{ if } (G, s) = 0\ (\in \outP)\\
    \in \outN \setminus \{\ast\}, &\text{ if } (G,s) \in \outN
\end{cases}$

For readers unaccustomed to working with nimbers, we provide another characterization: the (disjunctive) sum of $(G', s)$ with a simple $\ast$ yields:

$(G', s) + \ast \in
\begin{cases}
    \outP, &\text{ if } (G,s) \in \outP\\
    \outN, &\text{ if } (G,s) \in \outN
\end{cases}$

Thus, the reader can consider the winnability of $(G, s)$ equivalent to $(G', s) + \ast$.  This is not difficult to conceptualize: in addition to the resulting \ruleset{Undirected Geography} position, we also include another game with exactly one move shared by both players.  Thus, exactly once in the sum, one of the two players can make the only move in that game instead of moving in \ruleset{Undirected Geography}.  These two characterizations are equivalent, as $\ast + \ast = 0$ and, whenever $j \geq 2$, $\ast + \ast j = \ast k$ where $k \geq 2$.  In Appendix \ref{sec:alternateProofs}, we provide proofs using this second winnability characterization.
The reduction itself contains two parts:
\begin{enumerate}
    \item 
    Modify $G$ so that each vertex $v \in V$ is given an adjacent singleton vertex, $v_0$, adjacent to no other vertices.  In other words, $\forall v \in V$ we will add vertex $v_0$ and (undirected) edge $(v, v_0)$.
    \item 
 For each of the directed edges in $G$, $(x,y)$, we replace it with the gadget shown in Figure \ref{fig:directedEdgeGadget}.
\end{enumerate}

 \begin{figure}[h!]\begin{center}
    \begin{tikzpicture}[node distance=1cm]
        \node[draw, circle] (x) {$x$};
        \node[draw, circle] (a) [right=of x] {$a$};
        \node[draw, circle] (a0) [below of=a] {$a_0$};
        \node[draw, circle] (b) [right=of a] {$b$};
        \node[draw, circle] (c) [below right=of b] {$c$};
        \node[draw, circle] (c0) [below of=c] {$c_0$};
        \node[draw, circle] (f) [above right=of b] {$f$};
        \node[draw, circle] (d) [above right=of c] {$d$};
        \node[draw, circle] (d0) [below of=d] {$d_0$};
        \node[draw, circle] (y) [right=of d] {$y$};

        \path[-]
            (x) edge (a)
            (a) edge (a0)
            (a) edge (b)
            (b) edge (c)
            (c) edge (c0)
            (b) edge (f)
            (c) edge (d)
            (d) edge (d0)
            (f) edge (d)
            (d) edge (y)
        ;

    \end{tikzpicture}

    \caption{The gadget for each directed edge $(x,y)$.}
    \label{fig:directedEdgeGadget}
\end{center}\end{figure}
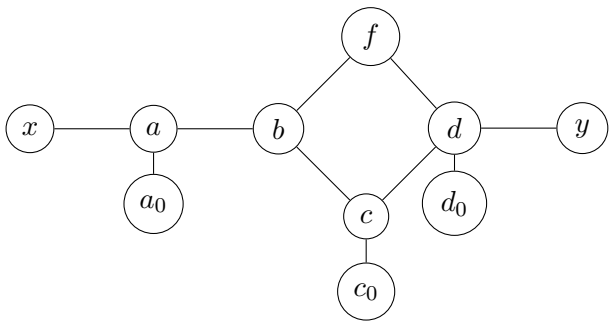

The correctness of the reduction hinges on this gadget acting like a directed edge from $x$ to $y$.  We assert that with two lemmas.  In this section, we provide proofs using nimber-based calculations.  In Appendix \ref{sec:alternateProofs}, we provide alternate proofs based on the winnability when added to $\ast$.  The reader is welcome to use whichever reasoning they prefer to follow.

We will also use some notation to represent a graph, $G$, after moves have been made.  For a subset $S\subset V$, we will use $G_{S}$ be the graph obtained from $G$ by removing $S$ and all edges incident to $S$.  Thus, from a position $(G, s)$, if a player chooses to traverse edge $(s, t)$, the resulting position is $(G_{\{s\}}, t)$.  Since we are often considering sets $S$ with just one vertex, we will shortcut examples like that by dropping the braces, as $(G_s, t)$.

\begin{lemma}[Wrong Way]
    Moving from $y$ to any vertex $d$ results in a value of $\ast 2$ or $\ast 3$.  In other words, $(G'_y, d) = \ast 2$ or $\ast 3$.
    \label{lem:wrongWay}
\end{lemma}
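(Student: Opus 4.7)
The plan is to unfold the game tree from $(G'_y, d)$ one $\mex$ at a time, and to fold every dependence on the rest of $G'$ into a single nimber $\alpha := \nimber(G'_{\{y,a,b,c,d,f\}}, x)$---the Grundy value faced at $x$ after the interior of the gadget has been destroyed. From $d$ in $G'_y$ the three options are the pendant $d_0$ (terminal, of nimber $0$) and the two cycle vertices $c$ and $f$, so
\[
\nimber(G'_y, d) \;=\; \mex\{0,\; \nimber(G'_{\{y,d\}}, c),\; \nimber(G'_{\{y,d\}}, f)\},
\]
and the task reduces to bounding the two unknown nimbers.

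First I would handle the $f$-branch. In $G'_{\{y,d\}}$ the vertex $f$ has $b$ as its only remaining neighbor, and from $b$ (with $\{y,d,f\}$ removed) the options are $a$ and $c$. The $c$-move is forced into the pendant $c_0$ and contributes nimber $\ast$. The $a$-move contributes $\mex\{0,\alpha\}$, since $a$ can move to its pendant $a_0$ or out to $x$. A routine check shows $\mex\{0,\alpha\} \in \{1,2\}$ for every $\alpha \ge 0$, and $\mex$-ing any value in $\{1,2\}$ together with $1$ gives $0$. Hence $\nimber(G'_{\{y,d,f\}}, b) = 0$ and so $\nimber(G'_{\{y,d\}}, f) = \ast$, independent of $\alpha$.

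A parallel analysis on the $c$-branch produces the complementary behavior: from $c$ the only non-leaf option is $b$, and from $b$ (with $\{y,d,c\}$ removed) the two options are $a$---contributing the same $\mex\{0,\alpha\} \in \{1,2\}$ as above---and $f$, which now has no neighbors and so contributes $0$. A short $\mex$ computation then gives $\nimber(G'_{\{y,d,c\}}, b) \in \{1,2\}$ and therefore $\nimber(G'_{\{y,d\}}, c) \in \{1,2\}$. Substituting back, $\nimber(G'_y, d) = \mex\{0,\, \nimber(G'_{\{y,d\}}, c),\, \ast\}$ evaluates to $\ast 2$ when the $c$-side nimber is $\ast$ and to $\ast 3$ when it is $\ast 2$, which is the claim.

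The one delicate point---and the main obstacle to making the argument clean---is to verify that the ``exit at $x$'' subgames invoked on the $c$-branch and on the $f$-branch really refer to the same $\alpha$. This holds because when the token sits at $a$ with $b$ already removed, whichever of $c$ or $f$ has survived is cut off from $x$ (its only route to $x$ passed through $b$), so the connected component containing $x$ is identical in both graphs, and the orphaned gadget vertices and their pendants contribute nothing. Once that identification is made, the entire analysis collapses into the three cases $\alpha = 0$, $\alpha = 1$, and $\alpha \ge 2$, each landing in $\{\ast 2, \ast 3\}$ as required.
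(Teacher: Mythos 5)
Your proof is correct, and it follows essentially the same route as the paper's: both compute $\nimber(G'_y,d)$ as a $\mex$ over the three options $d_0$, $c$, $f$, both identify $d_0$ as a $0$-move, both argue the $f$-branch is exactly $\ast$ by observing $b$'s only options after $f\rightarrow b$ are both nonzero, and both obtain $\ast 2$ or $\ast 3$ for $d$. The main difference is that you carry out the extra work of parametrizing by $\alpha = \nimber(G'_{\{y,a,b,c,d,f\}},x)$ and pinning down $\nimber(G'_{\{y,d\}},c)\in\{\ast,\ast 2\}$ explicitly, whereas the paper sidesteps this by noting $c$ is merely \emph{nonzero} (it has the $0$-option $c_0$), which already forces $\mex\{0,\text{nonzero},\ast\}\in\{\ast 2,\ast 3\}$ since the third option is $\ast$; your ``delicate point'' about $\alpha$ being the same quantity on both branches is handled correctly by the cut-vertex observation, but it is overhead that the paper's qualitative argument never incurs.
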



\begin{proof}
    We prove this by examining the three options from $d$.  Moving to $d_0$ is clearly a move to 0.  Moving to $c$ will be non-zero, because $c_0$ is zero.  It remains to show that moving $d \rightarrow f$ results in a $\ast$-position.
    We can see this by considering the following necessary move $f \rightarrow b$.  Since both of $b$'s remaining neighbors, $a$ and $c$, have terminal neighbors ($a_0$ and $c_0$), they are non-zero.  Thus, the move to $b$ must be a zero position, and the move to $f$ must be $\ast$.
\end{proof}

\begin{lemma}[Correct Way]
    Moving from $d$ to $y$ results in a value of $\ast$ exactly when moving from $x$ to $a$ in the same gadget results in $\ast$.
    \label{lem:rightWay}
\end{lemma}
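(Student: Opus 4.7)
The plan is to compute $\nimber(G'_x, a)$ by unwinding the gadget's moves from the deepest vertex outward, parameterized by a single value $V := \nimber(G'_d, y)$. This value is well defined because with $d$ removed the interior of the gadget is unreachable from $y$, so the post-exit Grundy value is the same regardless of whether the token reached $d$ via $c$ or via $f$; any additional removals among $\{a,b,c,f\}$ sit inside the gadget and no longer affect what is reachable from $y$. The goal then reduces to showing $\nimber(G'_x, a) = \ast$ iff $V = \ast$.

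I would first treat the $f$-leg. From $(G'_{x,a,b}, f)$ the only option is $f \to d$, and from $(G'_{x,a,b,f}, d)$ the three remaining options yield values $0$ (to $d_0$), $1$ (to $c$, whose only surviving neighbor is the leaf $c_0$), and $V$ (to $y$), so this $d$-value is $\mex\{0,1,V\}$ which is always at least $2$. Using the identity $\mex\{k\}=0$ for any $k>0$, this collapses to $\nimber(G'_{x,a,b}, f) = 0$. Next the $c$-leg: from $(G'_{x,a,b,c}, d)$ the three options give $0$ (to $d_0$), $0$ (to $f$, now isolated), and $V$ (to $y$), so the $d$-value is $\mex\{0, V\}$, which is $\ast 2$ when $V = \ast$ and $\ast$ otherwise; hence $\nimber(G'_{x,a,b}, c) = \mex\{0,\cdot\}$ is $\ast$ when $V = \ast$ and $\ast 2$ otherwise.

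Combining at $b$, $\nimber(G'_{x,a}, b) = \mex\{\nimber(G'_{x,a,b}, c),\, 0\}$ equals $\ast 2$ when $V = \ast$ and $\ast$ otherwise. One more mex step yields $\nimber(G'_x, a) = \mex\{0,\, \nimber(G'_{x,a}, b)\}$, which is $\ast$ precisely when $V = \ast$ and $\ast 2$ when $V \ne \ast$. This is exactly the claimed equivalence.

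The main obstacle is keeping the cascade of mex computations straight; in particular, the identity $\mex\{k\}=0$ for positive $k$ is what collapses the forced $f$-leg to a $0$-move, and this collapse is exactly what couples the outer mex cascade to the parity of $V$. A secondary point worth writing out cleanly is the route independence of $V$ noted above, so that $V = \nimber(G'_d, y)$ is unambiguously the exit value regardless of which interior path the token used to reach $d$.
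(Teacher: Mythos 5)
Your proof is correct and follows essentially the same approach as the paper: an inside-out mex cascade through the gadget, anchored on the observation that the $b\to f$ move yields a $0$-position because the forced continuation to $d$ lands on a position of value at least $\ast 2$. The only cosmetic difference is that you parameterize by the single exit value $V$ and track it through the cascade, whereas the paper splits into the two cases $V=\ast$ and $V\neq\ast$ at the outset; your route-independence remark (that $V$ is well-defined since $\{a,b,c,f,a_0,c_0,d_0\}$ is disconnected from $y$ once $x$ and $d$ are removed) is a worthwhile point the paper leaves implicit.
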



\begin{proof}
    In both cases, we will use the fact that moving from $b$ to $f$ results in a 0-position, because $d$ is always non-zero and it is $f$'s only neighbor.

    For the first case, assume that moving $d$ to $y$ results in $\ast$.  This means that moving $c$ to $d$ has value $\ast 2$, as options to $f$ and $d_0$ are both 0.  Thus, $b$ to $c$ has value $\ast$.  Since $b$ has options to both 0 and $\ast$, moving to $b$ has value $\ast 2$, and moving $x$ to $a$ has value $\ast$.

    In the other case, assume that moving $d$ to $y$ does not have value $\ast$.  (Either it is $\ast 2$ or above or $y$ has already been removed.)  Thus, moving $c$ to $d$ results in a value of $\ast$, because $d$'s other options are 0.  $b$ to $c$ then has a value of $\ast 2$, meaning that $a$ to $b$ has a value of $\ast$.  This means that moving $x$ to $a$ has a value of $\ast 2 \neq \ast$, completing the proof.
\end{proof}

With our reduction and our two lemmas, we can now prove Theorem \ref{thm:nimberHardness}.  

\begin{proof}
    Determining the winnability of \rsGeog\ position $(G, s)$ is \cclass{PSPACE}-hard.  Thus, it remains to be shown that for the \rsUndirGeog\ position resulting from the reduction,
$(G', s)
\begin{cases}
    = \ast, &\text{ if } (G, s) = 0\ (\in \outP)\\
    \in \outN \setminus \{\ast\}, &\text{ if } (G,s) \in \outN
\end{cases}$

    Consider any \rsGeog\ position $(H, t)$, $k$ moves after $(G, x)$ and the analagous \rsUndirGeog\ position $(H', t) = r(H, t)$, reached $5k$ moves after $(G', s)$ by traversing the gadgets corresponding to the directed edges traversed to reach $(H, t)$.  If there are no options from $(H, t)$, then $(H', t)$ has options to $t_0$, which has value 0; possibly to gadget vertices $d$, which have value either $\ast 2$ or $\ast 3$ by Lemma \ref{lem:wrongWay}; and possibly to gadget vertices $a$ where the corresponding $y$ vertex has already been removed, which have a non-$\ast$ value (specifically $\ast 2$) by Lemma \ref{lem:rightWay}.

    Thus, there is a move to zero and might be moves to $\ast 2$ or $\ast 3$.  $(H', t) = \ast$.
  If there are options from $(H, t)$, then it is either in $\outP$ or $\outN$.  We can complete our proof inductively by assuming that the theorem is true for all options of $(H, t)$ and showing that it works for $(H, t)$.
 (1)
 If $(H, t) \in \outP$, then each option, $(H_t, p)$ is in $\outN$.  Thus, by our induction hypothesis, $(H_t', p) = \ast z$, where $z \geq 2$.  This means that $(H', t)$ doesn't have any options equal to $\ast$.  Since it does have a move to zero ($t_0$), $(H', t) = \ast$. \checkmark
   (2)
    If $(H, t) \in \outN$, then some option, $(H_t, p) \in \outP$. Thus, $(H_t', p) = \ast$.  Since $(H', t)$ has a move to zero ($t_0$) and $\ast$, the value is $\ast z$ where $z \geq 2$.
\end{proof}

Currently our reduction creates positions where the initial value (at $(G', s)$) is either $\ast$ or another nimber higher than 1.  We can narrow this down so that the question is whether we can distinguish between $\ast$ and $\ast 2$, specifically, by appending a Prelude gadget (see Figure \ref{fig:prelude}) before $s$ and then asking what the value of the overall game position is when starting at the ``start'' vertex.

\begin{figure}[h!]\begin{center}
    \begin{tikzpicture}[node distance = 1cm]
        \node[draw, circle] (start) {start};
        \node[draw, circle] (start0) [below of=start] {};
        \node[draw, circle] (start2) [right=of start] {};
        \node[draw, circle] (start20) [below of=start2] {};
        \node[draw, circle] (s) [right=of start2] {$s$};

        \path[-]
            (start) edge (start0)
            (start) edge (start2)
            (start2) edge (start20)
            (start2) edge (s)
        ;
    \end{tikzpicture}
    \caption{Prelude Gadget.}
       \label{fig:prelude}
\end{center}\end{figure}
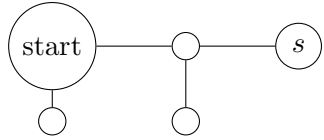


\begin{corollary}
   Determining whether an \rsUndirGeog\ position equals $\ast$ or $\ast 2$ is \cclass{PSPACE}-complete, even on bipartite, planar graphs with a maximum degree of four.
\end{corollary}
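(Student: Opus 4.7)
The goal is to sharpen Theorem \ref{thm:nimberHardness}: that reduction already maps \rsGeog\ positions to \rsUndirGeog\ positions whose Grundy value is $\ast$ when the input is in $\outP$ and in $\{\ast z : z \geq 2\}$ when the input is in $\outN$. It remains to collapse the $\outN$-case to the single nimber $\ast 2$ by prepending the Prelude of Figure \ref{fig:prelude} in front of $s$ and asking for the value at the new ``start'' vertex. Membership in \cclass{PSPACE} is immediate, since every move deletes at least one vertex and so nimbers are computable by the standard recursive polynomial-space evaluation of a polynomial-depth game tree.

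I would first carry out the nimber calculation for the Prelude. Let $V := \nimber(G', s) \in \{\ast\} \cup \{\ast z : z \geq 2\}$ by Theorem \ref{thm:nimberHardness}. From ``start'' the two options are the leaf $\text{start}_0$ (value $0$) and ``start2''; after the move start $\to$ start2, the remaining options at start2 are the leaf $\text{start}_{20}$ (value $0$) and $s$, at which point the remaining graph is exactly $(G', s)$. Hence the value of the position immediately after start $\to$ start2 is $\mex\{0, V\}$, which equals $2$ when $V = \ast$ and $1$ when $V \geq \ast 2$. One more application of mex gives $\nimber(\text{start}) = \mex\{0, \mex\{0, V\}\}$, which is $\ast$ when $V = \ast$ and $\ast 2$ otherwise. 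Composing with the reduction $r$ from Theorem \ref{thm:nimberHardness} then yields a polynomial-time many-one reduction from \rsGeog\ winnability to the $\ast$-vs-$\ast 2$ decision problem in \rsUndirGeog, establishing \cclass{PSPACE}-hardness.

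Finally I would verify the three structural constraints simultaneously. The edge-gadget of Figure \ref{fig:directedEdgeGadget} connects $x$ and $y$ by an odd-length path $x$-$a$-$b$-$c$-$d$-$y$, so any proper $2$-coloring of the original bipartite directed graph extends through all gadgets; the leaves $v_0$ and the Prelude attach via single edges and $2$-color consistently. Each gadget and the Prelude are themselves planar and can be drawn inside a thin strip along their incident edge of a planar embedding of $G$, so planarity is preserved. The delicate point is the maximum-degree bound of $4$: an original vertex of degree at most $3$ acquires one extra edge to $v_0$, giving degree at most $4$, and the Prelude adds one further edge at $s$; I would absorb this either by starting from a Lichtenstein--Sipser instance whose source has out-degree at most $2$, or by simply omitting the leaf $s_0$ for the source vertex (harmless, since the Prelude itself supplies a neighboring move of value $0$). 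I expect this source-degree bookkeeping to be the only real obstacle---the nimber calculation is short, and the rest follows from Theorem \ref{thm:nimberHardness} and properties of the edge-gadget already established by Lemmas \ref{lem:wrongWay} and \ref{lem:rightWay}.
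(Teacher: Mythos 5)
Your proof takes the same route as the paper: PSPACE membership by polynomial-depth DFS, the Prelude gadget turning the output of Theorem~\ref{thm:nimberHardness} into an $\ast$-vs-$\ast 2$ decision, and preservation of planarity, bipartiteness, and degree from the Lichtenstein--Sipser instance. Your explicit $\mex\bigl\{0,\mex\{0,V\}\bigr\}$ computation is correct and spells out what the paper only asserts (``by calculating the nimber backtracking''), and your bipartite argument via the odd-length $x$--$a$--$b$--$c$--$d$--$y$ path is equivalent to the paper's ``no odd cycle in the gadget.'' You also noticed a degree subtlety that the paper's proof passes over silently: the Prelude contributes a fifth edge at $s$ if the original source already had degree three.

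However, your second proposed fix for that subtlety---dropping the leaf $s_0$ at the source, arguing it is harmless because the Prelude ``supplies a neighboring move of value $0$''---does not work. Once the token moves $\text{start}\to\text{start2}\to s$, both Prelude vertices are gone and $\text{start}_0,\text{start}_{20}$ are isolated; $s$ has \emph{no} $0$-option at all. Without $s_0$, every option of $(G',s)$ has positive nimber (forward gadget moves are $\ast$ or $\ast z$ by the inductive argument of Theorem~\ref{thm:nimberHardness}, and backward $d$-moves are $\ast 2$ or $\ast 3$ by Lemma~\ref{lem:wrongWay}), so $\nimber(G',s)=\mex(\cdot)=0$ regardless of the outcome class of $(G,s)$, and the reduction collapses. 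Your first fix (arranging for a Lichtenstein--Sipser source of out-degree at most two) is the right idea; alternatively one can prepend a fresh degree-one entry vertex feeding into $s$ before applying the reduction $r$. With one of those in place the argument is sound.
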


\begin{proof}
    Since the height of the game tree is at most $n$, the Grundy-value can be computed in polynomial-space using the standard DFS technique \cite{PapadimitriouBook:1994}.

    Adding our prelude gadget to the reduction from Theorem \ref{thm:nimberHardness}, the value at vertex ``start'' will either be $\ast$ or $\ast 2$.  By calculating the nimber backtracking from the value at $s$, we see that it will be $\ast$ exactly when the value at $s$ is $\ast$, and $\ast 2$ for any of the other values of $s$.

From Lichtenstein and Sipser \cite{LichtensteinSipser:1980}, we know that the winnability of \rsGeog\ is \cclass{PSPACE}-hard on bipartite, planar graphs with a maximum degree of three.  Our reduction preserves the planarity, and, since there is no odd-cycle in the gadget, the bipartite property as well.  We increase the degree by one because we add an extra vertex adjacent to the original vertices in $V$.  Thus, we are still hard on graphs with a maximum degree of four.
\end{proof}

The proof above has also established that
it is \cclass{PSPACE}-hard to determining whether two
``Fuzzy'' games $(G_1,s_1)$ and $(G_2,s_2)$
in \ruleset{Undirected Geography} has the same Grundy value.
In Section \ref{Sec:HardClassifier}, we will use our ``poly-high nimber constructor'' to prove the following theorem, showing no polynomial-time classifier exists the Grundy values in \ruleset{Undirected Geography}, beyond the well-known ``Zero''-``Fuzzy'' classifier, unless \cclass{P} = \cclass{PSPACE.}

\begin{theorem}[Too ``Fuzzy'' to Classify] \label{Thm:Mys}
In \ruleset{Undirected Geography}, determining whether or not the Grundy value of $(G,s)$, where $s$ has degree $\Delta$, is in any given set $S\subset [\Delta]$,
is \cclass{PSPACE}-hard.
\label{thm:nimSeperation}
\end{theorem}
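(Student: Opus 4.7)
The plan is to reduce from the \cclass{PSPACE}-hard problem of distinguishing \rsUndirGeog\ positions of Grundy value $\ast$ from those of Grundy value $\ast 2$, established in the corollary immediately following Theorem~\ref{thm:nimberHardness}.  For a fixed target set $S$ that is not one of the trivial classifiers---that is, $S$ is neither $\emptyset$, $\{0\}$, $\{1,2,3,\ldots\}$, nor the set $\{0,1,2,\ldots\}$ of all nonnegative integers---I will build, from an instance $(H,t)$ promised to satisfy $\nimber(H,t)\in\{\ast,\ast 2\}$, an \rsUndirGeog\ position $(G,s)$ whose Grundy value lies in $S$ precisely when $\nimber(H,t)$ takes one prescribed one of the two values.

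The construction will lean on the polynomial-sized nimber constructor of Section~\ref{sec:Math}, which supplies, for every $n\geq 0$, a polynomial-sized \rsUndirGeog\ gadget $(N_n,v_n)$ with $\nimber(N_n,v_n)=\ast n$.  After selecting a finite set $W$ of nonnegative integers (to be chosen below), I will obtain $G$ by taking the disjoint union of $H$ and the gadgets $\{N_w\}_{w\in W}$, then introducing one fresh vertex $s$ adjacent to $t$ and to each start vertex $v_w$.  Because the components of $G$ minus $s$ are pairwise disconnected, each option at $s$ leads to a game whose nimber is already under control, and the mex formula gives
\[
\nimber(G,s)\;=\;\mex\bigl(W\cup\{\nimber(H,t)\}\bigr).
\]

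The crux is picking $W$ so that the output of this mex lies in $S$ in exactly one of the two input cases.  A direct inspection of the mex formula shows that the achievable output pairs $\bigl(\mex(W\cup\{1\}),\ \mex(W\cup\{2\})\bigr)$ with distinct coordinates are exactly those of the form $(n,1)$ for $n\geq 2$ (realized by $W=\{0,2,3,\ldots,n-1\}$) and $(2,n)$ for $n\geq 3$ (realized by $W=\{0,1,3,4,\ldots,n-1\}$).  A short case split on whether $1$ and $2$ belong to $S$ then exhibits, for every nontrivial $S$, some pair from this family whose two coordinates differ in $S$-membership.  For instance, when $1,2\in S$ the only way no such pair exists is for $S$ to contain every positive integer, which is the excluded fuzzy classifier; when $1,2\notin S$ one needs some $n\geq 2$ inside $S$, which again exists unless $S\subseteq\{0\}$.

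The main obstacle will be this case analysis on $S$---verifying that the restricted family of achievable mex pairs is rich enough to witness every nontrivial $S$.  Once this is established, the degree bound follows from $\deg(s)=|W|+1$, the resulting nimber lies automatically in $[\Delta]$, and \cclass{PSPACE}-membership is immediate from the polynomial depth of \rsUndirGeog\ game trees via standard DFS.
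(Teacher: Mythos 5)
Your proposal is correct, and the route you take is genuinely different from the paper's. The paper first builds a \emph{chain} of hub vertices $v_3, v_4, \dots, v_k$, each adjacent to the previous one and to gadgets for $0, \ast, \dots, \ast(i-2)$, so that the pair of achievable nimbers iteratively shifts from $(\ast, \ast 2)$ to $(\ast(k-1), \ast k)$; then a final hub $v_p$ widens the gap to an arbitrary pair $(\ast k, \ast p)$, and for a given $S$ one picks any $x\in S$, $y\notin S$ and runs this. Your construction collapses all of that into a \emph{single} hub $s$ adjacent to $t$ and to one gadget per $w\in W$, using $\nimber(G,s)=\mex(W\cup\{\nimber(H,t)\})$. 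The price is that the reachable pairs are more constrained: instead of arbitrary $(\ast x, \ast y)$, you get only $(\ast n,\ast 1)$ with $n\geq 2$ and $(\ast 2,\ast n)$ with $n\geq 3$. Your case analysis showing this family is still enough to witness every nontrivial $S$ is the novel content of your approach, and it checks out: the only $S$ for which no such pair splits $S$ are $S\supseteq\{1,2,\dots\}$ (the Fuzzy classifier) or $S\cap\{1,2,\dots\}=\emptyset$ (the Zero classifier), both excluded. Net effect: your construction is flatter and avoids the inductive chain, at the cost of a sharper mex argument; the paper's is more mechanical but reaches all pairs directly.

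One small thing worth making explicit (the paper is equally silent on it): after choosing the minimal $W$ that realizes your target pair, $\deg(s)=|W|+1$ may be smaller than the prescribed $\Delta$. You can pad $W$ with any values strictly larger than $\max\bigl(\mex(W\cup\{1\}),\mex(W\cup\{2\})\bigr)$ without perturbing either mex, using the polynomial nimber constructor to build the extra gadgets, so that $\deg(s)$ lands exactly on $\Delta$. Also note that what makes the disjoint-union-plus-hub trick legitimate in \rsUndirGeog\ is that moving from $s$ deletes $s$ and hence disconnects the remaining components, so each option's nimber is exactly the nimber of the gadget (or $H$) taken in isolation; you invoke this implicitly but it is the crucial structural fact.
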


\subsection{Following the Winning Way - A Polynomial-Time Branch-and-Bound}
\label{sec:BranchAndBound}
We first show
that for any undirected graph $G$ with maximum-degree at most three,  the Grundy value of \ruleset{Undirected Geography} 
is polynomial-time computable.
In this case, we present a polynomial-time reduction from the Grundy-value computation to decision of winnability.
More broadly,  in Appendix \ref{sec:Abstract}, we generalize this finding of nimber-winnability reduction  abstractly to impartial games characterized by their {\em degrees} and {\em heights}.
In Appendix \ref{Appendix:Extensions}, we strengthen the result by analyzing the impact of high-degree nodes in the process.

For \ruleset{Undirected Geography} at a position $G=(V,E)$ and $s\in V$,
the {\em degree} of $s$ in $G$ is equal to the number of its feasible moves, and hence serves as a tight upper bound on the Grundy value of the position.
Similarly, the maximum  degree in $G$ characterizes the maximum {\em branching factor} of the game tree at position $(G,s)$:
If the maximum degree of $G$ is $\Delta$, then the branching factor of every node except the root is at most $\Delta-1$,
because the current geography path entering the node will take away at least one edge incident to the node.
The root may have branching factor $\Delta$ but no more.

\begin{theorem}[Following the Winning Way in \ruleset{Undirected Geography}]\label{Thm:BAB}
For any undirected graph $G=(V,E)$ with maximum degree $3$, and node $s\in V$,
the Grundy value at the \ruleset{Undirected Geography} position $(G,s)$
can be computed in polynomial time in $n=|V|$.
\end{theorem}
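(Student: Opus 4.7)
The plan is to give a recursive algorithm that uses the Fraenkel--Scheinerman--Ullman polynomial-time winnability test (henceforth FSU) as a black box, and to show via a structural branching lemma that the whole recursion touches only $O(n)$ positions.

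The starting point is the bound already noted in the excerpt: for every non-root position $(H, v)$ reachable from $(G, s)$, the token enters $v$ along an edge that is immediately consumed, so the branching factor at $(H, v)$ is at most $\Delta - 1 = 2$. Hence $\nimber(H, v) \in \{0, \ast, \ast 2\}$, while the root itself satisfies $\nimber(G, s) \in \{0, \ast, \ast 2, \ast 3\}$. The decisive step is the following \emph{branching lemma}: if a non-root position $(H, v)$ has non-zero Grundy value, then at most one of its (at most two) children also has non-zero Grundy value. The reason is that all children's nimbers lie in $\{0, \ast, \ast 2\}$, and the $\mex$ of any multiset of non-zero values drawn from $\{\ast, \ast 2\}$ equals $0$; if both children were non-zero, we would get $\nimber(H, v) = 0$, contradicting the hypothesis. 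The same calculation shows that a non-zero non-root with exactly one child forces that child to be $0$ and the position's value to be $\ast$, requiring no recursion.

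With these observations in hand, the algorithm is transparent: on input $(H, v)$, return $0$ if $(H, v)$ is terminal or if FSU reports $(H, v) \in \outP$; otherwise, for each neighbor $u$ of $v$, call FSU on $(H_v, u)$, recursively compute $\nimber(H_v, u)$ \emph{only} for those children that FSU declares non-zero, and finally return the $\mex$ of the resulting values. Correctness is immediate from the Sprague--Grundy recursion.

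For the complexity: at every recursive call on a non-root non-zero position, the branching lemma forces at most one further recursive sub-call, so the sub-computation emanating from any non-zero child of the root is a single chain of length at most $n - 1$ (each call strictly decreases the number of vertices). The root itself spawns at most two such chains, because among its at most three children at most two can be non-zero (else the root's $\mex$ would be $0$). Thus the total number of recursive invocations is $O(n)$, each invocation performs $O(1)$ calls to FSU on subgraphs of $G$, and the overall running time is polynomial in $n$. The hardest part---and the place where the degree-$3$ restriction is essential---is the branching lemma: without it one would expect branching factor $2$ at every non-root level and a recursion of size $2^{\Theta(n)}$, so FSU alone would not suffice. The lemma is sharp in that with maximum degree $4$, three distinct values $0, \ast, \ast 2$ can coexist among the children without forcing the parent to $0$, which is precisely the structural slack exploited by the \cclass{PSPACE}-hardness reduction established earlier in the paper.
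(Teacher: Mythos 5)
Your proposal is correct and follows essentially the same approach as the paper: use the Fraenkel--Scheinerman--Ullman winnability oracle to prune children known to be zero, observe that a non-zero position with at most two options can have at most one non-zero option (since any nonempty subset of $\{*, *2\}$ has $\mex$ equal to $0$), and conclude that the recursion degenerates into $O(1)$ chains of length $O(n)$. The paper presents this as an explicit case analysis on the degree of the current vertex (degrees one, two, and three handled in turn, with only the [Fuzzy, Zero] and [Zero, Fuzzy] sub-cases recursing), whereas you package the same observation as a single ``branching lemma''; the content and the resulting $O(n)$-call bound are the same.
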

\begin{proof}
We first consider the case when the degree of $s$ is $1$ or $2$.
We then extend to the case of $3$.

\vspace{0.1in}
\noindent{\sc Single Option}:
When the degree of $s$ is $1$, the Grundy value of \ruleset{Undirected Geography} at $(G,s)$ is $\ast$ if and only if $(G,s)$ is a winning position.
So, the Grundy-value  can be directly reduced to the decision of winnability, which has a polynomial-time matching-based solution \cite{DBLP:journals/tcs/FraenkelSU93}.

\vspace{0.1in}
\noindent{\sc Double Options}:
When the degree of $s$ is $2$ (say with neighbors $v_1$ and $v_2$), the maximum branching factor of the game tree for position $(G,s)$ is $2$.
Let $G_{s}$ denote the graph obtained from $G$ by removing $s$ and edges incidents to $s$.
Then, the position with move to $v_1$ is $(G_{s},v_1)$
and with move to $v_2$ is $(G_{s},v_2)$.
Note that the degree of $v_1$ and $v_2$ in $G_{s}$ is at most $2$.
We run the polynomial-time matching-based winnability algorithm to determine whether or not
$(G_{s},v_1)$ and $(G_{s},v_2)$ are winning positions in \ruleset{Undirected Geography}, and consider
the  four cases:

\begin{enumerate}
\item {\bf [``Fuzzy'', ``Fuzzy'']} - both $(G_{s},v_1)$ and $(G_{s},v_2)$ are winning positions:
$\nimber(G,s) = 0$.

\item {\bf [``Zero'', ``Zero'']} - both $(G_{s},v_1)$ and $(G_{s},v_2)$ are losing positions:
$\nimber(G,s) = \ast$.

\item {\bf [``Fuzzy'', ``Zero'']} - $(G_{s},v_1)$ is a winning position and $(G_{s},v_2)$ is a losing position:

$\nimber(G,s) = \ast (3 - \nimber(G_{s},v_1))$
\item {\bf [``Zero'', ``Fuzzy'']} -  $(G_{s},v_1)$ is a losing position and $(G_{s},v_2)$ is a winning position:

$\nimber(G,s) = \ast (3 - \nimber(G_{s},v_2))$
\end{enumerate}

In the last two cases, one of $v_1$ and $v_2$ is 0, and the other has value $x = \ast$ or $x = \ast2$.  By the mex rule, $(G, s)$ will be the other of those values ($\ast 2$ or $\ast$, respectively) which is exactly $3 - x$ (or $3 \oplus x$), so the above derivation works.

In the first two cases, we have found
   the Grundy value of $(G,s)$ in polynomial time.
Crucial to the tractability, in the last two cases, we reduce the Grundy-value computation of $(G,s)$ to a single Grundy-value computation of either
$(G_{s},v_1)$ or $(G_{s},v_2)$.
Because $G_{s}$ has one less node than $G$,
the depth of the branch-and-bound process is $O(n)$.
In total, we made $O(n)$ calls to the
decision-of-winnability algorithm in order to compute the Grundy value of position $(G,s)$.

\vspace{0.1in}
\noindent{\sc Three Positions}: Finally, we consider the case when the degree of $s$ is three, say with
$u_1$, $u_2$, and $u_3$ as neighbors.
The maximum degree in $G_{s}$ remains at most three.
Then:
$$\nimber((G,s)) = \mex\left(
\left\{\nimber((G_{s},u_1),\nimber((G_{s},u_2),\nimber((G_{s},u_3)\right\}
\right)$$
Note that $u_i$ has degree at most two in $G_{s}$, $\forall i\in \{1,2,3\}$.
Thus, the nimbers of $(G_{s},u_i)$ can be computed, by our method above for degree one or two, in polynomial time.
\end{proof}

\section{Games of Games, Sprague-Grundy Characterization, and \\ Mathematical-Computational Divergence} \label{sec:GoG}

Sprague-Grundy theory provides not only  a unified
theory for understanding diverse impartial rulesets, but also an elegant framework for their interactions.
Because for all impartial games $G, H$,
$\nimber(G+H) = \nimber(G) \oplus \nimber(H)$,
the Gurndy value of $(G+H)$
can be reduced in polynomial-time to the Grundy values of  $G$ and $H$.
In contrast,  using our complexity result for \ruleset{Undirected Geography}, we strengthen
Morris' theorem \cite{morris1981playing} from partisan to impartial games:

\begin{theorem}[Beyond Winning Impartial Games]\label{theo:WinningDecision}
If \cclass{P} $\neq $ \cclass{PSPACE}, then the disjunctive sum of two polynomial-time tractable impartial games can be intractable.
\end{theorem}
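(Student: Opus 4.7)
The plan is to derive Theorem \ref{theo:WinningDecision} as an almost-immediate computational consequence of the corollary to Theorem \ref{thm:nimberHardness}, which establishes that distinguishing whether an \rsUndirGeog\ position has Grundy value $\ast$ or $\ast 2$ is \cclass{PSPACE}-complete. Since \rsUndirGeog\ is polynomial-time solvable for strategic determination (Fraenkel-Scheinerman-Ullman \cite{DBLP:journals/tcs/FraenkelSU93}), every individual \rsUndirGeog\ position is a tractable impartial game. Thus, to prove the theorem it suffices to exhibit two \rsUndirGeog\ positions whose disjunctive sum is \cclass{PSPACE}-hard for strategic determination.

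First, I would let $(G', s)$ denote the \rsUndirGeog\ position produced by the reduction of Theorem \ref{thm:nimberHardness} with the prelude gadget of Figure \ref{fig:prelude} appended, so that by the corollary $\nimber(G', s) \in \{\ast, \ast 2\}$, and deciding which of the two values it takes is \cclass{PSPACE}-hard. Second, I would take the second summand to be the trivial \rsUndirGeog\ position $H$ consisting of two vertices joined by a single edge, with the token placed at one endpoint; plainly $\nimber(H) = \ast$, and $H$ is trivially tractable.

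Third, by Sprague-Grundy theory (applied with Bouton's characterization of \ruleset{Nim}), the disjunctive sum satisfies
\begin{equation*}
\nimber\bigl((G', s) + H\bigr) \;=\; \nimber(G', s) \oplus \ast.
\end{equation*}
If $\nimber(G', s) = \ast$ then the sum has nim-value $\ast \oplus \ast = 0$ and lies in \outP; if $\nimber(G', s) = \ast 2$ then the sum has nim-value $\ast 2 \oplus \ast = \ast 3$ and lies in \outN. Hence any polynomial-time algorithm for strategic determination of sums of tractable impartial games would immediately distinguish $\ast$ from $\ast 2$ in \rsUndirGeog\ in polynomial time, contradicting the corollary to Theorem \ref{thm:nimberHardness} unless $\cclass{P} = \cclass{PSPACE}$.

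There is no substantive technical obstacle here: the heavy lifting has been carried out in the reduction underlying Theorem \ref{thm:nimberHardness}, and the present statement is essentially its computational externalization. The only conceptual point worth emphasizing is that a single trivially tractable \rsUndirGeog\ position ($H$) suffices as a ``probe'' that converts the hard Grundy-value distinction hidden inside $(G', s)$ into a winnability question about the sum; in this sense the intractability of the sum is not inherited from any difficulty of winnability of the individual summands but rather from the richer information encoded in their Grundy values.
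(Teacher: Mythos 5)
Your proposal is correct and follows essentially the same route as the paper. The paper derives Theorem \ref{theo:WinningDecision} from Theorem \ref{Theo:GameOfGames}, whose proof likewise sums an arbitrary \rsUndirGeog\ position $(G,s)$ with the single-edge graph of nim-value $\ast$ and uses $\nimber((G,s)+(G',s')) = \nimber((G,s)) \oplus \ast$ to convert the \cclass{PSPACE}-hard ``is the nimber equal to $\ast$?'' question into a winnability question about the sum — exactly your ``probe'' argument.
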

Thus, unlike Grundy-value computation,  there is no general polynomial-time reduction algorithm from winnability of $(G+H)$ to strategic determination for $G$ and $H$, unless
\cclass{P} = \cclass{PSPACE}.
This illustrates a striking view of the classical Sprague-Grundy characterization through the lens of computational complexity theory.
The Grundy value and strategic determination are two different yet fundamental summaries of the game tree.
Using the complexity gulf between \cclass{P} and \cclass{PSPACE}, our results help to demonstrate that the Grundy value is a significantly richer summary of game data than strategic determination. In fact, our results have established the following:
\begin{theorem}[Intractability of Game Summary]\label{Thm:Summary}
Unless \cclass{P} = \cclass{PSPACE},
there is no general polynomial-time method to summarize
 two given impartial games (say $G$ and $H$) to efficiently solve the game of their sum ($G+H$).
\end{theorem}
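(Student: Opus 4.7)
The plan is to reduce Theorem~\ref{Thm:Summary} immediately to Theorem~\ref{theo:SumIsHard} (equivalently Theorem~\ref{theo:WinningDecision}) by first pinning down what a polynomial-time ``summarization method'' is. I would model such a method as a pair $(\mathsf{Sum},\mathsf{Solve})$ of polynomial-time algorithms: $\mathsf{Sum}$ takes a single impartial game position $G$ and returns a string $\sigma_G$ (automatically of polynomial length, since $\mathsf{Sum}$ runs in polynomial time), and $\mathsf{Solve}$ takes two such summaries $\sigma_G,\sigma_H$ and decides the winnability of the disjunctive sum $G+H$ in polynomial time. This formulation faithfully abstracts the Sprague--Grundy template, where $\mathsf{Sum}(G)=\nimber(G)$ and $\mathsf{Solve}$ is a bitwise xor followed by a zero-test.

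Granted such a scheme, I would invoke Theorem~\ref{theo:SumIsHard} to obtain two \rsUndirGeog\ positions $(G,s)$ and $(H,t)$ that are each polynomial-time solvable yet whose disjunctive sum is \cclass{PSPACE}-hard to decide. Running $\mathsf{Sum}$ on $(G,s)$ and on $(H,t)$ takes polynomial time, and feeding the resulting summaries into $\mathsf{Solve}$ again takes polynomial time. The composition is a polynomial-time algorithm that decides the winnability of $(G,s)+(H,t)$, which is \cclass{PSPACE}-hard; this forces $\cclass{P}=\cclass{PSPACE}$, establishing the contrapositive.

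The principal obstacle is conceptual rather than technical: selecting a definition of ``summary'' that is neither vacuous nor trivially refuted. Permitting $\mathsf{Sum}$ to see both components collapses the statement to ``solve the sum directly'', while permitting $\mathsf{Sum}$ to run in polynomial space rather than polynomial time would falsify the claim, since the Grundy values themselves would then qualify and Sprague--Grundy itself would furnish the combiner. The component-local, polynomial-time formulation above is the minimal choice that both matches the Sprague--Grundy paradigm the theorem is meant to generalize and allows the one-line reduction from Theorem~\ref{theo:SumIsHard}. Once this definition is fixed, no further technical work is required beyond invoking the previously established complexity separation between tractable \rsUndirGeog\ components and their \cclass{PSPACE}-hard sum.
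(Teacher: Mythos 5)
Your proposal is correct and follows essentially the same route as the paper, which also derives Theorem~\ref{Thm:Summary} directly from the \cclass{PSPACE}-hardness of summing two \rsUndirGeog\ games (Theorem~\ref{Theo:GameOfGames}, equivalently Theorem~\ref{theo:SumIsHard}). The one thing you add that the paper leaves implicit is the explicit formalization of a ``summarization method'' as a pair $(\mathsf{Sum},\mathsf{Solve})$ of component-local polynomial-time procedures, which is a helpful clarification but not a different argument.
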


Sprague-Grundy theory established that such concise summaries---in the form of Grundy values---of game data always exist
when computational cost is no object.
This work highlights a subtle yet fundamental contrast between the
 mathematical facet and computational facet of combinatorial game theory.
Applying Sprague-Grundy theory,
 our dichotomy theorem 
also enables us to settle the solvability of several families of games extending \ruleset{Undirected Geography}.

\begin{itemize}
\item \ruleset{Multi-Field Undirected Geography} - The disjunctive sum of two or more
of \ruleset{Undirected Geography} games.
\item \ruleset{Multi-Token Undirected Geography} -
Like \ruleset{Undirected Geography},
this game is played on an undirected graph, in which a game position is defined by a graph $G =(V,E)$ and a set $S\subset V$.
Each node in $S$ has a token, and in each turn, exactly one of the tokens can be moved to an adjacent {\em unoccupied node}, and the node of its previous location is removed from the graph.
In \ruleset{Multi-Token Undirected Geography}, alternating moves by two players  create multiple node-disjoint exploring paths in $G$,
one by each token.
The game ends when no valid extension exists to any of these paths.

\item \ruleset{Undirected Geography with Passes} - Another natural extension of a game is to allow players to pass their turn.
Here, for an integer $k$, we consider
\ruleset{Undirected Geography}  with $k$-Total Passes, which is an \ruleset{Undirected Geography} game
 whose feasible moves are augmented by allowing players to pass their turn, provided that the total number of passes taken so far
 (by both players) is less than $k$.

\item \ruleset{Swap Uno} - This game is inspired by a generalization of Uno,  was shown by Domaine {\em et al} \cite{demaine2014uno} to be in P via reduction to \ruleset{Undirected Geography}. In this game, there are two hands, $H_1$ and $H_2$, which each consist of a set of cards. This is a perfect information game, so both players may see each other's hands. Each card has two attributes, a color $c$ and a rank $r$ and can be represented as $(c, r)$. A card can only be played in the center (shared) pile if the previous card matches either the $c$ of the current card or the $r$ of the current card. Finally, for the special part that makes this ''Swap'' Uno, either player may, once a game, decide to use their turn to swap their hand for their opponent's rather than playing in a pile. Once a single player swaps, the other player may not swap.

\end{itemize}

We now prove that, 
although \ruleset{Undirected Geography} is polynomial-time solvable, these basic extensions of \ruleset{Undirected Geography} can be more challenging computationally.

\begin{theorem}[The War of \ruleset{Geography} Battles]\label{Theo:GameOfGames}
 Deciding whether or not the current player has a winning strategy
 in the sum of two \ruleset{Undirected Geography} games, and consequently, in  \ruleset{Two-Token Undirected Geography} and \ruleset{Swap Uno} games, is \cclass{PSPACE}-complete.
Furthermore, \ruleset{Undirected Geography with $k$-Total Passes} is \cclass{PSPACE}-complete to solve for {\em odd} $k$, and polynomial-time solvable for {\em even} $k$.
\end{theorem}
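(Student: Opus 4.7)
The plan is to leverage Sprague-Grundy theory together with the Corollary to Theorem \ref{thm:nimberHardness}, which asserts that distinguishing $\ast$ from $\ast 2$ in \rsUndirGeog\ is \cclass{PSPACE}-complete. All four assertions will follow by converting that hard Grundy-value question into a pure winnability question via a suitable disjunctive sum; \cclass{PSPACE}-membership in every case is the standard polynomial-depth DFS, since each of these rulesets has polynomial game-tree height.

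\textbf{Sum of two \rsUndirGeog\ games.} I would take the \rsUndirGeog\ position $(G', s)$ produced by the Theorem \ref{thm:nimberHardness} reduction---whose nimber is guaranteed to be either $\ast$ or $\ast k$ for some $k \geq 2$---and form its disjunctive sum with a single-edge \rsUndirGeog\ position $H$ of value $\ast$. By the Sprague-Grundy formula, $\nimber((G',s) + H) = \nimber(G',s) \oplus \ast$, which equals $0$ exactly when $\nimber(G',s) = \ast$ and is nonzero otherwise. Hence the winnability of the sum decides the underlying hard question; this is essentially the $(G',s) + \ast$ reformulation already highlighted in the preamble to Theorem \ref{thm:nimberHardness}.

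\textbf{Two-Token \rsUndirGeog.} The plan is to embed a disjunctive sum $(G_1, s_1) + (G_2, s_2)$ as a Two-Token \rsUndirGeog\ instance on the disjoint union $G_1 \sqcup G_2$ with tokens placed at $s_1$ and $s_2$. Because the components are disconnected, the ``unoccupied neighbor'' constraint is vacuous across components and each token's trajectory is independent, so the single-game dynamics exactly reproduce the sum. Hardness then inherits from the previous paragraph.

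\textbf{Swap Uno.} I would build on Demaine et al.'s polynomial-time reduction encoding a single \rsUndirGeog\ game as a single-hand \ruleset{Uncooperative Uno} instance: given two \rsUndirGeog\ positions, assemble a Swap Uno instance whose two hands independently encode the two positions, so that before any swap each player is forced to advance their own embedded game. The once-per-game swap is the pivot that aligns the play with disjunctive-sum choice rather than with two strictly independent games. The main obstacle I anticipate lies precisely here: I must verify that the single-swap restriction is still powerful enough to simulate the full freedom a player has in a disjunctive sum, where either component may be selected on every turn; some extra gadgetry inside the hands may be needed so that one swap already implies all the ``component selection'' power that disjunctive sums demand.

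\textbf{\rsUndirGeog\ with $k$-total passes.} I would model a pass as a move in an auxiliary ``counter'' component $P_k$ whose only move decrements an integer starting at $k$, so that the overall game is the disjunctive sum $(G,s) + P_k$. A direct recursion gives $\nimber(P_k) = k \bmod 2$. Hence for even $k$ the sum has nimber $\nimber(G,s)$ and winnability coincides with the tractable Fraenkel-Scheinerman-Ullman matching-based test. For odd $k$ the sum is $(G,s) + \ast$, whose winnability is $\outP$-valued precisely when $\nimber(G,s) = \ast$---which is exactly the distinction Theorem \ref{thm:nimberHardness} was engineered to make \cclass{PSPACE}-hard.
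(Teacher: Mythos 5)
Your treatment of the sum of two \rsUndirGeog\ games, of \ruleset{Two-Token Undirected Geography} on a disjoint union, and of \rsUndirGeog\ with $k$-Total Passes (via the auxiliary \ruleset{Pass} chain with $\nimber(P_k) = \ast(k \bmod 2)$, giving tractability for even $k$ and hardness via $\oplus\,\ast$ for odd $k$) is correct and matches the paper's argument essentially step for step.

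The \ruleset{Swap Uno} argument is where you go wrong, and you yourself flag uncertainty at exactly the right place. Your plan is to have the two hands independently encode two different \rsUndirGeog\ positions and to use the one-time swap to emulate disjunctive-sum component selection. That is not how the ruleset works, and it would not give the needed equivalence: in a disjunctive sum the mover picks a component on every turn, which a single swap cannot reproduce, so the ``extra gadgetry'' you anticipate would be fighting the structure of the game rather than exploiting it. The paper's route is different and simpler. Demaine et al.'s reduction sends a single \ruleset{Uncooperative Uno} instance to a single \rsUndirGeog\ position on a \emph{bipartite} graph whose two sides are the two hands; this map is an isomorphism of game trees, so it preserves nimbers, not just winnability. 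The swap move, because it advances the turn while leaving the token fixed, is exactly one extra pass, i.e.\ a $\ast$ summand. Hence \ruleset{Swap Uno} $=$ \ruleset{Uncooperative Uno} $+\ \ast \cong$ (bipartite) \rsUndirGeog\ $+\ \ast$, and hardness reduces to the $\ast$-vs-$\ast2$ question already established. The only non-obvious step is the converse direction: one must show that the hard bipartite \rsUndirGeog\ instances coming out of the Theorem~\ref{thm:nimberHardness} reduction can be realized as Uno bipartite graphs, i.e.\ admit a color/rank labeling with the matching property. That is the content of Appendix~\ref{Appendix:Uno}, which slightly perturbs the directed-edge gadget so the labeling exists, and it is what is genuinely missing from your sketch.
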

\begin{proof}
Let's start with the complexity analysis for solving \ruleset{Undirected Geography with $k$-Total Passes}.
We first consider a trivial game, to be referred to as \ruleset{Pass}.
Each position in \ruleset{Pass} is defined by an integer $k$.
The terminal position is the one with $k=0$.
For any $k>0$, there is a single move at position $k$ to position
$k-1$.
\ruleset{Pass} with $k=1$ is isomorphic to \ruleset{Nim} with a single pile of one item.
In general, the \ruleset{Pass} position $k$ is isomorphic
to \ruleset{Nim} with $k$ piles, each  containing a single item.
The Grundy value of \ruleset{Pass} at position $k$ is zero if $k$ is even and $\ast$ if $k$ is odd.
For any undirected graph $G$ and positive integer $k$, \ruleset{Undirected Geography with $k$-Total Passes} at position $((G,s),k)$ is isomorphic to
the game  defined by the disjunctive sum of two battlefield games:
(1) \ruleset{Undirected Geography} at position $(G,s)$ and (2) \ruleset{Pass} at position $k$.
Therefore:
\begin{itemize}
\item
When $k$ is odd (e.g., $k=1$),  by the Sprague-Grundy theory,
the Grundy value of position $((G,s),k)$
is equal to  $\nimber((G,s))  \oplus \ast$.
Thus, the current player in this game has NO winning strategy if and only if $\nimber((G,s)) = \ast$ in \ruleset{Undirected Geography}.
We conclude that the winnability of this game is \cclass{PSPACE}-complete to solve,
  because deciding whether or not $\nimber((G,s))$ is equal to $\ast$ (or $\ast 2$) is \cclass{PSPACE}-complete (Theorem \ref{Thm:Mys}).

\item When $k$ is even,  the Grundy value of position $((G,s),k)$ is equal to $\nimber((G,s))$, for which we can
distinguish ``Zero'' from ``Fuzzy''
in  polynomial time.
\end{itemize}

We can similarly characterize the complexity of the sum of two \ruleset{Undirected Geography} games:
Let $G'$ be the two-node graph with edge $(s',u')$, and let $G = (V,E)$ be an arbitrary undirected graph with $s\in V$.
Then, by Sprague-Grundy theory and the fact that $\nimber((G',s'))= \ast$, we have:
$\nimber((G,s)+(G',s')) = \nimber((G,s)) \oplus  \ast$.
Thus, determining the winnability of the disjunctive sum of two \ruleset{Undirected Geography} games is \cclass{PSPACE}-complete.

Because the disjunctive sum of two \ruleset{Undirected Geography} games is a special case of \ruleset{Two-Token Undirected Geography},
the \cclass{PSPACE}-hardness extends,
 and, in fact, even when we require that the underlying graph is connected.

Finally, \ruleset{Swap Uno} is the disjunctive sum of
\ruleset{Uncooperative Uno} and *.
In \cite{demaine2014uno}, Demaine {\em et al}  gave a simple reduction from \ruleset{Uncooperative Uno} to \ruleset{Undirected Geography} over bipartite graphs that isomorphically preserves players' options, setting up their polynomial-time solution.
In Appendix \ref{Appendix:Uno}, we then show that
\ruleset{Uno} bipartite graphs have the structural property
needed to encode the hard instances for Grundy value computation in \ruleset{Undirected Geography},
as required in our proof for Theorem \ref{thm:nimberHardness}.
Consequently,  the \cclass{PSPACE}-hardness of \ruleset{Swap Uno} follows from  that of \ruleset{Undirected Geography with One Pass}.
\end{proof}

\begin{proof}(of Theorems \ref{theo:WinningDecision}and \ref{Thm:Summary})
Both follow directly from Theorem \ref{Theo:GameOfGames} on the \cclass{PSPACE}-hardness of the disjunctive sum of two \ruleset{Undirected Geography} games, and
the sum of \ruleset{Nim} and \ruleset{Undirected Geography}.
\end{proof}

\begin{corollary}
Even in a sum where we fix the second game, so long as that game is ``Fuzzy'', the problem is still intractable.
\end{corollary}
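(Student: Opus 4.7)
The plan is to fix any Fuzzy second game $H$ with $\nimber(H)=\ast k$ for some constant $k \geq 1$ (the constant depending only on the fixed $H$), and argue that deciding winnability of $(G,s)+H$ as $(G,s)$ varies over \ruleset{Undirected Geography} instances is \cclass{PSPACE}-hard. By the Sprague-Grundy identity, $\nimber((G,s)+H)=\nimber((G,s))\oplus \ast k$, which vanishes exactly when $\nimber((G,s))=\ast k$. Thus the winnability of the sum reduces in polynomial time to deciding whether $\nimber((G,s))\neq \ast k$ in \ruleset{Undirected Geography}.

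To supply hardness for this distinction, I would invoke Theorem \ref{Thm:Mys} (Too ``Fuzzy'' to Classify) with the singleton target set $S=\{k\}$. That theorem guarantees, for any $S\subset [\Delta]$, that deciding whether $\nimber((G,s))\in S$ is \cclass{PSPACE}-hard; to accommodate the constant $k$, I would ensure the reduction produces instances in which the root $s$ has degree $\Delta \geq k$, using the poly-high nimber constructor of Section \ref{sec:Math} as the supplier of the needed high-nimber building blocks. Composing the polynomial-time reduction of Theorem \ref{Thm:Mys} with the Sprague-Grundy identity above then yields the desired \cclass{PSPACE}-hardness of winnability of $(G,s)+H$.

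The main obstacle is the dependence of the reduction on the parameter $k$: because the adversary may choose the fixed Fuzzy component to have any positive nim-value, the \ruleset{Undirected Geography} instances output by the reduction must be controllable so that their nimbers are either exactly $\ast k$ or some different value, rather than merely separating $\ast$ from the rest (as sufficed for Theorem \ref{Theo:GameOfGames}, which pinned the second game to value $\ast$). This additional flexibility is precisely what Theorem \ref{Thm:Mys} provides via the poly-high nimber construction, so the corollary follows with no new machinery beyond what the paper has already developed. Membership in \cclass{PSPACE} is automatic from the polynomial game-tree height of the sum of two short impartial games.
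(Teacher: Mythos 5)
Your proposal is correct and follows the same route as the paper: both invoke Theorem~\ref{Thm:Mys} with the singleton set $S = \{k\}$ where $\ast k = \nimber(H)$, and both use the Sprague-Grundy identity $\nimber((G,s)+H) = \nimber((G,s)) \oplus \ast k$ to translate winnability of the sum into membership of $\nimber((G,s))$ in $S$. Your added remarks about needing $\Delta \geq k$ and about PSPACE membership are correct elaborations of details the paper's one-line proof leaves implicit.
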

\begin{proof}
Using Theorem \ref{thm:nimSeperation}, we can simply have an $S$ that is just the Grundy value of the second game, which means that the problem is a winning one if and only if the nimber is not in $S$.
\end{proof}

On the tractable side, it follows from Theorem \ref{Thm:BAB} that:

\begin{corollary}
The sum of any collection of \ruleset{Undirected Geography} games over degree-three graphs are polynomial-time solvable.
\end{corollary}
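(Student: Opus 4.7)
The plan is to combine Theorem \ref{Thm:BAB} with the classical Sprague--Grundy identity to obtain winnability (and a winning move) in polynomial time. Let $(G_1,s_1),\ldots,(G_k,s_k)$ be an arbitrary collection of \ruleset{Undirected Geography} positions whose underlying graphs each have maximum degree three, and let $n = \sum_i |V(G_i)|$ denote the total input size.

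First, I would invoke Theorem \ref{Thm:BAB} on each component to compute $\nimber(G_i,s_i)$ in time polynomial in $|V(G_i)|$; doing this for every $i$ takes time polynomial in $n$. Note that because each $G_i$ has maximum degree three, the value $\nimber(G_i,s_i)$ is at most $3$, so every nimber obtained this way fits in $O(1)$ bits. Second, by the Sprague--Grundy identity (equation~(1) of the introduction), the Grundy value of the disjunctive sum satisfies
\begin{equation*}
\nimber\!\left(\textstyle\sum_{i=1}^{k}(G_i,s_i)\right) \;=\; \bigoplus_{i=1}^{k}\nimber(G_i,s_i),
\end{equation*}
which is computable from the individually-computed nimbers in linear time. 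The current player wins the sum if and only if this XOR is nonzero, settling the strategic-determination decision problem.

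To produce an actual winning move when the XOR is nonzero, I would follow the standard Sprague--Grundy recipe: find an index $i$ and a feasible option $(G_i',s_i')$ of $(G_i,s_i)$ whose nimber equals $\nimber(G_i,s_i)\oplus\bigl(\bigoplus_{j}\nimber(G_j,s_j)\bigr)$, so that replacing component $i$ by this option zeroes out the total XOR. The key point is that removing the occupied vertex from $G_i$ never increases any degree, so the resulting positions again lie in the degree-three regime and their nimbers are again polynomial-time computable by Theorem \ref{Thm:BAB}. Since there are at most $3$ options per component and $k \leq n$ components, this search completes in polynomial time.

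There is no real obstacle here: all the substantive work---the branch-and-bound reduction from Grundy values to matching-based winnability---was carried out inside Theorem \ref{Thm:BAB}, and the degree-three constraint is closed under the vertex-removal dynamics of \ruleset{Undirected Geography}. The corollary is essentially a direct composition of Theorem \ref{Thm:BAB} with the tractability of the nim-sum, in the spirit of the mathematical-computational divergence discussion earlier in the section.
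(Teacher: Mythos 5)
Your proof is correct and is exactly the argument the paper has in mind: the paper states this corollary without an explicit proof, simply noting that it "follows from Theorem \ref{Thm:BAB}," and your fleshing-out---compute each component's nimber via the degree-three branch-and-bound, take the nim-sum by Sprague--Grundy, and search the at-most-three options per component (which remain in the degree-three regime under vertex removal) to recover a winning move---is precisely the standard composition being invoked.
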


\section{\ruleset{Undirected Geography} with Polynomial Grundy Values} \label{sec:Math}

A fundamental problem in combinatorial game theory is that of \textit{nimber constructability}.
That is to say---when specialized to the game of our focus---the question of whether a game of \ruleset{Undirected Geography} can actually have a certain Grundy value (equivalent to determining the habitat for impartial games), and if it can, whether it can be succinctly encoded.
The existence is important primarily from a pure mathematical standpoint. The succinct encoding is needed for sums of games with high Grundy values to actually be shown intractable.
In Section \ref{Sec:HardClassifier}, we prove Theorem \ref{thm:nimSeperation} using the {\em (polynomially) succinct encoding} of high nimbers to support our complexity analyses.

\subsection{Logarithmic Intuition and Polynomial Challenge}

The habitat going up to the maximum degree in the graph is simple.
We will present it in the next construction  
to motivate our more advanced proof.

\begin{observation}[Logorithmic Nimber in Trees]
Through a simple tree structure, one may create an \ruleset{Undirected Geography} position with nimber $\ast n$, highest degree $n$, and $2^n$ vertices.
\label{obs:tree}
\end{observation}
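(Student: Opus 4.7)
The plan is to give a direct recursive construction: define a family of rooted trees $T_n$ with root $r_n$ so that $\nimber(T_n,r_n)=\ast n$, then verify the size and degree bounds by induction on $n$. Set $T_0$ to be a single isolated vertex $r_0$ (so $\nimber(T_0,r_0)=0$), and for $n\ge 1$ build $T_n$ by taking a fresh root $r_n$ together with disjoint copies of $T_0,T_1,\ldots,T_{n-1}$, and attaching $r_n$ by a single edge to each root $r_i$.

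To prove correctness, I would induct on $n$. The key observation is that $r_n$ is a cut vertex: each edge $(r_n,r_i)$ is the unique link between $r_n$ and the subtree formed by the copy of $T_i$. Thus, when the current player moves from $r_n$ along the edge to $r_i$, the vertex $r_n$ is deleted, the remaining copies of $T_j$ for $j\ne i$ become unreachable from the token at $r_i$, and the residual position is isomorphic to $(T_i,r_i)$. By the induction hypothesis $\nimber(T_i,r_i)=\ast i$, so the set of option-values from $(T_n,r_n)$ is exactly $\{0,1,\ldots,n-1\}$, whence $\nimber(T_n,r_n)=\mex\{0,1,\ldots,n-1\}=n$, i.e.\ $\ast n$, as required.

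For the structural bounds, the recurrence $|T_n|=1+\sum_{i=0}^{n-1}|T_i|$ with $|T_0|=1$ gives $|T_n|=2^n$ by a trivial induction (each step doubles the running sum). For the degree bound, note that in $T_n$ the new root $r_n$ has exactly $n$ neighbors, and attaching $r_i$ to $r_n$ raises its degree from $i$ (inside the copy of $T_i$) to $i+1\le n$; all other vertices keep the degree they had inside their subtree, which by induction is at most $n-1$. Hence the maximum degree of $T_n$ is exactly $n$.

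There is no real obstacle here: the only subtlety is justifying that options from $r_n$ really do evaluate to the Grundy values of the smaller $T_i$'s in isolation, which is immediate once one observes that $r_n$ is a cut vertex separating the subtrees and that disconnected components cannot be visited by the token. The construction is thus essentially the canonical ``nim-tree'' realization of $\ast n$, recorded here to contrast with the polynomial-size construction that the paper will develop later in Section~\ref{sec:Math}.
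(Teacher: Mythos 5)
Your proposal is correct and follows essentially the same approach as the paper: a recursive tree whose root has edges to roots of copies of all smaller trees, with the cut-vertex observation justifying that each option evaluates to $\ast i$, and the size recurrence $|T_n| = 1 + \sum_{i<n} 2^i = 2^n$. The paper's proof is terser but identical in substance; your write-up merely fills in the degree verification and the isolation argument more explicitly.
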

\begin{proof}
We can just recursively define a tree $t(n)$ which has moves to $t(n-1), t(n-2) \dots t(0)$. For the base case, $t(0)$ is a single isolated vertex, which clearly has Grundy value 0. The size of this tree is $2^0$ for the base case, and we can inductively assume each of the smaller $t(i)$ have $2^i$ vertices, so we have $2^{k+1}$ be $2^0 + 2^1 + \dots + 2^k + 1$, where the final 1 is the new root.
\end{proof}

Thus, we can get a poly-log nimber using a polynomial number of vertices (and certainly any constant nimber, which we will use for gadgets up to $\ast 3)$. 

The exponential size of the basic construction comes from the fact that we repeat each tree in each subtree. This is necessary, since if we attempt to combine the subtrees, being able to move "back up" those trees could change the Grundy values. 
Note here that in \ruleset{Generalized Geography}, one can use directed edges to prevent undesired ``up'' moves to share the lower nimber nodes.
Thus, it is in fact straightforward to achieve nimber $n$ with $n+1$ vertices.
We can't just replace these with our directed-edge gadget from Figure
\ref{fig:directedEdgeGadget}, because the inner degree on those is constant and will prevent us to get arbitrarily large nimbers.  We need a more sophisticated mechanism to get nimbers of any size.

\begin{figure}[h!]\begin{center}
    \scalebox{.9}{
    \begin{tikzpicture}[node distance = 2cm and 1.2cm]

        \node[draw, rectangle] (Nn) {$N_n$};
        \node[draw, rectangle] (star1n) [above left=of Nn] {$\ast$};
        \node[draw, rectangle] (star2n) [below left=of Nn] {$\ast 2$};
        \node[draw, rectangle] (Nnm1) [below left=of star1n] {$N_{n-1}$};
        \node[draw, rectangle] (star1nm1) [above left=of Nnm1] {$\ast$};
        \node[draw, rectangle] (star2nm1) [below left=of Nnm1] {$\ast 2$};
        \node[rectangle] (dots) [below left=of star1nm1] {$\cdots$};
        \node[draw, rectangle] (star17) [above left=of dots] {$\ast$};
        \node[draw, rectangle] (N6) [below left=of star17] {$N_5$};
        \node[draw, rectangle] (star27) [below right=of N6] {$\ast 2$};
        \node[draw, rectangle] (star16) [above left=of N6] {$\ast$};
        \node[draw, rectangle] (N5) [below left=of star16] {$N_4$};
        \node[draw, rectangle] (star26) [below right=of N5] {$\ast 2$};

        \path[-]
            (Nn) edge (star1n)
            (Nn) edge (star2n)
            (Nn) edge (Nnm1)
            (Nnm1) edge (star1nm1)
            (Nnm1) edge (star1n)
            (Nnm1) edge (star2nm1)
            (Nnm1) edge (star2n)
            (N6) edge [bend left=10] (Nnm1)
            (N6) edge [bend left=15] (Nn)
            (N6) edge (star17)
            (N6) edge (star27)
            (N6) edge (star1n)
            (N6) edge (star2n)
            (N6) edge (star1nm1)
            (N6) edge (star2nm1)
            (N6) edge (star16)
            (N6) edge (star26)
            (N5) edge (star16)
            (N5) edge (star26)
            (N5) edge (N6)
            (N5) edge [bend right=15] (Nnm1)
            (N5) edge (star17)
            (N5) edge (star27)
            (N5) edge (star1nm1)
            (N5) edge (star2nm1)
            (N5) edge [bend left = 15] (Nn)
            (N5) edge (star1n)
            (N5) edge (star2n)
        ;
    \end{tikzpicture}}
    \caption{The overall schema for Nimber Constructability.  The value of the position with the token at $N_n$ is $\ast n$.}
    \label{fig:nimberConstructibilitySchema}
\end{center}\end{figure}
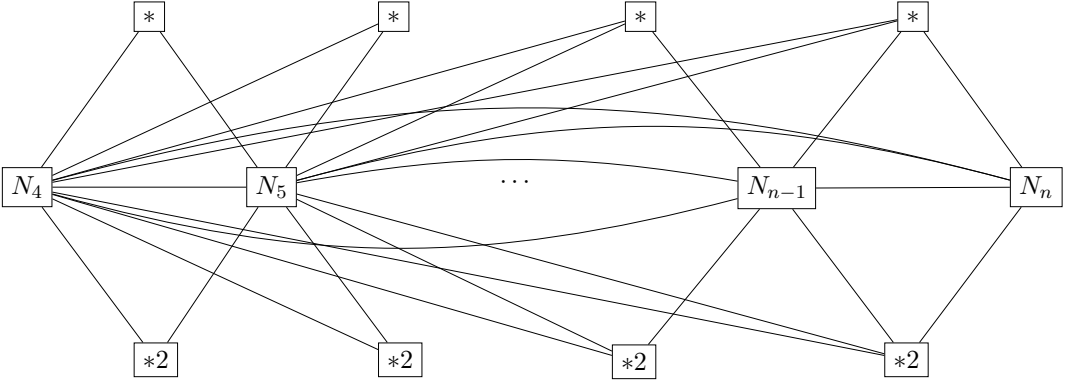

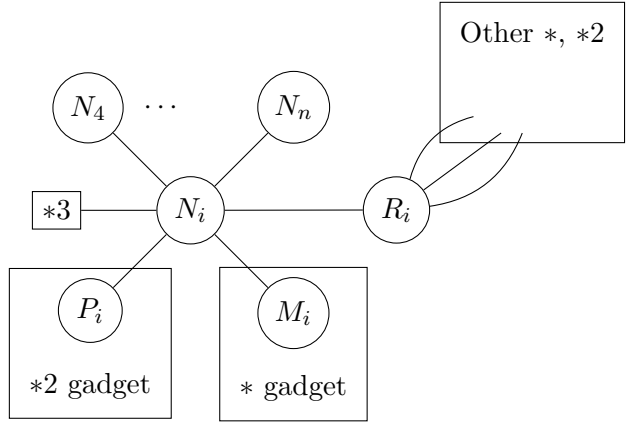
\begin{figure}[h!]\begin{center}
    \begin{tikzpicture}[node distance = 1cm]
        \node[draw, circle] (Ni) {$N_i$};
        \node[draw, circle] (N5) [above left=of Ni] {$N_4$};
        \node[] (dotsN) [right of=N5] {$\cdots$};
        \node[draw, circle] (Nn) [above right=of Ni] {$N_n$};
        \node[draw, rectangle] (star3) [left=of Ni] {$\ast 3$};
        \node[draw, circle] (Ri) [below right=of Nn] {$R_i$};
        \node[] (higherStars) [above right=of Ri] {\phantom{monkey}};
        \node[] (hiStarsLabel) [above of=higherStars] {Other $\ast$, $\ast 2$};
        \node[draw, rectangle, fit=(higherStars) (hiStarsLabel)] {};
        \node[draw, circle] (Pk) [below left=of Ni] {$P_i$};
        \node[] (PkLabel) [below of=Pk] {$\ast 2$ gadget};
        \node[draw, rectangle, fit=(Pk) (PkLabel)] {};
        \node[draw, circle] (Mk) [below right=of Ni] {$M_i$};
        \node[] (MkLabel) [below of=Mk] {$\ast$ gadget};
        \node[draw, rectangle, fit=(Mk) (MkLabel)] {};

        \path[-]
            (Ni) edge (N5)
            (Ni) edge (Nn)
            (Ni) edge (star3)
            (Ni) edge (Ri)
            (Ri) edge [bend left] (higherStars)
            (Ri) edge (higherStars)
            (Ri) edge [bend right] (higherStars)
            (Ni) edge (Pk)
            (Ni) edge (Mk)
        ;
    \end{tikzpicture}
    \caption{Each vertex $N_i$ is connected to all other $N$-vertices, as well as its own $\ast 3$ gadget, it's own $\ast 2$ gadget, it's own $\ast$ gadget, and it's own $R_i$ vertex (with value 0).  $R_i$ is also connected to the $P_k$ and $M_k$ gadgets where $k > i$ as shown in the following figures.}
    \label{fig:Ni}
\end{center}\end{figure}

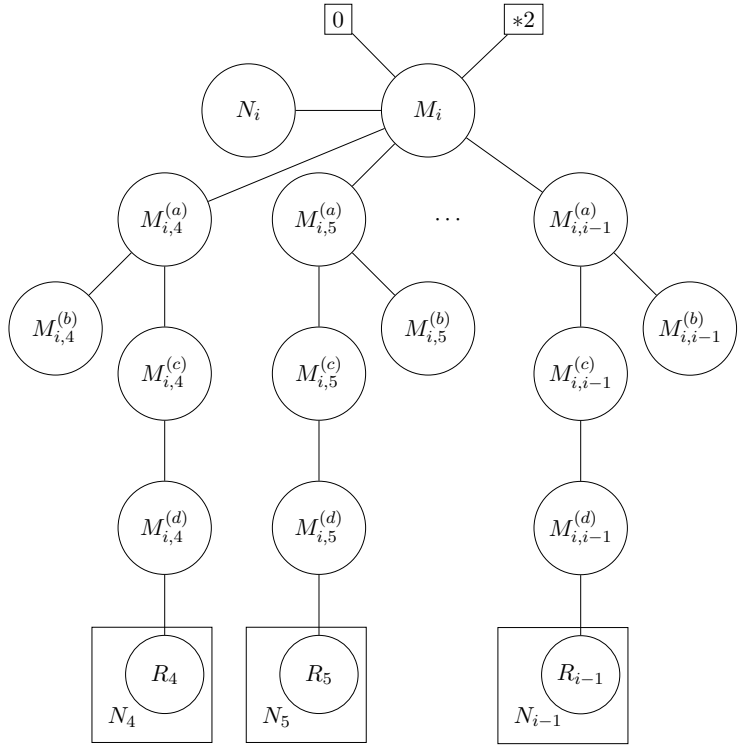
\begin{figure}[h!]\begin{center}
    \scalebox{.8}{
    \begin{tikzpicture}[node distance = 1cm, gNode/.style = {circle, minimum width = 1.55cm}]
        \node[draw, gNode] (Mi) {$M_i$};
        \node[draw, rectangle] (zero) [above left=of Mi] {0};
        \node[draw, rectangle] (star2) [above right=of Mi] {$\ast 2$};
        \node[draw, gNode] (Ni) [below left=of zero] {$N_i$};
        \node[draw, gNode] (Mi5a) [below left=of Mi] {$M_{i,5}^{(a)}$};
        \node[draw, gNode] (Mi4a) [left=of Mi5a] {$M_{i,4}^{(a)}$};
        \node[draw, gNode] (Mi4b) [below left=of Mi4a] {$M_{i,4}^{(b)}$};
        \node[draw, gNode] (Mi4c) [below=of Mi4a] {$M_{i,4}^{(c)}$};
        \node[draw, gNode] (Mi4d) [below=of Mi4c] {$M_{i,4}^{(d)}$};
        \node[draw, circle, minimum width = 1.3cm] (R4) [below=of Mi4d] {$R_4$};
        \node[] (N4Label) [below left of=R4] {$N_4$};
        \node[draw, rectangle, fit=(R4) (N4Label)] {};

        \node[draw, gNode] (Mi5b) [below right=of Mi5a] {$M_{i,5}^{(b)}$};
        \node[draw, gNode] (Mi5c) [below=of Mi5a] {$M_{i,5}^{(c)}$};
        \node[draw, gNode] (Mi5d) [below=of Mi5c] {$M_{i,5}^{(d)}$};
        \node[draw, circle, minimum width = 1.3cm] (R5) [below=of Mi5d] {$R_5$};
        \node[] (N5Label) [below left of=R5] {$N_5$};
        \node[draw, rectangle, fit=(R5) (N5Label)] {};

        \node[] (dots) [right=of Mi5a] {$\cdots$};

        \node[draw, gNode] (Mika) [right=of dots] {$M_{i,i-1}^{(a)}$};
        \node[draw, gNode] (Mikb) [below right=of Mika] {$M_{i,i-1}^{(b)}$};
        \node[draw, gNode] (Mikc) [below=of Mika] {$M_{i,i-1}^{(c)}$};
        \node[draw, gNode] (Mikd) [below=of Mikc] {$M_{i,i-1}^{(d)}$};
        \node[draw, circle, minimum width = 1.3cm] (Rk) [below=of Mikd] {$R_{i-1}$};
        \node[] (NkLabel) [below left of=Rk] {$N_{i-1}$};
        \node[draw, rectangle, fit=(Rk) (NkLabel)] {};

        \path[-]
            (Mi) edge (zero)
            (Mi) edge (star2)
            (Mi) edge (Ni)
            (Mi) edge (Mi4a)
            (Mi4a) edge (Mi4b)
            (Mi4a) edge (Mi4c)
            (Mi4c) edge (Mi4d)
            (Mi4d) edge (R4)
            (Mi) edge (Mi5a)
            (Mi5a) edge (Mi5b)
            (Mi5a) edge (Mi5c)
            (Mi5c) edge (Mi5d)
            (Mi5d) edge (R5)
            (Mi) edge (Mika)
            (Mika) edge (Mikb)
            (Mika) edge (Mikc)
            (Mikc) edge (Mikd)
            (Mikd) edge (Rk)
        ;
    \end{tikzpicture}}
    \caption{A $\ast$ gadget, which has value $\ast$ unless a lower-rank-$N$ vertex is removed.  If all of $N_k$ vertices exist off the bottom, then moving to any one of them from $R_k$ results in a 0-board (by Lemma \ref{lem:rToN}).  This causes the move $M_i \rightarrow M_{i,k}^{(a)}$ to be equal to $\ast 2$.  Otherwise, one $R_k$ has value 0, so $M_i \rightarrow M_{i,k}^{(a)}$ is a move to $\ast$, so $M_i$ instead has value $\ast 3$.}
    \label{fig:Mi}
\end{center}\end{figure}

\begin{figure}[h!]\begin{center}
    \scalebox{.8}{
    \begin{tikzpicture}[node distance = 1cm, gNode/.style = {circle, minimum width = 1.46cm}]
        \node[draw, circle, minimum width = 1.5cm] (Pi) {$P_i$};
        \node[draw, rectangle] (zero) [above left=of Pi] {0};
        \node[draw, rectangle] (star) [above right=of Pi] {$\ast$};
        \node[draw, gNode] (Ni) [below left=of zero] {$N_i$};
        \node[draw, gNode] (Pi5a) [below left=of Pi] {$P_{i,5}^{(a)}$};
        \node[draw, gNode] (Pi4a) [left=of Pi5a] {$P_{i,4}^{(a)}$};
        \node[draw, gNode] (Pi4b) [below left=of Pi4a] {$P_{i,4}^{(b)}$};
        \node[draw, gNode] (Pi4c) [below=of Pi4a] {$P_{i,4}^{(c)}$};
        \node[draw, gNode] (Pi4d) [below left=of Pi4c] {$P_{i,4}^{(d)}$};
        \node[draw, gNode] (Pi4e) [below=of Pi4c] {$P_{i,4}^{(e)}$};
        \node[draw, gNode] (Pi4f) [below=of Pi4e] {$P_{i,4}^{(f)}$};
        \node[draw, gNode] (R4) [below=of Pi4f] {$R_4$};
        \node[] (N4Label) [below left of=R4] {$N_4$};
        \node[draw, rectangle, fit=(R4) (N4Label)] {};

        \node[draw, gNode] (Pi5b) [below right=of Pi5a] {$P_{i,5}^{(b)}$};
        \node[draw, gNode] (Pi5c) [below=of Pi5a] {$P_{i,5}^{(c)}$};
        \node[draw, gNode] (Pi5d) [below right=of Pi5c] {$P_{i,5}^{(d)}$};
        \node[draw, gNode] (Pi5e) [below=of Pi5c] {$P_{i,5}^{(e)}$};
        \node[draw, gNode] (Pi5f) [below=of Pi5e] {$P_{i,5}^{(f)}$};
        \node[draw, gNode] (R5) [below=of Pi5f] {$R_5$};
        \node[] (N5Label) [below left of=R5] {$N_5$};
        \node[draw, rectangle, fit=(R5) (N5Label)] {};

        \node[] (dots) [right=of Pi5a] {$\cdots$};

        \node[draw, circle] (Pika) [right=of dots] {$P_{i,i-1}^{(a)}$};
        \node[draw, circle] (Pikb) [below right=of Pika] {$P_{i,i-1}^{(b)}$};
        \node[draw, circle] (Pikc) [below=of Pika] {$P_{i,i-1}^{(c)}$};
        \node[draw, circle] (Pikd) [below right=of Pikc] {$P_{i,i-1}^{(d)}$};
        \node[draw, circle] (Pike) [below=of Pikc] {$P_{i,i-1}^{(e)}$};
        \node[draw, circle] (Pikf) [below=of Pike] {$P_{i,i-1}^{(f)}$};
        \node[draw, gNode] (Rk) [below=of Pikf] {$R_{i-1}$};
        \node[] (NkLabel) [below left of=Rk] {$N_{i-1}$};
        \node[draw, rectangle, fit=(Rk) (NkLabel)] {};

        \path[-]
            (Pi) edge (zero)
            (Pi) edge (star)
            (Pi) edge (Ni)
            (Pi) edge (Pi4a)
            (Pi4a) edge (Pi4b)
            (Pi4a) edge (Pi4c)
            (Pi4c) edge (Pi4d)
            (Pi4c) edge (Pi4e)
            (Pi4e) edge (Pi4f)
            (Pi4f) edge (R4)
            (Pi) edge (Pi5a)
            (Pi5a) edge (Pi5b)
            (Pi5a) edge (Pi5c)
            (Pi5c) edge (Pi5d)
            (Pi5c) edge (Pi5e)
            (Pi5e) edge (Pi5f)
            (Pi5f) edge (R5)
            (Pi) edge (Pika)
            (Pika) edge (Pikb)
            (Pika) edge (Pikc)
            (Pikc) edge (Pikd)
            (Pikc) edge (Pike)
            (Pike) edge (Pikf)
            (Pikf) edge (Rk)
        ;
    \end{tikzpicture}}
    \caption{The $\ast 2$ gadget}
    \label{fig:Pi}
\end{center}\end{figure}

\subsection{Polynomial-High Nimber Constructability}

To attain nimber $n$, we create $n$ vertices $N_4, \ldots, N_n$, which exist in a clique, as in Figure \ref{fig:nimberConstructibilitySchema}.  Each $N_i$ has Grundy value $\ast i$ so long as all $N_k$  with $k < i$ remain.  (We will say these vertices have a lower \textit{rank}.)  We argue that starting with the token on vertex $N_n$ is a $\ast n$-position.  (We do not have $N_0$, $N_1$, $N_2$, or $N_3$, since we use 0 through $\ast 3$ as mechanisms in our structure to ensure the player is unable to move ``up'' in rank.)

After any move from $N_n$ to $N_k$, we no longer want vertices with higher-rank than $k$ to retain their nimber value.  To attain this,
we create $\ast$ and $\ast 2$ gadgets for each $N_i$, which have their named values if and only if no vertex of a lower rank has been removed from the graph. As such, a later move to a higher-rank $N_i$ will have value
either $\ast$ or $\ast 2$ instead of $\ast i$.

We present these designs in Figures
\ref{fig:Ni}, \ref{fig:Mi}, and \ref{fig:Pi} for illustration. In addition to the figures, we also include a formal algorithmic formulation in Algorithm \ref{alg:nimberGeneration}:

\begin{algorithm}[h]
\If{$n \leq 3$}{
      Return a tree representation of $\ast n$ with the corresponding start position
   }
\Else{
       Let the graph we are working on be $G = (V,E)$

       Add vertex $N_4$ and $R_4$ to $V$

       \For{$i = 5$ to $n$}{
           Add vertices $N_i, M_i, P_i, R_i, M_i^{(0)}$, and $P_i^{(0)}$ to $V$

           Add edges $(N_i, R_i), (M_i, M_i^{(0)})$, and $(P_i, P_i^{(0)})$ to $E$

           Create tree versions of $\ast 3$ to $V$ and $E$, and connect $N_i$ to it

           \For{$j = 4$ to $i-1$}{
               Add vertices $P_{ij}^{(a)}, P_{ij}^{(b)}, P_{ij}^{(c)}, P_{ij}^{(d)}, P_{ij}^{(e)}, P_{ij}^{(f)}, M_{ij}^{(a)}, M_{ij}^{(b)}, M_{ij}^{(c)}$, and $M_{ij}^{(d)}$ to $V$

               Add edges $(P_i, P_{ij}^{(a)}), (P_{ij}^{(a)}, P_{ij}^{(b)}), (P_{ij}^{(a)},  P_{ij}^{(c)}), (P_{ij}^{(c)}, P_{ij}^{(d)})$, and $(P_{ij}^{(c)},  P_{ij}^{(e)})$ to $E$

               Add edges $(P_{ij}^{(e)}, P_{ij}^{(f)}) (P_{ij}^{(f)}, R_j)$, and $(N_i, N_j)$ to $E$

               $(M_i, M_{ij}^{(a)}), (M_{ij}^{(a)}, M_{ij}^{(b)}), (M_{ij}^{(a)}, M_{ij}^{(c)}), (M_{ij}^{(c)}, M_{ij}^{(d)})$ and $(M_{ij}^{(d)}, R_j)$ to $E$
           }
           Add a tree representation of $\ast 2$ to $V$ and $E$, and add an edge to $M_i$ to $E$

           Add a tree representations of $\ast$ to $V$ and $E$, and an edge connect this to $P_i$
       }
       Add the tree versions of $\ast$ and $\ast 2$ the edge to $N_4$ to $V$ and $E$

       Return $G$ and starting node $N_n$
}
\caption{Nimber Generation Algorithm}
\label{alg:nimberGeneration}
\end{algorithm}

\begin{observation}
The Nimber Generation Algorithm runs in polynomial time.
\end{observation}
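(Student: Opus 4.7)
The plan is to count the resources consumed by each step of Algorithm \ref{alg:nimberGeneration} and argue that they sum to a polynomial in $n$. First I would dispatch the base case $n\leq 3$: the tree representations of $0, \ast, \ast 2, \ast 3$ are of constant size by Observation \ref{obs:tree} (each $t(k)$ for $k\leq 3$ has at most $2^3=8$ vertices), so this case takes $O(1)$ time.

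For $n>3$, I would analyze the two nested loops. The outer loop runs from $i=5$ to $n$, i.e., $O(n)$ iterations. Inside each iteration, outside the inner loop, we add $O(1)$ vertices ($N_i, M_i, P_i, R_i, M_i^{(0)}, P_i^{(0)}$) and a constant-size tree representation of $\ast, \ast 2, \ast 3$ (again of size $O(1)$ by Observation \ref{obs:tree}, since these nimbers are bounded by a constant). The inner loop then runs from $j=4$ to $i-1$, so $O(n)$ iterations, and each iteration adds a constant number of vertices ($P_{ij}^{(a)},\ldots, P_{ij}^{(f)}, M_{ij}^{(a)},\ldots, M_{ij}^{(d)}$) and a constant number of edges. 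Each such addition is a $O(1)$-time operation on the incidence list representation of $G$.

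Summing these contributions gives $O(n)\cdot\bigl(O(1)+O(n)\cdot O(1)\bigr)=O(n^2)$ work, hence $O(n^2)$ vertices and edges in the resulting graph $G$. Since each individual add-vertex and add-edge operation can be executed in polynomial time (in fact, in constant time), the overall running time of the algorithm is $O(n^2)$, which is polynomial in $n$ (and in the bit-length of $n$, since the construction output has size polynomial in $n$). There is no genuine obstacle to overcome here; the point that deserves emphasis is simply that the tree gadgets for small constant nimbers ($\ast,\ast 2,\ast 3$) remain of constant size, so Observation \ref{obs:tree}'s exponential blowup does not propagate through the construction. This contrasts with the naive approach of recursively applying Observation \ref{obs:tree} all the way up to $\ast n$, which would produce a graph of size $2^{\Omega(n)}$; the whole purpose of the schema in Figures \ref{fig:nimberConstructibilitySchema}--\ref{fig:Pi} is to avoid that blowup by re-using the shared $N_4,\ldots,N_{n}$ clique while policing direction via the $\ast, \ast 2$ gadgets.
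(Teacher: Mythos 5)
Your proof is correct and follows essentially the same approach as the paper's own: count the loop iterations ($O(n)$ outer, $O(n)$ inner), note each line in the loop body runs in constant time, and conclude $O(n^2)$. Your version is somewhat more detailed, in particular in explicitly flagging that the tree gadgets from Observation \ref{obs:tree} are only used for constant nimbers $\ast, \ast 2, \ast 3$ and hence stay constant-sized, which is a helpful clarification but not a different argument.
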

\begin{proof}
By simple observation, each loop has no more than $n$ iterations, and there are never more than two nested loops. Each line in each loop runs in constant time. The running time, in fact, is $O(n^2)$.
\end{proof}

\begin{lemma}[Grounded]
In a game where the only vertices removed are some $N_i$ vertices along with some of their associated $M_i$ and $M_{ip}$ vertices or $P_i$ and $P_{ip}$ vertices, then traversing edge $(N_k, R_k)$ always results in a move to 0.
\label{lem:baseZero}
\end{lemma}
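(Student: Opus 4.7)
The plan is to directly evaluate the Grundy value of the position $(G', R_k)$ that arises immediately after the move $N_k \to R_k$ and show it equals $0$. By the hypothesis, the only vertices deleted so far are some $N_i$'s together with certain $M_i$ and $M_{i,p}$ vertices, or $P_i$ and $P_{i,p}$ vertices; in particular every $R_j$ remains intact, and for each $i$, the ``in-between'' chain vertices $M_{i,k}^{(a)},\ldots,M_{i,k}^{(d)}$ and $P_{i,k}^{(a)},\ldots,P_{i,k}^{(f)}$ are either all still present or else the entry vertex from $R_k$ (namely $M_{i,k}^{(d)}$ or $P_{i,k}^{(f)}$) is itself absent, in which case the corresponding option is simply unavailable from $R_k$.

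Next I will characterize the options from $R_k$. By Figures~\ref{fig:Ni}, \ref{fig:Mi}, and~\ref{fig:Pi}, the only neighbors of $R_k$ in $G$ are $N_k$ (just removed) together with the vertices $M_{i,k}^{(d)}$ and $P_{i,k}^{(f)}$ for indices $i > k$. A position has Grundy value $0$ exactly when every option lands on a position of \emph{positive} nimber, so it suffices to show that each surviving option from $R_k$ leads to a position of value $\ast$.

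I then evaluate each option via a short forced sub-chain. Consider $R_k \to M_{i,k}^{(d)}$: with $R_k$ removed, the only option from $M_{i,k}^{(d)}$ is $M_{i,k}^{(c)}$, and the only option from $M_{i,k}^{(c)}$ is $M_{i,k}^{(a)}$. At $M_{i,k}^{(a)}$, the pendant neighbor $M_{i,k}^{(b)}$ is terminal (it has no other neighbor in $G$), so $0$ is always among the option-nimbers at $M_{i,k}^{(a)}$, making its nimber strictly positive regardless of whether the optional edge back up to $M_i$ is present. Backtracking via $\mex(\{\mathrm{positive}\}) = 0$ at $M_{i,k}^{(c)}$ and then $\mex(\{0\}) = \ast$ at $M_{i,k}^{(d)}$ gives the claimed value $\ast$. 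The same logic applied to the longer forced chain $P_{i,k}^{(f)} \to P_{i,k}^{(e)} \to P_{i,k}^{(c)}$, with the pendant $P_{i,k}^{(d)}$ at $P_{i,k}^{(c)}$ playing the role of $M_{i,k}^{(b)}$, yields value $\ast$ for $R_k \to P_{i,k}^{(f)}$ as well.

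Since every option from $R_k$ has value $\ast$, I conclude $(G', R_k) = \mex(\{\ast,\ldots,\ast\}) = 0$, establishing the lemma. The main obstacle is not the arithmetic but checking the hypothesis' implications for \emph{which} interior chain vertices may have been deleted: specifically, verifying that the pendant leaves $M_{i,k}^{(b)}$ and $P_{i,k}^{(d)}$ are guaranteed to persist (they lie outside any path an opponent could have traversed under the hypothesis), and that no prior move along a sub-chain can create an unexpected branching at an interior vertex that would break a ``forced'' step in the evaluation above.
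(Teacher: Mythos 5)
Your proof is correct and follows essentially the same approach as the paper: trace the forced sub-chains $R_k \to M_{i,k}^{(d)} \to M_{i,k}^{(c)} \to M_{i,k}^{(a)}$ and $R_k \to P_{i,k}^{(f)} \to P_{i,k}^{(e)} \to P_{i,k}^{(c)}$, use the pendant leaves $M_{i,k}^{(b)}$ and $P_{i,k}^{(d)}$ to anchor $M_{i,k}^{(a)}$ and $P_{i,k}^{(c)}$ as nonzero, back up the chain via $\mex$ to show each option from $R_k$ is $\ast$, and conclude $R_k = 0$. Your closing caveat about verifying which interior chain vertices could have been removed under the hypothesis is a reasonable thing to flag, but the paper's own proof is equally silent on it; in fact, under the hypothesis none of the $R_j$ are removed, so a chain can only have been entered from its $M_i$ or $P_i$ end and any partial traversal leaves the token stranded inside the chain rather than at $N_k$, which resolves the concern.
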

\begin{proof}
Any move to an $M_{i,k}^{(a)}$ vertex has nimber at least $\ast$ since it has a move to $M_{i,k}^{(b)}$ which is 0.  Thus, moving $M_{i,k}^{(c)}$ from $M_{i,k}^{(d)}$ results in a 0, so all $M_{i,k}^{(d)}$ moves from $R_k$ result in $\ast$.
The same is true of moving from $R_k$ to $P_{i,k}^{(f)}$ because $P_{i,k}^{(c)}$ is also non-zero.  Since $R_k$ only has $\ast$-options, it's value is zero when moving from $N_k$.
\end{proof}

\begin{lemma}
In a game where the only vertices removed are some $N_i$ vertices along with some of their associated $M_i$ and $M_{ip}$ vertices or $P_i$ and $P_{ip}$ vertices, then traversing the edge $(R_p, N_p)$ is a move to 0.
\label{lem:rToN}
\end{lemma}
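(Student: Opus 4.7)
The plan is to show that after the move $R_p \to N_p$ (which removes $R_p$ from the board), every option from $N_p$ has non-zero Grundy value. Once that is established, the mex of $N_p$'s option set is $0$, which is exactly the statement of the lemma.

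First I would enumerate the neighbors of $N_p$ in the post-move position. Consulting Figure \ref{fig:Ni}, after $R_p$ is removed the remaining options from $N_p$ are (i) the root of its private $\ast 3$ gadget, (ii) the $\ast$ gadget $M_p$ (if not already removed by the allowed removals in the hypothesis), (iii) the $\ast 2$ gadget $P_p$ (likewise), and (iv) each surviving clique neighbor $N_j$ with $j \neq p$. Then I would argue each of these options is a non-zero position. The $\ast 3$ gadget has value $\ast 3$ by construction. Both $M_p$ and $P_p$ have a singleton-leaf neighbor of value $0$ attached directly to them (the ``$0$'' nodes in Figures \ref{fig:Mi} and \ref{fig:Pi}), so the very existence of a move to $0$ forces their Grundy values to be non-zero. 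For each surviving $N_j$, I invoke Lemma \ref{lem:baseZero} on the option $N_j \to R_j$: since the collection of removed vertices remains of the form allowed by that lemma (only $N$'s and their associated $M$/$P$/$R$ vertices have been touched), Lemma \ref{lem:baseZero} applies and this option yields nimber $0$; hence $N_j$ itself has a $0$-option and is non-zero.

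Finally I would take the mex over $\{\nimber(M_p),\nimber(P_p),\ast 3,\ldots\}$: since $0$ is absent from this set, $\mex = 0$, so the position reached by $R_p \to N_p$ has Grundy value $0$, as required.

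The main subtlety, and what I expect to be the only real obstacle, is verifying that the removal of $R_p$ does not disturb the invocations of Lemma \ref{lem:baseZero} used in step (iv). This comes down to checking that $R_p$'s neighborhood is disjoint from the inductive machinery those invocations rely upon: $R_p$ is adjacent only to $N_p$ and to the tail vertices $M_{i,p}^{(d)}$ and $P_{i,p}^{(f)}$, so for every $j \neq p$ the chain of $\ast$-valued options that Lemma \ref{lem:baseZero} uses at level $j$ sits entirely within the level-$j$ gadgetry and is structurally untouched by the loss of $R_p$. With that verification in hand, the argument above closes.
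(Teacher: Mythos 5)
Your proof is correct and takes essentially the same route as the paper's: enumerate the options from $N_p$ after $R_p$ has been removed, show each is non-zero (the $\ast 3$ gadget by construction, $M_p$ and $P_p$ because each has a direct $0$-leaf, and the surviving clique neighbors $N_j$ because each still has a $0$-option to $R_j$ via Lemma \ref{lem:baseZero}), and conclude the mex is $0$. If anything, you are slightly more careful than the paper, which silently reuses Lemma \ref{lem:baseZero} without remarking that the removal of $R_p$ is not literally covered by that lemma's stated hypothesis; your closing paragraph supplies the missing observation that $R_p$'s removal cannot disturb the level-$j$ chains that Lemma \ref{lem:baseZero}'s argument lives in. The only small omission relative to the paper is the special case $p=4$, where $N_4$ also has its own $\ast$ and $\ast 2$ tree gadgets as additional non-zero options, but this does not affect the conclusion.
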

\begin{proof}
There is a move to $\ast 3$ (and, if $p = 4$, $\ast 2$ and $\ast $), moves to $M_p$ and $P_p$, which both have a move to 0 by construction, and to various other $N_i$, which have moves to $R_i$, which are moves to 0 by Lemma \ref{lem:baseZero}.
\end{proof}

\begin{lemma}[Skip $*2$]
So long as only $N_i$ vertices are removed from the graph, the position from moving from $N_k$ to $M_k$ has nim-value $\ast$ if and only if no $N_i$ have been removed with $i < k$. Otherwise, it is $\ast 3$.
\label{lem:mValue}
\end{lemma}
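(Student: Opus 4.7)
The plan is to evaluate the position at $M_k$ after the move $N_k \to M_k$ by propagating nimbers upward along each chain $M_k \to M_{k,j}^{(a)} \to M_{k,j}^{(c)} \to M_{k,j}^{(d)} \to R_j$ for $j \in \{4,\ldots,k-1\}$, and then aggregating all of the options at $M_k$. The natural case split occurs at $R_j$, depending only on whether $N_j$ has survived.

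First I would compute the value of $R_j$ reached from $M_{k,j}^{(d)}$. After our partial traversal the only vertices removed are some $N_i$'s together with the $M$-chain vertices $M_k, M_{k,j}^{(a)}, M_{k,j}^{(c)}, M_{k,j}^{(d)}$, all of which are associated with $N_k$; hence the hypotheses of Lemmas \ref{lem:baseZero} and \ref{lem:rToN} remain intact. The internal computation inside the proof of Lemma \ref{lem:baseZero} gives that every move from $R_j$ into a surviving tail-vertex $M_{i,j}^{(d)}$ or $P_{i,j}^{(f)}$ (with $i>j$, $i\neq k$) has value $\ast$. When $N_j$ is still present, Lemma \ref{lem:rToN} additionally provides the move $R_j \to N_j$ of value $0$, so $R_j = \mex\{0, \ast\} = \ast 2$; when $N_j$ has been removed, all options are $\ast$ and $R_j = \mex\{\ast\} = 0$.

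Next I would propagate these values back up the chain. Each of $M_{k,j}^{(d)}$ and $M_{k,j}^{(c)}$ has a unique remaining upward option, so their values alternate deterministically via mex. A direct calculation then yields the value of $M_{k,j}^{(a)}$ reached from $M_k$: when $N_j$ is still present it is $\mex\{0, \ast\} = \ast 2$, using the leaf $M_{k,j}^{(b)}$ for $0$ and the just-computed $M_{k,j}^{(c)}$ for $\ast$; when $N_j$ has been removed, both options evaluate to $0$, giving value $\mex\{0\} = \ast$.

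Finally I would aggregate at $M_k$. Its remaining options contribute $0$ from the attached zero vertex, $\ast 2$ from the attached $\ast 2$-gadget, and either $\ast 2$ or $\ast$ from each $M_{k,j}^{(a)}$ according to whether $N_j$ is present. If no $N_i$ with $i<k$ has been removed then every $M_{k,j}^{(a)}$ contributes $\ast 2$, the option set collapses to $\{0, \ast 2\}$, and $\mex = \ast$; if some $N_{j^{\ast}}$ with $j^{\ast}<k$ is absent, then the corresponding $M_{k,j^{\ast}}^{(a)}$ contributes $\ast$, the option set becomes $\{0, \ast, \ast 2\}$, and $\mex = \ast 3$. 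The main obstacle is bookkeeping across the nested recursion: one must confirm that Lemmas \ref{lem:baseZero} and \ref{lem:rToN} keep applying after partial gadget traversal, which holds precisely because every non-$N_i$ vertex removed along our chain is an $M$-vertex associated to the single $N_k$ we started from.
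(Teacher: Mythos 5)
Your proof is correct and follows essentially the same chain-propagation strategy as the paper: determine the value of $R_j$ (zero exactly when $N_j$ has been removed, nonzero otherwise), propagate via mex up through $M_{k,j}^{(d)}$, $M_{k,j}^{(c)}$, $M_{k,j}^{(a)}$, and aggregate at $M_k$ against the fixed $0$ and $\ast 2$ options. The only cosmetic difference is that you compute $R_j = \ast 2$ explicitly when $N_j$ survives, while the paper only needs $R_j \neq 0$; the conclusions and case structure are identical.
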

\begin{proof}
Consider the result of moving $M^{(d)}_{i,k} \rightarrow R_k$.  All moves to other $M^{(d)}_{j,k}$ vertices are losing moves as established in Lemma \ref{lem:baseZero}.  There is only a winning move if $N_k$ still exists, so the position at $R_k \neq 0$ iff $N_k$ still exists.  Let's consider these two cases:

(1) If $N_k$ exists, then moving to $R_k \neq 0$.  Thus, moving to $M^{(d)}_{i,k}$ yields $0$, so moving to $M^{(c)}_{i,k}$ yields $\ast$, and moving to $M^{(a)}_{i,k}$ yields $\ast 2$.  If all $N_k$ exist, then $M_i$ does not have a move to $\ast$, so it's value is $\ast$ from $N_i$.

(2) On the other hand, if $N_k$ does not exist, then moving to $R_k$ from $M^{(d)}_{i,k}$ yields $0$.  Thus, moving to $M^{(d)}_{i,k}$ yields $\ast$, so moving to $M^{(c)}_{i,k}$ yields $0$, and moving to $M^{(a)}_{i,k}$ yields $\ast$.  Since $M_i$ has an $\ast$-option, moving there from $N_i$ now yields a position with value $\ast 3$.
\end{proof}

\begin{lemma}[Skip $*$]
So long as only $N_i$ vertices are removed from the graph, a token on $P_k$ has nim-value $\ast 2$ if and only if no $N_i$ have been removed with $i < k$. If the value is not $\ast 2$, it is $\ast 3$.
\label{lem:pValue}
\end{lemma}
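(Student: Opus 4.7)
The plan is to mirror the proof of Lemma \ref{lem:mValue} (Skip $\ast 2$), exploiting the structural parallel between the $M_k$ and $P_k$ gadgets. The $P_k$ side-chain $P^{(a)}_{k,j} - P^{(c)}_{k,j} - P^{(e)}_{k,j} - P^{(f)}_{k,j} - R_j$ is the $M_k$ side-chain extended by one extra pair of vertices $(P^{(e)}_{k,j}, P^{(f)}_{k,j})$, so every step of value-propagation is shifted by one. That single shift is exactly what turns a "skip $\ast 2$" gadget into a "skip $\ast$" gadget, yielding $\ast 2$ (resp.\ $\ast 3$) where $M_k$ yielded $\ast$ (resp.\ $\ast 3$).

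First I would establish a helper claim: for each $j$ with $4 \leq j \leq k-1$, the position reached by $P_k \rightarrow P^{(a)}_{k,j}$ has value $\ast$ if $N_j$ is still present, and $\ast 2$ if $N_j$ has been removed. This is a short bottom-up induction along the chain. At the bottom, $R_j$ behaves as in Lemma \ref{lem:baseZero}: all of its side-chain options evaluate to $\ast$, so $R_j = 0$ when $N_j$ is gone, while when $N_j$ is present the additional move $R_j \rightarrow N_j$ to $0$ (via Lemma \ref{lem:rToN}) forces $R_j \neq 0$. Propagating upward, $P^{(f)}_{k,j}$ has only the option $R_j$, flipping zero and nonzero; $P^{(e)}_{k,j}$ has only the option $P^{(f)}_{k,j}$, flipping again; $P^{(c)}_{k,j}$ has options $P^{(e)}_{k,j}$ and the dangling $P^{(d)}_{k,j}$ (always $0$); and $P^{(a)}_{k,j}$ has options $P^{(c)}_{k,j}$ and the dangling $P^{(b)}_{k,j}$ (always $0$). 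A case split on whether $N_j$ is present yields the two values $\ast$ and $\ast 2$ claimed.

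Then I would finish at $P_k$ itself. Having entered $P_k$ from $N_k$, the available options are a move into the built-in $0$-gadget, a move into the built-in $\ast$-gadget, and, for each $4 \leq j \leq k-1$, a move to $P^{(a)}_{k,j}$. By the helper claim, the $P^{(a)}_{k,j}$-options contribute $\ast$ when $N_j$ is still present and $\ast 2$ when it has been removed. Thus if no $N_i$ with $i<k$ has been removed, the option-set is $\{0, \ast\}$ and $P_k = \mex\{0,\ast\} = \ast 2$; otherwise at least one $\ast 2$ is added, giving $P_k = \mex\{0,\ast,\ast 2\} = \ast 3$.

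The main obstacle is notational bookkeeping rather than a conceptual hurdle, since each chain has six vertices (including two dangling ones) and I must track both the alternation along the chain and the status of each $N_j$ in parallel. Once the helper claim is in hand, the final $\mex$ calculation at $P_k$ is immediate, and the statement for "Skip $\ast$" follows directly from the "Skip $\ast 2$" template applied through this longer chain.
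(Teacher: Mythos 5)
Your proposal is correct and follows essentially the same argument as the paper's: both establish the value of $R_j$ (nonzero iff $N_j$ survives, via Lemmas \ref{lem:baseZero} and \ref{lem:rToN}) and then propagate $\mex$ values up the side-chain to get $P^{(a)}_{k,j} = \ast$ or $\ast 2$ according to whether $N_j$ is present, finishing with the $\mex$ at $P_k$ over $\{0,\ast\}$ or $\{0,\ast,\ast 2\}$. Your framing of the $P$-chain as the $M$-chain shifted by one extra vertex pair is a clean way to see why the gadget skips $\ast$ instead of $\ast 2$, but the underlying calculation is identical to the paper's two-case chain-walk.
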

\begin{proof}
By the same logic as in Lemma \ref{lem:mValue}, we see that $R_k$ has value 0 exactly when $N_k$ no longer exists.  We examine the two cases to complete the proof:

(1) If $N_k$ exists, $R_k$ is non-zero, so moving to $P_{i,k}^{(f)}$ from above yields 0.  Working back up, moving to $P_{i,k}^{(e)}$ yields $\ast$, moving to $P_{i,k}^{(c)}$ yields $\ast 2$, and moving to $P_{i,k}^{(a)}$ yields $\ast$.  If these all exist, then $P_i$ has moves to only 0 and $\ast$, so it is $\ast 2$.

(2) If some $N_k$ doesn't exist, $R_k$ is zero, so moving to $P_{i,k}^{(f)}$ from above yields $\ast$.  Again, working back up, moving to $P_{i,k}^{(e)}$ yields 0, moving to $P_{i,k}^{(c)}$ yields $\ast$, and moving to $P_{i,k}^{(a)}$ yields $\ast 2$.  Now that $P_i$ has an $\ast 2$-option, it is instead $\ast 3$ from $N_i$.
\end{proof}

\begin{lemma}
If the token is on $N_k$, and only $N_i$ vertices with $i > k$ have been removed from the graph, then the nim-value must be at least $\ast 4$.
\label{lem:not1or2}
\end{lemma}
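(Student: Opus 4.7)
The plan is to exhibit four specific moves from $N_k$ whose resulting nim-values are exactly $0$, $\ast$, $\ast 2$, and $\ast 3$. By the $\mex$ rule, this immediately forces $\nimber(N_k) \geq \ast 4$. The hypothesis that only $N_i$ with $i > k$ have been removed is exactly what makes the hypotheses of Lemmas \ref{lem:baseZero}, \ref{lem:mValue}, and \ref{lem:pValue} available (no lower-rank $N_i$ has been eliminated), so each of those lemmas can be invoked as a black box.

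First I would enumerate the relevant neighbors of $N_k$ according to Figure \ref{fig:Ni}: the vertex $R_k$, the $\ast$-gadget root $M_k$, the $\ast 2$-gadget root $P_k$, the root of the tree-based $\ast 3$-gadget (guaranteed by Observation \ref{obs:tree}), and the remaining $N_j$ vertices in the clique. The four moves I care about are:
\begin{itemize}
\item $N_k \to R_k$: By Lemma \ref{lem:baseZero}, this move lands on a position of value $0$.
\item $N_k \to M_k$: Since no $N_i$ with $i < k$ has been removed, Lemma \ref{lem:mValue} gives value $\ast$ (not $\ast 3$).
\item $N_k \to P_k$: Again by the hypothesis, Lemma \ref{lem:pValue} gives value $\ast 2$ (not $\ast 3$).
\item $N_k \to$ root of the $\ast 3$-gadget: The $\ast 3$-gadget is a pendant tree attached only to $N_k$; once $N_k$ is removed, the token sits at the root of an intact copy of the canonical tree of value $\ast 3$ from Observation \ref{obs:tree}, so this option has value $\ast 3$.
\end{itemize}

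Having exhibited options with nim-values $0$, $\ast$, $\ast 2$, $\ast 3$, the $\mex$ rule forces $\nimber(N_k) \in \{\ast 4, \ast 5, \ast 6, \ldots\}$, completing the proof. The only genuinely subtle point---and the place I would be most careful---is confirming that the hypothesis ``only $N_i$ vertices are removed'' in Lemmas \ref{lem:mValue} and \ref{lem:pValue} is compatible with our current hypothesis ``only $N_i$ with $i>k$ have been removed,'' which it plainly is, and that the $\ast 3$-gadget really is unaffected by anything happening elsewhere in the graph (which follows because the $\ast 3$-gadget is a private pendant tree of $N_k$ with no edges into the rest of the construction). No induction, matching argument, or case split on removed vertices is required beyond the bookkeeping already done in the earlier lemmas.
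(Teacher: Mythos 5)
Your proof takes exactly the paper's approach: exhibit four options from $N_k$ realizing values $0$ (via $R_k$, Lemma~\ref{lem:baseZero}), $\ast$ (via $M_k$, Lemma~\ref{lem:mValue}), $\ast 2$ (via $P_k$, Lemma~\ref{lem:pValue}), and $\ast 3$ (pendant tree), then apply $\mex$. The one small omission is the base case $k=4$: Algorithm~\ref{alg:nimberGeneration} creates $M_i$ and $P_i$ only for $i\ge 5$, so $M_4$ and $P_4$ do not exist and Lemmas~\ref{lem:mValue}--\ref{lem:pValue} are inapplicable there; the paper dispatches $k=4$ with a one-line remark that $N_4$ has direct tree moves to $0$ through $\ast 3$ by construction, after which your argument applies verbatim for $k>4$.
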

\begin{proof}
It is trivial for $k=4$, since by construction it has moves to 0 through $\ast 3$.

For larger $k$, they have a move to 0 through $R_j$ (by Lemma \ref{lem:baseZero}). They also have a move to $\ast 3$ by construction. The move to $M_k$ is $\ast$, and the move to $P_k$ is $\ast 2$ by Lemmas \ref{lem:mValue} and \ref{lem:pValue}, respectively.
\end{proof}

\begin{lemma}[Parity]
If the only vertices removed are $N_i$ vertices, the token is currently on $N_k$, $N_j$ is of lower rank than $N_k$ and is the lowest rank that has been removed, and if there are an odd number of $N_p$ vertices remaining, where $N_p$ are higher rank than $N_j$, then the nim-value of the game is $\ast$. If there are an even number of those vertices remaining, then the game has value $\ast 2$.
\label{lem:nGameValue}
\end{lemma}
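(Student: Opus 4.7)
The plan is to induct on $m$, the number of $N$-vertices of rank above $j$ that remain in the graph (this count includes $N_k$ itself, the current token location). At position $N_k$ the options partition into four classes: (a) the move to $R_k$, which evaluates to $0$ by Lemma \ref{lem:baseZero}; (b) the move into the attached $\ast 3$-tree; (c) the moves to $M_k$ and $P_k$, each evaluating to $\ast 3$ rather than the nominal $\ast$ and $\ast 2$ because some $N_j$ with $j < k$ has been removed (Lemmas \ref{lem:mValue} and \ref{lem:pValue}); and (d) the moves to every other surviving $N_p$ in the clique.

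Class (d) drives the induction. When $p < j$, the vertex $N_p$ still exists (since $N_j$ is the lowest-rank removed vertex), and after the move both $N_j$ and $N_k$ sit above $N_p$ in rank, so Lemma \ref{lem:not1or2} supplies a value at least $\ast 4$. When $p > j$ and $p \neq k$, the new position again satisfies the hypothesis of the present lemma: the token rests on $N_p$, $N_j$ remains the lowest-rank removed vertex, and the count of surviving rank-above-$j$ vertices has decreased by exactly one (namely $N_k$). Invoking the inductive hypothesis with parameter $m - 1$, each such option evaluates to $\ast$ when $m - 1$ is odd (equivalently $m$ even) and to $\ast 2$ when $m - 1$ is even (equivalently $m$ odd).

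Assembling the mex computation: when $m$ is odd, the option values are $\{0, \ast 2, \ast 3\}$ together with values $\geq \ast 4$, whose mex is $\ast$; when $m$ is even, they are $\{0, \ast, \ast 3\}$ together with values $\geq \ast 4$, whose mex is $\ast 2$. The base case $m = 1$ (with $N_k$ the only surviving rank-above-$j$ vertex) has no class-(d) move with $p > j$, so the option set reduces to $\{0, \ast 3\}$ augmented by values $\geq \ast 4$; the mex is $\ast$, matching the odd-parity claim.

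The principal obstacle, already defused by Lemmas \ref{lem:mValue} and \ref{lem:pValue}, is that the $M_k$ and $P_k$ gadgets would normally contribute $\ast$ and $\ast 2$ to the option set, which would break the parity alternation. Once any lower-rank $N_j$ has been removed, both gadgets are promoted to $\ast 3$, vacating the $\ast/\ast 2$ slots so that exactly one of those values is filled in by class (d) (determined by the parity of $m$) and the other is left as the mex. The clean dichotomy in the lemma statement is a direct consequence of this gadget design.
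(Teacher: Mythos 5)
Your proof is correct and follows essentially the same inductive strategy as the paper's: both induct on the number $m$ of surviving rank-above-$j$ $N$-vertices, partition the options at $N_k$ into the move to $R_k$ (value $0$), the three $\ast 3$-options (the attached tree, $M_k$, and $P_k$, the latter two promoted by Lemmas \ref{lem:mValue} and \ref{lem:pValue} once $N_j$ is gone), the moves to surviving $N_p$ with $p < j$ (value $\geq \ast 4$ by Lemma \ref{lem:not1or2}), and the moves to surviving $N_p$ with $p > j$ handled by the inductive hypothesis; both then read off the mex to get the parity dichotomy. The only cosmetic difference is that the paper spells out two base cases ($m=1$ and $m=2$) while you establish only $m=1$ and let the step cover $m=2$, which is equally valid.
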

\begin{proof}
We will do this by induction on the number of remaining $N_p$.

\textbf{Base Cases}: If the token is on an $N_p$ vertex, when it is the only one remaining, then the value is $\ast$, and when there are two $N_p$ vertices remaining, then the value is $\ast 2$.

To establish this: $N_k$ has a move to $\ast 3$ by construction, along with moves to $M_k$ and $P_k$ ($\ast 3$ by Lemmas \ref{lem:mValue} and \ref{lem:pValue}), a move to $R_k$, which is a move to 0 by Lemma \ref{lem:baseZero}, and to any $N_i$ with rank no more than $j$, which is at least $\ast 4$ by Lemma \ref{lem:not1or2}. Thus, there are no moves to $\ast$ and a move to 0, so the value is $\ast$.

In the case where there is one other $N_p$ vertex remaining, we have the exact same analysis, but there is a move to $\ast$ as well, by the previous base case.  Thus, the position has value $\ast 2$.

\textbf{Inductive Hypothesis}: If there are $x$ $N_p$ remaining, where $x$ is odd, then the value is $\ast$, and if $x$ is even, the value is $\ast 2$.

\textbf{Inductive Step}: If there are $x+1$ $N_p$ remaining,  where $x+1$ is odd, then the value is $\ast$, and if $x+1$ is even, the value is $\ast 2$.

To establish this: by the same argument as in the base cases, there is a move to 0, three moves to $\ast 3$, and a collection of moves with value at least $\ast 4$. The rest of the moves are to various $N_p$. By the IH, each of these options have value $\ast$ if $x$ is odd, and value $\ast 2$ if $x$ is even. Thus, if $x$ is odd, then the value with $x+1$ is $\ast 2$, and if $x$ is even, then the value with $x+1$ is $\ast$.
\end{proof}

\begin{theorem}[Right Amount of Stars]
When the token is on~$N_n$, the resulting game has nim-value $\ast n$.
\label{thm:constructability}
\end{theorem}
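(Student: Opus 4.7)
The plan is to prove a strengthened statement by induction on $k$: for every $k \in \{4,\ldots,n\}$ and every subset $S \subseteq \{N_{k+1},\ldots,N_n\}$, if the set of removed vertices is exactly $S$ and the token sits on $N_k$, then the nim-value of the resulting position is $\ast k$. The theorem is the instance $k=n$, $S=\emptyset$.

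For the base case $k=4$, the options from $N_4$ consist of $R_4$ (value $0$ by Lemma \ref{lem:baseZero}), the directly attached $\ast$, $\ast 2$, and $\ast 3$ trees, and any $N_i$ with $i>4$ that remains in the graph. For each such ``upward'' move $N_4 \to N_i$, the vertex $N_4$ becomes the unique lowest-rank removed vertex with the token on $N_i$, so Lemma \ref{lem:nGameValue} (applied with $N_j = N_4$) gives value $\ast$ or $\ast 2$---values already present in the option set. Hence $\mex = \ast 4$.

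For the inductive step with $k \geq 5$ and the token on $N_k$, the plan is to enumerate all options and apply the preceding lemmas: $R_k$ contributes $0$ (Lemma \ref{lem:baseZero}); the $\ast 3$ gadget contributes $\ast 3$; $M_k$ contributes $\ast$ and $P_k$ contributes $\ast 2$ (Lemmas \ref{lem:mValue} and \ref{lem:pValue}, applicable since no $N_i$ of rank below $k$ has been removed); each ``downward'' move $N_k \to N_j$ (for $4 \leq j < k$) produces a position whose removed set is $S \cup \{N_k\} \subseteq \{N_{j+1},\ldots,N_n\}$, so the induction hypothesis gives value $\ast j$; and each ``upward'' move $N_k \to N_l$ (for $l > k$ with $N_l \notin S$) makes $N_k$ the lowest-rank removed vertex relative to the new position, so Lemma \ref{lem:nGameValue} contributes a duplicate value of $\ast$ or $\ast 2$. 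The option set therefore equals $\{0, \ast, \ast 2, \ldots, \ast(k-1)\}$ up to these duplicates, yielding $\mex = \ast k$.

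The main subtlety is picking the right strengthening for the induction. A naive claim restricted to ``all $N_i$ with $i > k$ removed'' would fail, because after a single move $N_n \to N_j$ only $N_n$ has been removed while $N_{j+1},\ldots,N_{n-1}$ all remain. Allowing $S$ to range over arbitrary subsets of higher-rank $N$-vertices closes the gap, and the crucial supporting fact is Lemma \ref{lem:nGameValue}: its parity-based guarantee that upward moves produce only $\ast$ or $\ast 2$ is exactly what keeps the mex clean and lets the lower-rank values $\ast 4, \ldots, \ast(k-1)$ supplied by induction be the ones that actually determine the answer.
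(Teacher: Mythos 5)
Your proof is correct and follows essentially the same route as the paper's: the paper's inductive hypothesis (``as long as the only vertices removed are various $N_i$ with $i>k$, $N_k$ has value $\ast k$'') is precisely the strengthening over arbitrary subsets of higher-rank $N$-vertices that you make explicit, and both proofs enumerate the options from $N_k$ in the same way, invoking Lemmas~\ref{lem:baseZero}, \ref{lem:mValue}, \ref{lem:pValue} for $R_k$, $M_k$, $P_k$, the induction hypothesis for downward moves, and Lemma~\ref{lem:nGameValue} for upward moves. Your remark about why the strengthening is necessary is a useful clarification of a step the paper leaves implicit, but the underlying argument is identical.
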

\begin{proof}
For 0 through $\ast 3$, it clearly works as we just build a tree as described by Observation \ref{obs:tree}.
For larger Grundy values, we have the token on vertex $N_n$. We will prove this has value $\ast n$ through induction on the values of a starting token on $N_i$.

\textbf{Base Case}: As long as the only vertices removed from the graph are $N_i$ vertices, the token on $N_4$ will have value $\ast 4$.

To establish this: 
there are moves to $\ast$, $\ast 2$, and $\ast 3$, each by construction.
There is a move to 0 through $R_4$ by Lemma \ref{lem:baseZero}. The only other available moves are some subset of of the $N_j$, which by Lemma \ref{lem:nGameValue}, have value $\ast$ or $\ast 2$.

\textbf{Inductive Hypothesis}: As long as the only vertices removed from the graph are various $N_i$ vertices where $i > k$, $N_k$ has value $\ast k$.

\textbf{Inductive Step}: We need to show that as long as the only vertices removed from the graph are various $N_i$ vertices where $i > k+1$, $N_{k+1}$ has value $\ast k+1$.

To establish this: $N_{k+1}$ has moves to $\ast 3$ and $\ast 4$, by construction, and to $M_{k+1}$, $P_{k+1}$, $R_{k+1}$, all of $N_4$ through $N_k$, and some of $N_{k+2}$ to $N_n$.  Moves to $N_4$ to $N_k$ are $\ast 4$ to $\ast k$ by induction. The move to $R_{k+1}$ is a move to 0 by Lemma \ref{lem:baseZero}. The move to $M_{k+1}$ is $\ast$, by Lemma \ref{lem:mValue} since all $N_j$ remain. The move to to $P_{k+1}$ is $\ast 2$, by Lemma \ref{lem:pValue}, again since no $N_j$ is removed.
\end{proof}

\subsection{Complexity Implication}
\label{Sec:HardClassifier}

We now use our polynomial-high nimber constructability result to prove Theorem \ref{Thm:Mys}, establishing other than the polynomial-time time “Zero”-“Fuzzy” classifier, every classifier of the Grundy values in \ruleset{Undirected Geography} is PSPACE-hard.

\begin{proof} (of Theorem \ref{Thm:Mys})
Recall that the proof above makes the distinction between $\ast$ and $\ast 2$ to be PSPACE-hard. We will first use this to prove that distinguishing between $\ast (k - 1)$ and $\ast k$ is PSPACE-hard, for any $k\geq2$.

We prove this by taking a position $(G_2,v_2)$ that is hard to distinguish between $\ast$ and $\ast 2$.
We introduce a new vertex $v_3$ with moves to its own $0$ and $\ast$ and add edge $(v_3,v_2)$ to create $G_3$.
Then we will create a new vertex $v_4$ with moves to its own $0$, $\ast$, $\ast 2$, and connect  $(v_4,v_3)$ to create $G_4$, and so on, until we create a vertex $v_k$ with moves to its own 0 to $\ast k - 1$, and add edge $(v_k,v_{k-1})$ to create $G_k$.
These vertices $v_i$ and their associated gadgets have size polynomial in $i$ due to Theorem \ref{thm:constructability}.

Now, if $(G_2, v_2) = \ast$, then $(G_3, v_3)$ doesn't have a move to $\ast 2$, so $(G_3, v_3) = \ast 2$.  Similarly, $(G_4, v_4) = \ast 3$, $(G_5, v_5) = \ast 4, \ldots, (G_k, v_k) = \ast (k-1)$.  If instead, $(G_2, v_2) = \ast 2$, then $(G_3, v_3) = \ast 3$, because there is a move to $\ast 2$.  Likewise, $(G_4, v_4) = \ast 4, \ldots, (G_k, v_k) = \ast k$.  Thus, it is \cclass{PSPACE}-hard to distinguish between $\ast k$ and $\ast (k-1)$.

Next, we prove that distinguishing between any $\ast k$ and $\ast p$
is \cclass{PSPACE}-hard.  (We will assume $p > k$, without loss of generality.)  We first create a  $(G_k,v_k)$ where distinguishing $\ast k - 1$ and $\ast k$ is hard, then add a new vertex $v_p$ which has moves to its own 0 to $\ast k - 1$, $v_k$, and $\ast k + 1$ to $\ast p - 1$.  We name this graph $G_p'$; the position $(G_p', v_p)$ has value $\ast p$ exactly when $(G_k, v_k)$ has value $\ast k$.  $(G_p', v_p)$ has value $\ast k$ otherwise.  Thus, it is \cclass{PSPACE}-hard to distinguish between $\ast p$ and $\ast k$.

Finally, we use this to show that distinguishing between any possible fixed set of Grundy values is hard. For any possible set $S$, there must be at least one Grundy value $x \in S$ and one Grundy value in $y \in \bar{S} := [\Delta]\setminus S$. Then, we can, as described above, create a position where it's \cclass{PSPACE}-hard to distinguish between $\ast x$ and $\ast y$. Thus, if one could classify the game to be within that set of Grundy values, one could solve a \cclass{PSPACE}-hard problem.
\end{proof}


\section{Math Behind Board Games: Theory and Practice}

\label{Sec:FinalRemarks}

\begin{quote}
``{\em My experiences also strongly confirmed my
previous opinion that the best theory is inspired
by practice and the best practice is inspired by
theory.}'' - Donald E. Knuth \cite{KnuthTheoryPractice}
\end{quote}

Combinatorial game theory is a fascinating field, where simplicity is valued, and
both efficient methods for solving games and intriguing positions for challenging players are appreciated \cite{WinningWays:2001,LessonsInPlay:2007,SiegelCGT:2013,DBLP:books/daglib/0023750}.
Indeed, the magic smile on a six-year old's face when they realize a winning trick (e.g. how to win two-pile \ruleset{Nim}\footnote{when realizing the fact that \ruleset{Nim} with two identical piles is a losing position can be used for finding a winning strategy for any two-pile \ruleset{Nim}---including the decision to go first or second---so that they will never again lose to their parents.} as introduced in Math Circle\footnote{{\tt https://mathcircles.org/}}) is as enchanting as the contemplative gaze \cite{TheThinkersDavidLlada} of \ruleset{Chess}, \ruleset{Go}, and \ruleset{Hex} champions.
These are the polynomial-time smiles and \cclass{PSPACE}-hard gazes.

In this paper, we have proved that adding a single `PASS'---the smallest possible extension---to
\ruleset{Undirected Geography} transforms the game from
polynomial-time solvable to \cclass{PSPACE}-hard intractable. And similarly, we showed that giving a single pass to the game of \ruleset{Uncooperative Uno} also had this same transformation from P to PSPACE.
Characterizing the complexity impact of this small change to the ruleset has deepened and
expanded our understanding of the  foundational concept \& characterization in combinatorial game theory.
It has also added \ruleset{Multi-Token Undirected Geography} to the collection of \cclass{PSPACE}-hard graph-based impartial games
 with simple rulesets.

David Eppstein \cite{Eppstein} once eloquently expressed that elegant combinatorial games with simple, easy to understand \& remember rulesets yet intractable complexity are the gold standard for combinatorial game design.
His reason is a computational one: If a ruleset is polynomial-time solvable, then
optimal players can be programmed (or be replaced by an efficient computer program);
thus intractable rulesets are essential
 to make competition interesting.
Our work suggests that an additional property that can be meaningful as part of the gold standard.  We call this the ``magic expression'' property: {\em there is a simple and natural perturbation to an intractable ruleset that makes it 
tractable.}
The transformation from magic smile to pensive gaze can contribute to the computational-thinking dimension \cite{ComputationalThinking} in
the pedagogical value of recreational mathematics.
Fittingly, Sprague-Grundy theory---itself a general principle 
emerged from the polynomial-time solvable \ruleset{Nim} \cite{Bouton:1901}---has led us to the \cclass{PSPACE}-hard perturbation to \ruleset{Undirected Geography}, and a new complexity-theoretical understanding.

Naturally, the aesthetic quality of game boards is subject to individual taste.
To us, grid-like game boards---as those used in \ruleset{Hex}, \ruleset{GO}, \ruleset{Domineering}, \ruleset{Chomp}, and \ruleset{Atropos}---are attractive.
Most graphs are too complex visually for game boards;
we consider this to be one of the practical challenges, in implementing/popularizing combinatorial games on graphs.
Directed edges without well-recognized patterns can further add to the entropy.
On the other hand, simple graphs may reduce the strategic challenge of the game.
This is why the \cclass{PSPACE}-hardness of \ruleset{Two-Token Undirected Geography} and \ruleset{Undirected Geography with Passes} provides us with some excitement.
The removal of edge directions from \ruleset{Generalized Geography}, while retaining its \cclass{PSPACE}-hard complexity without introducing complex rules, opens up several possibilities.
Further, the simplicity of the graph condition in our Dichotomy Theorem (\ref{thm:dichotomy})---thanks also to the brilliant reduction of Lichtenstein and Sipser---brings us quite close to two-dimensional grids.
Affirmative answers to the following open questions will make these intractable extensions to 
\ruleset{Undirected Geography} more elegant
for practical implementation.
\begin{openquestion}[Grid-Like \ruleset{Undirected Geography}]
Can the Grundy value in some version of (rectangle or hexagon) grid-based \ruleset{Undirected Geography} be \cclass{PSPACE}-hard to computed?
\end{openquestion}

We have been investigating whether the
``snap-to-grid'' result of Lichtenstein and Sipser for \ruleset{GO} can be extended to our case.
Meanwhile, inspired by these ``intractable'' \ruleset{Undirected Geography} extensions,
we have started to explore two practical designs, using the standard \ruleset{GO} or \ruleset{Hex} game boards.
We conclude the paper with a brief discussion about this two games and some theoretical questions they inspire.

\begin{itemize}
\item \ruleset{Binary Undirected Geography}: The game is based on \ruleset{Two-Token Undirected Geography}.
It can natuarlly be played with standard \ruleset{Go} and \ruleset{Hex} game sets.
``{\em On your turn, choose one of the empty (gray) nodes adjacent to either the last created black or white node. That node will become the last of its color.}''

We recently implemented this new game and the web-version
can be played with following link.
One can play either against another human player (sitting at the same computer) or some of our AI programs:
\begin{center}
\url{https://turing.plymouth.edu/~kgb1013/DB/combGames/twoBUG.html}.
\end{center}

We are cautiously optimistic that this simple game is challenging to play optimally. We are evaluating the optimal starting placements of the two tokens and with hope of finding the ``snap-to-grid'' hardness.

\item \ruleset{Navigate the Pass}: This game is based on \ruleset{Undirected Geography with Passes}.
Although both are \cclass{PSPACE}-hard games, the hardness of the single-pass version appears to be more brittle than
the two-token version in practice.
Aiming for a practical design with rectangle or hexagon grids,
we are in an early design stage, trying to characterizing the brittle patterns.

Consider the position with rectangle of side length $n$, e.g., the $19\times 19$ \ruleset{GO} board.
Imagine a white stone is located in grid point $(x,y)$. A player can either add a white stone to an adjacent grid point of the current stone, or call ``PASS'', after which, players can only play the black stone until all adjacent locations are occupied.
Note that, once switched, the game returned to standard \ruleset{Undirected Geography}, and hence the winnability can be determined in polynomial time by the matching test.
Furthermore, before switching the stone types, if $(x,y)$ is not in all maximum matchings, then the current player should play ``PASS''.

To make the game more robust, we can also play the game by placing stones in cells to expand the number of neighbors from four to eight, or to play in cells on a hexagon grid
(i.e. with six neighbors).
We are still investigating whether or not rectangle grids are tractable for this design and whether or not
the six-point or eight-point star stencil yields a more
 more challenging game than the four-point stencil on rectangular grids.
\end{itemize}

Mathematically, these questions  motivate us to look for other graph parameter(s) to capture the tractable-intractable divide.
For example, it is basic that Grundy values on trees in \ruleset{Undirected Geography} can be computed in
polynomial time.
It is well-known that the {\em treewidth}
of the $n\times n$ grid is $n$.

\begin{openquestion}[Fixed-Parameter Tractablility]
Are the Grundy values in \ruleset{Undirected Geography} fixed-parameter tractable in the treewidth of the graph?
\end{openquestion}

Note that playing extensions of \ruleset{Undirected Geography} on the grid points of hexagon grids involves
nodes with degree at most three.
So, \ruleset{Navigate A Pass} for this version can be solved in polynomial time.
It remains open whether or not \ruleset{Two-Token Undirected Geography} is tractable.

\begin{openquestion}[Two Bugs on Cubic Graphs]
Is there a polynomial-time algorithm to decide the winner of  \ruleset{Two-Token Undirected Geography} over degree-three graphs, or is the game \cclass{PSPACE}-complete to solve?
\end{openquestion}

We hope the theoretical questions about these simple grids
may lead us not only to elegant practical games but also to
something fundamental about computing and mathematical structures.

Earlier we mentioned the distinction between our \rsUndirGeog\ result and Morris', which provides hardness via a sum of many shallow partisan games.  We wonder whether there exists a partisan analog of our result.

\begin{openquestion}[Hard Partisan Sum]
Does there exist a well-known, strictly-partisan ruleset such that the winnability of a single instance can be solved in polynomial time, but where the winnability of a sum of two instances is \cclass{PSPACE}-hard?
\end{openquestion}

\bibliographystyle{plain}

\appendix

\section{Winning Nim by Nim-Sum}\label{AppendixNim}
Each \ruleset{Nim} position  consists of a collection of piles of items.
Two (or multiple players)
take turns picking items (at least one) from one of the piles.
Under normal play, the player who takes the last item wins the game.

As established by Bouton \cite{Bouton:1901}, \ruleset{Nim} is an exemplar sum game - each battlefield game is a single \ruleset{Nim}-pile.
\ruleset{Nim} is polynomial-time solvable (in the number of bits encoding positions) because of two 
properties: (1) the {\em nim-sum} is polynomial-time computable, and (2) the nimber of a single \ruleset{Nim}-pile is embarrassingly easy to calculate.
Thus, the next player has a winning strategy in  \ruleset{Nim} if and only if the bitwise-exclusive-or of the binary representation of the pile sizes is not zero.

This polynomial-time solution to \ruleset{Nim} inspired Sprague-Grundy theory, which applies to all impartial games.
In optimization, related problems can often  be understood through adding  or removing constraints or modifying an objective function, providing a systematic way of deriving algorithms and intractability for using the original results \cite{PapadimitriouBook:1994}.
With impartial games, the Sprague-Grundy
theory is a tool with similar utility.

\section{Winning Undirected Geography by Matching}
\label{Appendix:Matching}
For completeness and our analysis 
 we include a proof for this classical result.

\begin{theorem}[A Matching-Based Characterization]\label{Theo:LSU}
For any undirected graph $G=(V,E)$  and $s\in V$ satisfying
  $E\neq \emptyset$,
 \ruleset{Undirected Geography} at  $(G,s)$ is a winnable position if and only if $s$ is in every maximum matching of $G$.
\end{theorem}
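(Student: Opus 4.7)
The plan is to prove the biconditional by induction on the number of vertices of $G$, with the recursive step corresponding to a single move $s \to t$ that reduces the problem to position $(G - s, t)$. The heart of the argument is a pair of matching identities: writing $\nu(H)$ for the maximum matching size of a graph $H$, I would first establish that $s$ lies in every maximum matching of $G$ if and only if $\nu(G - s) = \nu(G) - 1$, and otherwise $\nu(G - s) = \nu(G)$. Both directions are immediate from observing that any matching of $G - s$ is a matching of $G$ avoiding $s$, and that deleting the edge at $s$ from any maximum matching containing $s$ yields a matching of $G - s$ of size $\nu(G) - 1$.

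For the forward direction, assume $s$ is in every maximum matching of $G$. I would fix one such matching $M$, let $t$ be the neighbor of $s$ with $(s,t) \in M$, and have the current player move $s \to t$. Then $M \setminus \{(s,t)\}$ is a matching of $G - s$ of size $\nu(G) - 1 = \nu(G-s)$ that avoids $t$; therefore $t$ is missing from some maximum matching of $G - s$, and by the inductive hypothesis the position $(G-s, t)$ is losing for the opponent. For the reverse direction, assume some maximum matching $M$ of $G$ avoids $s$, so $\nu(G-s) = \nu(G)$. For any move $s \to t$ the current player could make, suppose toward contradiction that some maximum matching $M'$ of $G - s$ does not saturate $t$; then $M' \cup \{(s,t)\}$ would be a matching of $G$ of size $\nu(G) + 1$, contradicting maximality of $M$. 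Hence $t$ lies in every maximum matching of $G - s$, and by the inductive hypothesis $(G-s, t)$ is winning for the opponent, so no move saves the current player.

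The base case handles terminal positions where $s$ is isolated in $G$: the current player loses, and since the assumption $E \neq \emptyset$ forces every maximum matching to be nonempty while $s$ participates in none, the matching characterization correctly gives ``not in every maximum matching.'' The main subtlety I expect is making the recursion airtight across boundary cases: after enough moves, the sub-position $(G - s, t)$ may have no edges, so I would extend the statement to allow $E = \emptyset$ as a trivial losing base case (with $t$ vacuously absent from the unique empty maximum matching), ensuring the inductive hypothesis applies uniformly throughout the game tree.
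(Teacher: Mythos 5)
Your proposal is correct and matches the paper's proof in all essentials: both hinge on the identity that $s$ lies in every maximum matching of $G$ iff $\nu(G-s)=\nu(G)-1$, both pick the winning move along a maximum-matching edge and observe that stripping that edge leaves a maximum matching of $G-s$ that avoids the landing vertex, both derive the losing case from the contradiction $M'\cup\{(s,t)\}$, and both close the argument by induction on the graph. Your explicit extension to the $E=\emptyset$ base case is a reasonable tidying of a boundary detail the paper leaves implicit, not a different route.
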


\begin{proof}
For a subset $S\subset V$, let $G_{S}$ be the graph obtained from $G$ by removing $S$ and all edges incident to $S$.
First note that $s$ is not in every maximum matching of $G$
  if and only if the maximum-matching size of $G$ is  equal to
  that of $G_{\{s\}}$.
Thus, we can efficiently determine this graph property  using
standard polynomial-time maximum matching algorithm.

We now prove the following: Suppose $G$ is not empty. Then:

\begin{enumerate}
\item If $s$ is in every maximum matching of $G$, then $s$ has a neighbor $v$ that is not in some maximum matching of $G_{\{s\}}$.
\item If $s$ is not in every maximum matching $G$, then either $s$ has no neighbor in $G$, or every maximum matching in $G_{\{s\}}$ contains all neighbors of $s$.
\end{enumerate}

To see 1, consider any maximum matching $M$ of $G$ (say the one computed by a polynomial-time matching algorithm).
Because $G\neq \emptyset$, $M$ is not empty.
We now prove that $M \setminus (s,M(s))$---which clearly does not contain $M(s)$---is an maximum matching
of $G_{\{s\}}$.
To see the maximality, let $M'$ be a maximum matching of $G_s$.
If $|M'| = |M|$, then $M'$ is a maximum matching of $G$ without containing $s$, contradicting the earlier assumption.
Thus, $|M'| = |M|-1$, and $M \setminus (s,M(s))$ is
 a maximum matching in $G_{\{s\}}$,

 To see 2, first, it follows from the assumption that
$s$ is not in every maximum matching of $G$, the size of maximum matching of $G$ is equal to the size of the maximum matching of $G_s$.
Now suppose there exists a maximum matching $M'$ in $G_{\{s\}}$ that does not contain a neighbor, call it $v$, of $s$. Then $M'\cup (s,v)$ is also a matching of $G$, which contract to the statement above.

Now, we we show that if $s$ is in every maximum matching of $S$,
then $(G,s)$ is a winning position.
Let $M(s)$ be the node for which $(s,M(s))$ is in matching $M$.
We now prove that selecting $M(s)$ is a winning move.
We showed earlier that $M \setminus (s,M(s))$ is a maximum matching of $G_{\{s\}}$. Furthermore, it
 contains every neighbor $u$ of $M(s)$ in $G_{\{s\}}$.
By the same argument, $u$ is in every maximum matching of
$G_{\{s,M(s)\}}$.
The theorem then follows from a proof by induction using the above analysis as the induction step.
\end{proof}

\section{Finding Grundy Values By Branch-And-Bound}

In this section, we extend the branch-and-bound Grundy-value evaluation process to the abstract setting as well as
analyze the impact of high-degree nodes in the process.

But first, as an illustration of what has been been established in Section \ref{sec:BranchAndBound}, consider the following family of fun planar \ruleset{Undirected Geography} games:

\begin{corollary}[Alternation Paths Through Fully-Triangulated Planar Maps]
For any fully triangulated planar graph $G$, and a face $f$ in $G$, the Grundy value of the \ruleset{Undirected Geography} game on the {\em dual} of $G$, starting at $f$, can be computed in polynomial time.
\end{corollary}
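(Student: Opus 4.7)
The plan is to reduce this corollary directly to the degree-three tractability result, Theorem \ref{Thm:BAB}. The key geometric observation is that in a fully triangulated planar graph, every face is bounded by exactly three edges. Since the degree of a vertex in the dual graph equals the number of edges on the boundary of the corresponding face in the primal, every vertex of the dual has degree exactly three. In other words, the dual of a fully triangulated planar graph is a 3-regular graph.

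Given this structural fact, the strategy is straightforward. First, I would explicitly construct the dual graph $G^*$ from $G$ in polynomial time: enumerate the triangular faces of $G$ (these can be found in polynomial time once a planar embedding is fixed), create one vertex per face, and add an undirected edge between two dual vertices for each primal edge shared by the corresponding two faces. Second, I would verify the degree bound: because each triangular face has exactly three bounding edges, the corresponding dual vertex has degree three. Finally, I would invoke Theorem \ref{Thm:BAB} directly: for any undirected graph of maximum degree three and any starting node $s$, the Grundy value of \ruleset{Undirected Geography} at $(G^*, s)$ is computable in polynomial time. Taking $s = f$ completes the proof.

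I do not expect a genuine obstacle here; the corollary is essentially a packaged instance of Theorem \ref{Thm:BAB} with a geometric wrapper. The only small care required is in the treatment of the unbounded face and possible multi-edges in the dual. If the term ``fully triangulated'' is meant in the sphere-embedding sense (all faces, including the outer one, are triangles), then the argument goes through uniformly. If the outer face is instead allowed to be non-triangular, one can either include that face as a single higher-degree vertex and note that only the unique outer face has degree exceeding three (leaving the branch-and-bound analysis of Theorem \ref{Thm:BAB} essentially unchanged after one root-level branching), or one can simply restrict to the standard convention in which the triangulation includes the outer face. Either way, the core reduction to the degree-three Grundy-value algorithm remains the entire content of the proof.
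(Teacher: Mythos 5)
Your proposal is correct and is essentially the same argument the paper has in mind: the paper presents this corollary explicitly as an illustration of Theorem \ref{Thm:BAB}, and the entire content is that the dual of a fully triangulated planar graph is $3$-regular, so the degree-three branch-and-bound algorithm applies. Your additional care about the unbounded-face convention is a sensible clarification that the paper leaves implicit.
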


\subsection{Nimber-Winnability Reduction: Degree of Phase Transition to Infeasibility}

\label{sec:Abstract}

Our dichotomy characterization of Grundy-value
computation can be  extended from the concrete \ruleset{Undirected Geography} to
  an abstract nimber-winnability reduction in general impartial games, characterized by their {\em degrees} and {\em heights}.

\begin{itemize}
\item {\bf Degree:} For a positive integer $\Delta$, we say that an impartial game $g$ is a degree-$\Delta$ game if $g$ and all positions reachable by $g$ have at most $\Delta$ feasible moves.

\item {\bf Height:} For a positive integer $h$, we say an impartial game $g$ is a height-$h$ game if the height of $g$'s' game tree is at most $h$.
We say $g$ is a {\em polynomially-short} game if the height of its game tree is upper bounded by a polynomial function of its input size.
\end{itemize}

For example, consider an undirected graph $G=(V,E)$  with $n$ nodes and maximum degree $\Delta$, and a node $s\in V$ with degree less than $\Delta$.
Then, \ruleset{Undirected Geography} at position $(G,s)$ is a game with degree-$(\Delta-1)$ and height-$n$.

\begin{theorem}[Dichotomy of Nimber-Winnability Reduction]
For any degree-two, polynomially-short impartial games, Grundy-value computation can be reduced in polynomial-time to decision of winnability.
In contrast, there exists a family of degree-three, polynomially-short impartial games for which the winnability can be solved in polynomial time, but Grundy-value computation is \cclass{PSPACE}-hard intractable.
\end{theorem}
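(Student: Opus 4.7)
My plan is to treat the two halves of the dichotomy separately, lifting the branch-and-bound argument of Theorem~\ref{Thm:BAB} to the abstract degree-two setting for the positive half, and instantiating Theorem~\ref{thm:dichotomy} for the negative half.

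For the positive half, I would first observe that in any degree-two game the mex rule applied to a set of size at most two forces $\nimber(G) \in \{0, \ast, \ast 2\}$ at every reachable position. I would then compute $\nimber(G)$ by a recursive procedure mirroring the ``Double Options'' case of Theorem~\ref{Thm:BAB}: return $0$ at a terminal; at a one-option position query winnability of $G$ and return $\ast$ when winning and $0$ otherwise; at a two-option position query winnability of both options, return $0$ in the $[\outN,\outN]$ case, return $\ast$ in the $[\outP,\outP]$ case, and in the mixed case recurse on the unique winning option (whose value must be $\ast$ or $\ast 2$) and output $\mex(\{0, \nimber(G_i)\})$. Each recursive call strictly descends the game tree, so the recursion depth is bounded by the polynomial height, yielding polynomially many winnability queries and polynomial total work.

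For the negative half, I would exhibit an explicit family using the Prelude-augmented reduction of the corollary following Theorem~\ref{thm:nimberHardness}. Its outputs are \rsUndirGeog\ positions $(G, s)$ on bipartite, planar graphs of maximum degree four, with the designated start vertex (the ``start'' node of Figure~\ref{fig:prelude}) of degree two. The root thus has branching at most $2$, and at every other reachable position the token entered the current vertex along one of its at-most-four incident edges, leaving at most $3$ outgoing options, so the game is degree-three. It is polynomially short since each move deletes a vertex. Winnability is in \cclass{P} by the Fraenkel--Scheinerman--Ullman matching characterization (Theorem~\ref{Theo:LSU}), while the cited corollary makes Grundy-value computation on this family \cclass{PSPACE}-hard, so no polynomial-time reduction to winnability can exist unless $\cclass{P}=\cclass{PSPACE}$.

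The main obstacle I expect is justifying that in the mixed two-option case one may recurse on the winning option alone: this is exactly the place where the degree-two ceiling is essential, because the losing option contributes nimber $0$ for free and the mex of $\{0, \nimber(G_i)\}$ is determined by that single recursive value. Once the ceiling is lifted (even to degree three), this branch-and-bound collapses, which is precisely the phase transition witnessed by the negative half and what makes the two pieces fit together into a clean dichotomy.
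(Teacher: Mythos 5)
Your proposal matches the paper's approach: the positive half is exactly the branch-and-bound of Theorem~\ref{Thm:BAB} lifted to the abstract degree-two setting (the paper's proof is a terse restatement of precisely your case analysis—return $0$ if the position is Zero, $\ast$ if both options are Zero, and otherwise recurse on the unique Fuzzy option), and the negative half is the intended instantiation via the prelude-augmented \ruleset{Undirected Geography} construction, which the paper leaves implicit but which your degree-counting (root of degree two after the prelude, at most three options elsewhere on a degree-four graph) correctly supplies.
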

\begin{proof}
In the branch-and-bound process at each node of the game tree encountered, we first run a winnability test for the position.
If it is ``Zero'', then return $0$.
Otherwise, run a winnability test for each of their children in the game tree, and we know that one of the must be ``Zero''.
If both are ``Zero'', then return $*$.
Otherwise, we following the winning way to determine whether the other children is $*$ or $*2$, and return $*2$ or $*$ accordingly.
\end{proof}

\subsection{The Impact of Large-Degree Nodes in Branch-and-Bound}\label{Appendix:Extensions}

We can extend the analysis of Section \ref{sec:BranchAndBound} to show that
a few nodes with large degrees will not stop the polynomial-time branch-and-bound:

\begin{theorem}[A Tractable Nimber Terrain in \ruleset{Undirected Geography}]
For any constants $D$ and $\Delta$, $c$,
if $G =(V,E)$ is an undirected graph with $n=|V|$ nodes, in which
at most $c\log_2 n$ nodes with degree in range $[4,\Delta+1]$,
and at most $D$ nodes with degree more than $\Delta$,
then the Grundy value of the \ruleset{Undirected Geography} game over $G$
can be computed in time $O(n^{D+ c\log \Delta+3})$.
\end{theorem}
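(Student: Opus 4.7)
The plan is to generalize the branch-and-bound algorithm underlying Theorem~\ref{Thm:BAB}, handling the few higher-degree vertices by brute $\mex$ computation while still using the matching-based shortcut everywhere the current-graph degree is at most two. At a position $(G,v)$: if $\deg_G(v)\leq 2$, call the Fraenkel--Scheinerman--Ullman winnability test on each neighbor and apply the shortcuts of Theorem~\ref{Thm:BAB} (recursing on a single child only in the degree-two mixed sub-case, to distinguish $\ast$ from $\ast 2$); if $\deg_G(v)\geq 3$, recursively compute the Grundy value of every child and return the $\mex$.

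The analysis hinges on one simple structural observation: for any non-root vertex $v$ on a root-to-leaf path in the recursion tree, its current-graph degree is at most $\deg_G(v)-1$, because the previous vertex on the path is an already-removed neighbor. Consequently, any non-root vertex originally of degree at most three is encountered with current-graph degree at most two, so the degree-$\leq 2$ procedure applies and contributes branching factor at most one. A non-root vertex with $\deg_G(v)\in[4,\Delta+1]$ contributes branching factor at most $\Delta$, and one with $\deg_G(v)>\Delta$ contributes at most $n-1$.

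The main step is then a direct multiplication along a root-to-leaf path, which corresponds to a simple path in $G$ and hence visits each vertex of $G$ at most once. In particular, the path contains at most $D$ high-degree and at most $c\log_2 n$ medium-degree vertices. Multiplying the per-vertex branching bounds therefore gives at most
\[
n^{D}\cdot\Delta^{c\log_2 n}\;=\;O\!\left(n^{D+c\log_2\Delta}\right)
\]
leaves in the recursion tree. Since the recursion depth is at most $n$ (each move removes a vertex), the tree has $O(n^{D+c\log_2\Delta+1})$ nodes total, and each node performs polynomial work---a single matching-based winnability call plus a $\mex$ of size at most $\Delta+1$ or $n$---yielding the claimed $O(n^{D+c\log_2\Delta+3})$ total running time.

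The main obstacle to get right is the accounting at the root, whose current-graph degree equals its \emph{original} degree and so whose branching factor is not decremented. The fix is to count the root in its own degree class: if the root is low-degree its branching is a constant, while if medium- or high-degree its contribution merely uses up one unit of the $c\log_2 n$ or $D$ budget, and the bound above still holds. The other subtlety, that the degree-two mixed sub-case recurses on a child that may itself have high original degree, requires no special handling: that child is simply the next vertex on the path and its branching is charged to its own degree class in the usual way.
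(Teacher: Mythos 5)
Your proposal is correct and follows essentially the same approach as the paper: apply the ``follow-the-winning-way'' branch-and-bound at current-degree-$\leq 2$ nodes so they contribute branching factor one, charge branching factor at most $\Delta$ (resp.\ $n$) to each of the $\leq c\log_2 n$ medium-degree (resp.\ $\leq D$ high-degree) vertices on a root-to-leaf path, and bound the recursion-tree size by $n^D\cdot \Delta^{c\log_2 n}=n^{D+c\log_2\Delta}$ times the depth, with polynomial matching work per node. Your version is somewhat more explicit about the degree-decrement observation at non-root vertices and about the accounting for the root, but the key decomposition and the resulting bound $O(n^{D+c\log\Delta+3})$ coincide with the paper's.
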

\begin{proof}
We will apply the ``following the winning way'' technique at all degree-two game-tree nodes in the standard DFS-based recursive evaluation methods.
(1) Each time when we evaluate a node with degree more than $\Delta$, the
branching factor is at most $n$. (2) Each time when we evaluate a node with degree in range $[4,\Delta+1]$, the branching factor is at most $\Delta$ (or $\Delta + 1$ for such  starting node). (3) Otherwise, at degree three node, the branching factor is one.
If we use an $O(n^3)$ time algorithm for maximum matching, then we can bound the total time by:
$O\left(n^D\cdot \Delta^{c\log n}\cdot n^3\right) = O\left(n^{D+ c\log \Delta+3}\right).$
\end{proof}

\section{Alternate Gadget Proofs: Winnability when added to $\ast$}
\label{sec:alternateProofs}

In this section, we present our alternative proofs from our reduction, using figure \ref{fig:directedEdgeGadget}.  These proofs are characterized by preserving winnability when we add the reduced game to $\ast$, a single move that can be used once by either of the players.

We will refer to two players as the Foe, who makes losing moves, and the Hero, who will respond with a winning strategy.  We will exhaustively describe the Hero's strategy to force a win.

\begin{lemma}[Wrong Way]
    Moving from $y$ to any vertex $d$ results in a value of $\ast 2$ or $\ast 3$.  In other words, $(G'_y, d) = \ast 2$ or $\ast 3$.
    \label{lem:wrongWayAlt}
\end{lemma}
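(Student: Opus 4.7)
The plan is to use the equivalent characterization of this section: since $d$ has at most three remaining neighbors in $G'_y$ (namely $d_0$, $c$, and $f$), the nimber of $(G'_y, d)$ is bounded above by $\ast 3$, so to establish that it equals $\ast 2$ or $\ast 3$ it suffices to rule out the alternatives $0$ and $\ast$. The plan is thus to exhibit winning Hero strategies in the two games $(G'_y, d)$ and $(G'_y, d) + \ast$, which respectively certify that the nimber is non-zero and that it is distinct from $\ast$.

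For $(G'_y, d)$ alone, Hero's winning move is immediate: play $d \to d_0$. Since $d_0$'s only incident edge in the gadget runs to $d$, Foe lands on a terminal vertex with no reply, confirming that the position is in $\outN$.

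For $(G'_y, d) + \ast$, the plan is to have Hero open with $d \to f$ and then case-analyze Foe's reply. If Foe consumes the $\ast$-component, Hero is forced into $f \to b$, after which Foe must choose $b \to a$ or $b \to c$, and Hero walks to the corresponding private pendant ($a_0$ or $c_0$) to hand Foe a terminal. If instead Foe plays $f \to b$ in the geography, Hero consumes the $\ast$ to restore the same configuration with Foe to move at $b$, and by the identical ``walk to the leaf'' reply wins regardless of Foe's choice at $b$.

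The main obstacle is verifying that these leaf-walking moves always succeed without being perturbed by the external graph attached at $x$ (or at other gadgets reached through $y$). The gadget is designed precisely so that $a_0$ and $c_0$ are private pendants whose only incident edges run to $a$ and $c$ respectively: once Hero has the token at $a$ or $c$, she terminates the geography in a single step with no interaction with the external graph. This insulation is exactly what permits a purely local strategy argument, mirroring the role played by the ``both neighbors of $b$ have terminal neighbors'' observation in the nimber-based proof of Lemma \ref{lem:wrongWay}.
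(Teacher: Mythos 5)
Your proposal is correct and follows essentially the same approach as the paper's: ruling out $0$ via the terminal move to $d_0$, ruling out $\ast$ via a Hero strategy for $(G'_y, d)+\ast$ that opens $d\to f$, forces the play through $b$, spends the $\ast$ at the right moment, and finishes by walking to $a_0$ or $c_0$. The one small presentational difference is that you make the degree-based upper bound of $\ast 3$ explicit as a separate step, whereas the paper states the same fact by counting the remaining options; the underlying content is identical.
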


(This lemma was originally stated in lemma \ref{lem:wrongWay}.)

\begin{proof}
Moving from $y$ to $d$ on our main component needs to result in a losing position when added to $\ast$.  By showing that it's a losing move, we will show that there is a winning response, meaning the sum is non-zero.  That will mean that the main component cannot be equal to $\ast$ (because $\ast + \ast = 0$).  That component also cannot be equal to 0 because there is a terminal move to $d_0$.  Since there are two other possible options ($f$ and $c$), it can either be $\ast 2$ or $\ast 3$.

On this sum, if a player, the Foe, chooses to move $y$ to $d$, we need to show that this is a losing play.  The winning response, for the other player, the Hero, is to move to $f$.  If the Foe moves to $b$, then the Hero plays on the $\ast$.  Now, no matter whether the Foe moves to $a$ or $c$, the Hero can respond by moving to $a_0$ or $c_0$ to win.

If the Foe takes the $\ast$ at $f$ instead of moving to $b$, then the Hero responds by moving to $b$, putting them in the same situation as above.
\end{proof}

\begin{lemma}[Right Way]
    Moving from $d$ to $y$ results in a value of $\ast$ exactly when moving from $x$ to $a$ results in $\ast$.
    \label{lem:rightWayAlt}
\end{lemma}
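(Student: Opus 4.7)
The approach is to translate the recursive nimber computation of Lemma~\ref{lem:rightWay} into an explicit Hero-versus-Foe strategy, thereby verifying the equivalent statement that $(G'_{\{x\}}, a) + \ast$ and $(G'_{\{d\}}, y) + \ast$ lie in the same outcome class. This suffices to prove the lemma, since under the reduction of this section a position equals $\ast$ exactly when adding $\ast$ yields a $\outP$-position.

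First I would record two preparatory observations. The first is that $(G'_{\{d\}}, y)$ and $(G'_{\{x, a, b, c, d\}}, y)$ have identical game trees: once $d$ is removed, the residual gadget vertices $\{x, a, b, c, a_0, c_0, f\}$ form a component disconnected from $y$, so they are unreachable from the token. Hence any winning strategy in $(G'_{\{d\}}, y) + \ast$ transports verbatim to one in $(G'_{\{x, a, b, c, d\}}, y) + \ast$. The second observation is the direct counterpart of the key fact used in the original nimber proof: once $b$ has been removed, $f$'s only remaining neighbor is $d$, so any Foe play of $b \to f$ (or $d \to f$) forces an opponent reply of $f \to d$ (or leaves $f$ terminal), making such detours effectively two-move ``passes'' in the sum game.

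Then I would handle the two outcome cases. If $(G'_{\{d\}}, y) + \ast \in \outP$, Hero plays second in $(G'_{\{x\}}, a) + \ast$ and must refute each of Foe's three possible openings. The move $a \to a_0$ is answered by Hero playing $\ast$; Foe's direct use of $\ast$ is answered by $a \to a_0$; and the substantive opening $a \to b$ is answered by $b \to c$, after which Foe's terminating options (playing the $\ast$ or moving to $c_0$) are absorbed by a mirroring $\ast$-play or leaf move, while the continuation $c \to d$ is answered by $d \to y$, delegating to the assumed winning reply in $(G'_{\{d\}}, y) + \ast$. Branches where Foe plays $b \to f$ or $d \to f$ are absorbed by the second preparatory observation: Hero either spends the $\ast$ or forces the opponent through $f \to d$ and continues as if the detour had not happened. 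For the case $(G'_{\{d\}}, y) + \ast \in \outN$, the roles switch: Hero plays first and opens with $a \to b$, after which an analogous cascade of paired responses runs dually, culminating in $d \to y$ and either handing off to the assumed winning strategy at $y + \ast$ or spending the $\ast$ to refute a Foe termination.

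The principal obstacle is the bookkeeping of when the $\ast$ is spent. Along the main line $a \to b \to c \to d \to y$, Hero must preserve the $\ast$ in order to delegate cleanly to the winning strategy at $y$; along deviation branches, Hero must sometimes spend the $\ast$ early to refute a Foe termination such as $a \to a_0$ or $c \to c_0$, or to neutralize a $b \to f$ detour. I expect verifying that each of the half-dozen Foe branches admits exactly one parity-correct Hero response---mirroring the chain of nimber identities computed in Lemma~\ref{lem:rightWay}---to be the main technical labor. Each individual branch is short to check, but exhaustively enumerating them is essential.
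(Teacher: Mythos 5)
Your plan takes essentially the same route as the paper's alternate proof in the appendix: reformulate the lemma via $P = \ast$ if and only if $P + \ast \in \outP$, follow the canonical line $a \to b \to c \to d \to y$, and show every deviation from it loses. However, your first preparatory observation is wrong as stated. The vertex $x$ is not a gadget-internal vertex; it is an original vertex of $G$ that carries its own pendant $x_0$ and may be incident to several other gadgets, so removing only $d$ does \emph{not} disconnect $\{x, a, b, c, a_0, c_0, f\}$ from $y$ in general. What you actually need is that $x$ itself is already gone, which it is in both positions the lemma compares (``moving from $x$ to $a$'' and ``moving from $d$ to $y$'' each presuppose $x$, and in the latter case also $a, b, c, d$, have been deleted). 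Once the positions are written out correctly, the entire observation becomes unnecessary, and the flawed version of it should be discarded rather than relied on.

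You also have the deviations attached to the wrong case. When $(\cdot, y) + \ast \in \outP$ and Hero plays second, the Foe only ever moves from $a$ and $c$ along the main line, so $b \to f$ and $d \to f$ are never available to the Foe there; those detours arise only in the other case, where Hero opens with $a \to b$ and the Foe moves from $b$ and $d$. Finally, the $b \to f$ branch is not a ``two-move pass'' after which Hero continues as if nothing happened: the paper's reply is to spend the $\ast$, force the Foe through $f \to d$, and then \emph{kill} with $d \to d_0$. Delegating from that $d$ back to the $y$-hypothesis would be illegitimate, since the reached position has $f$ removed and $c$ intact, which is not the position the hypothesis addresses. These are exactly the spots the paper's proof handles carefully and that your sketch, as written, would get wrong.
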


(This lemma was originally stated in lemma \ref{lem:rightWay}.)

\begin{proof}
    The statement is equivalent to saying that after adding to $\ast$, moving $d$ to $y$ results in an overall value of 0 exactly when the same is true moving $x$ to $a$.
    We prove this winnability by giving an odd-length sequence of moves from $x$ to $y$ and show that deviating from this plan is always a losing move.  We refer to the deviating player as the Foe, and their opponent as the Hero.

    That sequence is: $x \rightarrow a \rightarrow b \rightarrow c \rightarrow d \rightarrow y$.  Since there are odd moves in this, the player who moves to $a$ is the same player that moves to $y$, meaning this mimics the behavior of a directed \rsGeog\ edge.  We complete the proof by exhaustively showing that all deviations are losing moves.

    Note that any move to a terminal vertex ($a_0, c_0, d_0$) is a losing move because the Hero can play on the added $\ast$ to make the sum zero.  The other deviations are moving to $f$ from either $b$ or $d$.  Moving from $d$ is also a terminal position (because $b$ has already been taken) so that is a losing move.  Finally, if the Foe moves $b$ to $f$, the Hero responds by playing on the $\ast$.  Then the Foe must move to $d$ and the Hero can move to $d_0$ to win.
\end{proof}

\begin{theorem}[Complexity Separation of Winnability and Grundy Values]
\label{thm:nimberHardnessAlt}
The Grundy value of polynomial-time solvable \ruleset{Undirected Geography}
   is \cclass{PSPACE}-hard to compute
  in planar bipartite graphs of maximum degree four.
\end{theorem}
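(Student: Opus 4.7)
The plan is to run the same reduction $r$ introduced for Theorem \ref{thm:nimberHardness}, but started from the Lichtenstein--Sipser strengthening (which makes \rsGeog\ already \cclass{PSPACE}-hard on planar bipartite graphs of maximum degree three), and then verify that planarity, bipartiteness, and the degree bound four all survive the transformation. Correctness of the Grundy-value reduction is inherited verbatim from Lemmas \ref{lem:wrongWay} and \ref{lem:rightWay} (equivalently, their winnability reformulations \ref{lem:wrongWayAlt} and \ref{lem:rightWayAlt}).

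The first step is to fix a planar drawing of a \rsGeog\ instance $(G,s)$ with $G$ planar, bipartite, and of maximum degree three, and apply $r$: for each directed edge $(u,v)$, replace it with a planar drawing of the gadget of Figure \ref{fig:directedEdgeGadget} placed in a thin tubular neighborhood of the edge, identifying $x$ with $u$ and $y$ with $v$; for each original vertex $v$, attach the pendant $v_0$ in an incident face. Planarity is preserved by construction. For bipartiteness, the two-coloring $A=\{x,a_0,b,c_0,d\}$, $B=\{a,c,f,d_0,y\}$ verifies that the gadget has no odd cycle and places $x$ and $y$ in opposite color classes; this is consistent with the bipartition of $G$, and each pendant $v_0$ simply takes the opposite color of $v$.

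Next I would carry out the degree accounting. Inside the gadget, only vertex $d$ attains degree four (its neighbors are $c$, $d_0$, $f$, $y$); all other gadget vertices have degree at most three. Each original vertex $v$, of degree at most three in $G$, picks up exactly one new neighbor (the pendant $v_0$), so its degree is at most four in $G'$. Therefore $\Delta(G') \le 4$. With the three graph conditions secured, the induction on depth in the game tree from the proof of Theorem \ref{thm:nimberHardness} goes through unchanged: Lemma \ref{lem:wrongWay} shows every ``backward'' gadget option has value $\ast 2$ or $\ast 3$, and Lemma \ref{lem:rightWay} shows that forward traversals simulate the directed-edge moves of $(G,s)$. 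This yields $(G',s)=\ast$ when $(G,s)\in\outP$ and $(G',s)\in\outN\setminus\{\ast\}$ when $(G,s)\in\outN$, so computing the Grundy value of $(G',s)$ decides the original \rsGeog\ instance. Membership in \cclass{PSPACE} follows from the standard polynomial-space DFS over the game tree, whose height is linear in the vertex count.

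The main obstacle is purely a matter of tightness rather than mathematical depth: vertex $d$ in every gadget and every original vertex of degree three in $G$ both hit the degree-four bound exactly, so there is no slack in the construction. This is not a flaw but a feature, consistent with the dichotomy Theorem \ref{thm:dichotomy}---pushing the bound down to three is provably impossible. A minor subtlety to watch in the write-up is ensuring that pendants $v_0$ can be inserted into the chosen planar embedding without disturbing gadget placement, which is routine given that each $v$ is still of bounded degree and hence has an incident face available.
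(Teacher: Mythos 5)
Your proposal matches the paper's argument essentially step for step: apply the directed-edge gadget reduction to the Lichtenstein--Sipser planar, bipartite, degree-three hard instances, then check that the gadget preserves planarity, that its even-length paths keep the graph bipartite (your explicit two-coloring $A=\{x,a_0,b,c_0,d\}$, $B=\{a,c,f,d_0,y\}$ is a nice concrete witness), and that the degree climbs from three to four only because of the added pendants $v_0$ (and vertex $d$ inside the gadget, which also hits exactly four). The correctness of the Grundy-value distinction is correctly inherited from Lemmas~\ref{lem:wrongWay} and~\ref{lem:rightWay} and the induction in Theorem~\ref{thm:nimberHardness}; this is exactly the structure the paper uses in the corresponding corollary and in the alternate proof of Appendix~\ref{sec:alternateProofs}, so the two proofs coincide in substance.
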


(This theorem is originally stated as theorem \ref{thm:nimberHardness}.)

\begin{proof}
    This version of the proof uses the add-to-$\ast$ characterization.  For this, we remark that the $\outP$ \rsGeog\ positions are exactly those where the sum is equal to zero.  Since it's always a losing move to go backwards to a $d$ vertex, by lemma \ref{lem:wrongWay}, and moving the right way to an $a$ vertex is a winning move exactly when moving to that $y$ vertex is a winning move, by \ref{lem:rightWay}.  Thus players can only win by following the direction of the gadgets and not by playing on the $\ast$ component.

    At the end, when there are no directed edge gadgets to from current vertex $v$ that lead to a vertex $y$, the value of the sum is 0 because the next player can either move to an $a$ without a $y$, to a $d$, or to $v_0$.  The value of just the \rsUndirGeog\ component is 0, so the opponent can play on the $\ast$-component and end the game.

    The winning moves in $(G, s)$ correspond exactly to winning moves in $(G', s) + \ast$, so the reduction works.
\end{proof}

\section{The Bipartite Graphs of Uno}\label{Appendix:Uno}

In 2014, \cite{demaine2014uno}, among proving other things, included a reduction from a game of 2 player Uno called \ruleset{Uncooperative Uno} to \ruleset{Undirected Geography}. Notably for this reduction, we can prove that this reduced version is still sufficient to demonstrate hardness for the Grundy value.

First, let's formally define this ruleset. In this game, there are two hands, $H_1$ and $H_2$, which each consist of a set of cards. This is a perfect information game, so both players may see each other's hands. Each card has two attributes, a color $c$ and a rank $r$. Each card then thus be represented $(c, r)$. A card can only be played in the center (shared) pile if the previous card matches either the $c$ of the current card or the $r$ of the current card.

Demaine  {\em et al} \cite{demaine2014uno} then gave a simple reduction from this to \ruleset{Undirected Geography}, which, since it isomorphically preserves options, also preserves the Grundy value. The reduction is to simply represent the game as a bipartite graph, where each card is represented by a vertex, and the cards in $H_1$ are in one part of the partition while the cards in $H_2$ are in the other. There is an edge between the vertices if and only if the card could be played in response to the other. Thus, the game is simply an \ruleset{Undirected Geography} game played on the bipartite graph.

Of course, while this is a nimber-preserving reduction, it isn't clear that the reduction works in the other direction, which is what we need.  This is because, in addition to being bipartite, the graph for hard instances of \ruleset{Undirected Geography} in Grundy-value computation would additionally need to be constrained to only have edges based on the matching color/rank principle of \ruleset{Uncooperative Uno}.

More formally, for a target bipartite graph to be reduced to \ruleset{Uncooperative Uno}, we need the graph to have the following properties:

\begin{enumerate}
\item Each vertex $v$ can be given a pair of integers $(a, b)$. We will refer to the $a$ of $v$ as $a(v)$ and similarly $b(v)$ just means the $b$ of v.
\item A vertex $v_1$ in partition $P_1$ is adjacent to a vertex $v_2$ in $P_2$ if and only if $a(v_1) = a(v_2)$ or $b(v_1) = b(v_2)$
\end{enumerate}

Not all bipartite graphs have these properties, including the hard instance, if the directed edges were made undirected, from Lichhtenstein's and Sipser's reduction. Fortunately, our edge gadget from Theorem \ref{thm:nimberHardness} (with some very slight modifications that preserves the proof and nimber) transforms any directed bipartite graph into one has those properties. Therefore, to show hardness for the nimber of \ruleset{Uncooperative Uno}, we may start with Lichhtenstein's and Sipser's bipartite graph, apply our reduction, with a slight modification, and then reduce from this instance of \ruleset{Undirected Geography} to \ruleset{Uncooperative Uno}

We will call this modified reduction, which we will discuss and motivate in the proof, as $H = (V_H, E_H)$ while the original version as it appears in the theorem will be $G = (V_G, E_G)$.

\begin{lemma}
In $H$, one can assign a pair of integers $(a, b)$ to each vertex $v$ such that for all $v$, for each vertex $u$ that is adjacent to $v$, either $a(v) = a(u)$ or $b(v) = b(u)$, and there exists no vertex $w$ adjacent to $v$ that is in the opposite bipartite partition to $v$ that has $a(v) = a(w)$ or $b(v) = b(w)$
\end{lemma}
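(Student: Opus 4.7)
The plan is to construct the labeling explicitly by viewing the two coordinates $a$ and $b$ as inducing two partitions of $V(H)$: a \emph{color partition} (by $a$-value) and a \emph{rank partition} (by $b$-value). The required property is then equivalent to asking that each color (respectively, rank) class, when restricted to cross-bipartition pairs, forms a biclique of $H$, and that the union of all such bicliques equals $E(H)$ exactly. Thus my task is to exhibit two partitions of $V(H)$ into bicliques whose joint bipartite-edge coverage is precisely $E(H)$.

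First I would assign each original vertex $v\in V(G)$ a globally unique color $c_v$ and rank $r_v$, and each pendant $v_0$ the pair $(c_v,r_{v_0})$ with a fresh rank, so that $v_0$'s only cross-partition label match is $v$ itself. Within each directed-edge gadget for $(x,y)$, I would assign a gadget-local color $\gamma_{xy}$ to the four core vertices $\{b,c,d,f\}$, realizing the four cross-partition core edges $b$--$c$, $b$--$f$, $d$--$c$, $d$--$f$ in a single color biclique. The entry edge $x$--$a$ would be realized by a color class $\{x,a\}$ (so $a$ inherits $c_x$); the edge $a$--$b$ by a gadget-local rank class holding $a$ and $b$, with $a_0$ attached as a singleton extension sharing that rank; and the remaining pendants $c_0$, $d_0$ by gadget-local pair-bicliques sharing fresh ranks only with $c$ and $d$ respectively.

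The hard part will be the exit edge $d$--$y$, and symmetrically the entry $x$--$a$, when an original vertex $v$ participates in multiple gadgets. If $d$ inherits $y$'s global color $c_y$, then since $y$ may be destination of several gadgets, the single color $c_y$ ends up on several interior $P_1$-side vertices and creates spurious bipartite edges with $y_0$ and with other gadgets' $d_0$-pendants; the symmetric rank-match alternative would cause analogous collisions via $y$'s rank. I would resolve this with the slight gadget modification alluded to in the preceding paragraph of the appendix: insert short odd-length chains of fresh vertices at each $x$--$a$ and $d$--$y$ interface, acting as labeling buffers so that only the chain's outermost vertex at $y$ (respectively $x$) carries $c_y$ (respectively $c_x$), while the chain interior uses fresh gadget-local labels disjoint from every other gadget. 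These chains are standard bypass structures whose nimber contribution is $0$, and the arguments of Lemmas~\ref{lem:wrongWay} and~\ref{lem:rightWay} extend unchanged through the added alternation.

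Finally I would verify both directions of the labeling property by a short case analysis: within a single modified gadget the labeling exhausts the edge set in exactly the designated color/rank classes listed above, and across gadgets the freshness of all gadget-local attributes, together with the buffer-chain confinement of the global labels $c_v$, $r_v$, $r_{v_0}$, rules out any non-edge cross-partition pair sharing a coordinate. The main technical risk throughout is that the interface-chain insertion must be small enough that the nimber argument of Theorem~\ref{thm:nimberHardness} carries over verbatim, which I expect to achieve with a three-vertex chain whose internal moves are forced and mirror the $a$--$b$--$c$--$d$ alternation already handled in the original gadget analysis.
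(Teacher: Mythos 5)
Your approach is a genuine alternative to the paper's. The paper's proof runs a BFS labeling that alternates, round by round, between matching the $a$-coordinate and the $b$-coordinate, and resolves the $\{b,c,d,f\}$ diamond (the only $K_{2,2}$ in the gadget, where two same-round vertices share some but not all neighbors) by defining $H$ to subdivide the edges $(b,f)$ and $(f,d)$. You instead treat the problem as a biclique-cover question, assign the entire diamond a single gadget-local color so the $K_{2,2}$ is realized directly as one color biclique, and propose modifying the gadget only at the $x$--$a$ and $d$--$y$ interfaces. That is a cleaner conceptual framing, and the global/local coordinate bookkeeping you describe is sound. But note that your labeling scheme is incompatible with the paper's $H$: once the diamond is subdivided, $\{b,c,d,f\}$ is no longer a biclique and the single-color assignment creates spurious edges, so you are committed to a different modified graph than the paper uses.

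The gap is in the interface chains. You assert the chains are ``standard bypass structures whose nimber contribution is 0'' and that Lemmas~\ref{lem:wrongWay} and~\ref{lem:rightWay} extend unchanged, but a bare chain of fresh vertices inserted on $d$--$y$ destroys Lemma~\ref{lem:wrongWay}. After traveling the wrong way from $y$ into the chain, the chain vertices strictly alternate between Grundy value $0$ and $\ast$, so the option from $y$ lands on a value of $\ast$, not $\ast 2$ or $\ast 3$. The proof of Theorem~\ref{thm:nimberHardness} relies precisely on the wrong-way entry being neither $0$ nor $\ast$ in order to conclude that the terminal position of the simulated \rsGeog\ path has value $\ast$; with your chain, the mex computation instead produces $\ast 2$, and the induction collapses. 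The paper's within-diamond subdivision avoids this because the new vertices sit on a $b$--$f$--$d$ path whose far end ($b$) is sometimes already removed, so the chain's endpoint value adapts with the context; an interface chain has no such context-sensitivity. You could repair this by putting a pendant leaf on each chain vertex, which makes the chain values alternate between $\ast$ and $\ast 2$, but then both gadget lemmas need re-verification through the new alternation (and the degree bound needs rechecking), none of which your sketch carries out. You correctly flagged this as the technical risk, but it is not a risk you discharged; as written, the construction does not preserve the reduction.
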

\begin{proof}
We will work on this proof as if it is for $G$, and then switch to $H$, which is just a small modification, at the end of this proof.

One strategy is, when labeling vertices in our reduction, to simply alternate between having $a(v) = a(u)$ and $b(v) = b(u)$. For the first vertex, $v$, we assign $a$ to an integer (and $b$ to be some arbitrary number that will never appear for a vertex in the opposite partition). Then we can assign all of the neighbors $u$ have the property $a(v) = a(u)$. Then all of their neighbors, $w$, each have a unique $b$ such that $b(u) = b(w)$. Then, the neighbors of each vertex $w$. $x$ each have unique a $a$ such that $a(w) = a(x)$, and so on, in a breadth first search fashion.

Note that this works whenever all of the adjacent vertices have a completely different set of adjacent vertices. Similarly, there are no issues if the adjacent vertices share all of the same neighbors, since we can just have each of the vertices in that ``round'' have the same $a$ or $b$.

There are only two possible ways for this to break. The first occurs when $v$ and $u$ are both attempting to connect to $w$ or a vertex already covered by the other, but the round on the path (that is to say, whether $a$ or $b$ was used) doesn't match up. This can't happen, because our graph is bipartite and only one of $a$ or $b$ will be chosen for any path to that vertex.

The other possibility is that two vertices have a non empty intersection of vertices that they are both adjacent to, but they don't share all the vertices they are adjacent to. This happens exactly for two (repeated) segments in the graph from our reduction. The first is at the diamonds $b, c, d, f$, where a path moving from $d$ or $b$ to $f$ and $c$ will result in this situation. The second is for each $x$ and $y$ vertices, when coming from the adjacent $d$ and $a$ vertices, since they each have their own $a_0$ or $d_0$, while sharing the $x$ or $y$.

The second can be resolved by simply having the $d_0$ or $a_0$ use $a$ to connect if $x$ or $y$ use $b$, or similarly using $b$ if they use $a$. This works since the previous vertex in the sequence will be in the same partition as $d_0$ or $a_0$ and thus not break the rule.

The first case is why we must use $H$ instead of $G$. We will add a vertex between $b$ and $f$ and one between $f$ and $d$. This changes nothing about the reduction (each statement from the proof of Theorem \ref{thm:nimberHardness} and this one still holds). But now, we can apply the same solution as we did previously, and simply choose the same component for $c_0$ as was used to connect to $c$ previously.
\end{proof}

The reduction to \ruleset{Uncooperative Uno} is now easy to complete, as the pair of labels can now be used for the color and rank of cards, and the partitions as the players' hands. These games are now exactly isomorphic

\end{document}
\textbf{}